\newcommand\vldbdoi{XX.XX/XXX.XX}
\newcommand\vldbpages{XXX-XXX}
\newcommand\vldbvolume{19}
\newcommand\vldbissue{1}
\newcommand\vldbyear{2026}
\newcommand\vldbauthors{\authors}
\newcommand\vldbtitle{\shorttitle} 
\newcommand\vldbavailabilityurl{https://github.com/L1ANLab/CD-SBN}
\newcommand\vldbpagestyle{plain} 
\patchcmd\algocf@Vline{\vrule}{\vrule \kern-0.4pt}{}{}
\patchcmd\algocf@Vsline{\vrule}{\vrule \kern-0.4pt}{}{}
\newcommand{\nop}[1]{}
\newtheorem{definition}{Definition}
\newtheorem{example}{Example}
\newtheorem{lemma}{Lemma}
\begin{document}
\title{Efficient Community Detection Over Streaming Bipartite Networks}

\author{Nan Zhang}
\affiliation{%
  \institution{East China Normal University}
  \city{Shanghai}
  \country{China}
}
\email{52275902026@stu.ecnu.edu.cn}

\author{Yutong Ye}
\affiliation{%
  \institution{East China Normal University}
  \city{Shanghai}
  \country{China}
}
\email{52205902007@stu.ecnu.edu.cn}

\author{Xiang Lian}
\affiliation{%
  \institution{Kent State University}
  \city{Kent}
  \state{Ohio}
  \country{USA}
}
\email{xlian@kent.edu}

\author{Qi Wen}
\affiliation{%
  \institution{East China Normal University}
  \city{Shanghai}
  \country{China}
}
\email{51265902057@stu.ecnu.edu.cn}

\author{Mingsong Chen}
\affiliation{%
  \institution{East China Normal University}
  \city{Shanghai}
  \country{China}
}
\email{mschen@sei.ecnu.edu.cn}

\begin{abstract}
The streaming bipartite graph is widely used to model the dynamic relationship between two types of entities in various real-world applications, including movie recommendations, location-based services, and online shopping. Since it contains abundant information, discovering the dense subgraph with high structural cohesiveness (i.e., \textit{community detection}) in the bipartite streaming graph is becoming a valuable problem. Inspired by this, in this paper, we study the structure of the community on the butterfly motif in the bipartite graph. We propose a novel problem, named \textit{\underline{C}ommunity \underline{D}etection over \underline{S}treaming \underline{B}ipartite \underline{N}etwork} (CD-SBN), which aims to retrieve qualified communities with user-specific query keywords and high structural cohesiveness at \textit{snapshot} and \textit{continuous} scenarios. In particular, we formulate the user relationship score in the weighted bipartite network via the butterfly pattern and define a novel $(k,r,\sigma)$-bitruss as the community structure. To efficiently tackle the CD-SBN problem, we design effective pruning strategies to rule out false alarms of $(k,r,\sigma)$-bitruss and propose a hierarchical synopsis to facilitate the CD-SBN processing. We develop efficient algorithms to answer snapshot and continuous CD-SBN queries by traversing the synopsis and applying pruning strategies. With extensive experiments, we demonstrate the performance of our CD-SBN approach on real/synthetic streaming bipartite networks. 
\end{abstract}

\maketitle

\pagestyle{\vldbpagestyle}
\begingroup\small\noindent\raggedright\textbf{PVLDB Reference Format:}\\
\vldbauthors. \vldbtitle. PVLDB, \vldbvolume(\vldbissue): \vldbpages, \vldbyear.\\
\href{https://doi.org/\vldbdoi}{doi:\vldbdoi}
\endgroup
\begingroup
\renewcommand\thefootnote{}\footnote{\noindent
This work is licensed under the Creative Commons BY-NC-ND 4.0 International License. Visit \url{https://creativecommons.org/licenses/by-nc-nd/4.0/} to view a copy of this license. For any use beyond those covered by this license, obtain permission by emailing \href{mailto:info@vldb.org}{info@vldb.org}. Copyright is held by the owner/author(s). Publication rights licensed to the VLDB Endowment. \\
\raggedright Proceedings of the VLDB Endowment, Vol. \vldbvolume, No. \vldbissue\ %
ISSN 2150-8097. \\
\href{https://doi.org/\vldbdoi}{doi:\vldbdoi} \\
}\addtocounter{footnote}{-1}\endgroup

\ifdefempty{\vldbavailabilityurl}{}{
\begingroup\small\noindent\raggedright\textbf{PVLDB Artifact Availability:}\\
The source code, data, and/or other artifacts have been made available at \url{\vldbavailabilityurl}.
\endgroup
}

\begin{figure}[t]
   \vspace{-2ex}
   \subfigcapskip=-1ex
    \centering
    \subfigure[$G_{t-1}$ at time $(t-1)$]{
        \includegraphics[height=2.5cm]{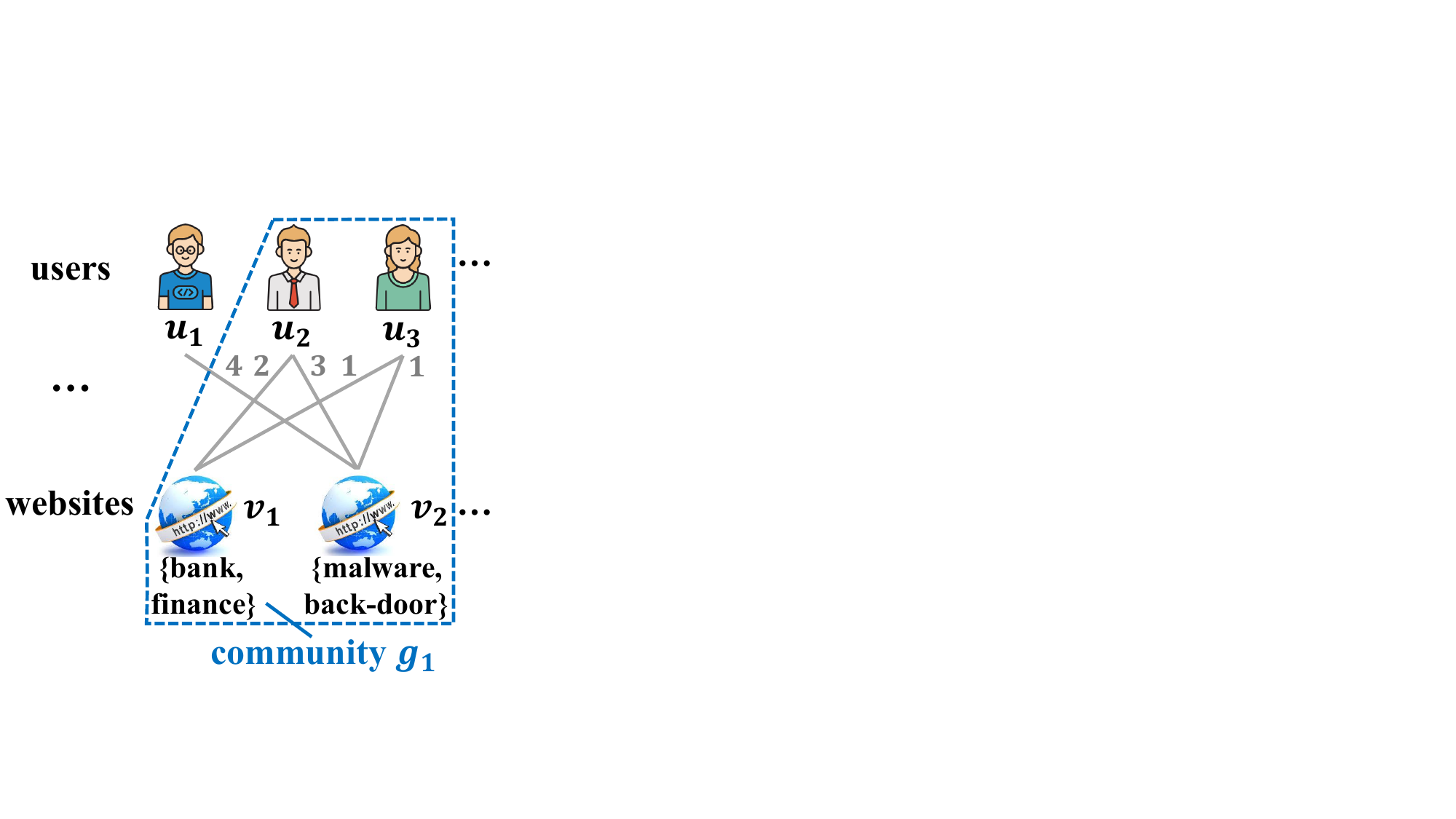}
        \label{subfig:motivation_1}
    }
    \subfigure[$G_t$ at time $t$]{
        \includegraphics[height=2.5cm]{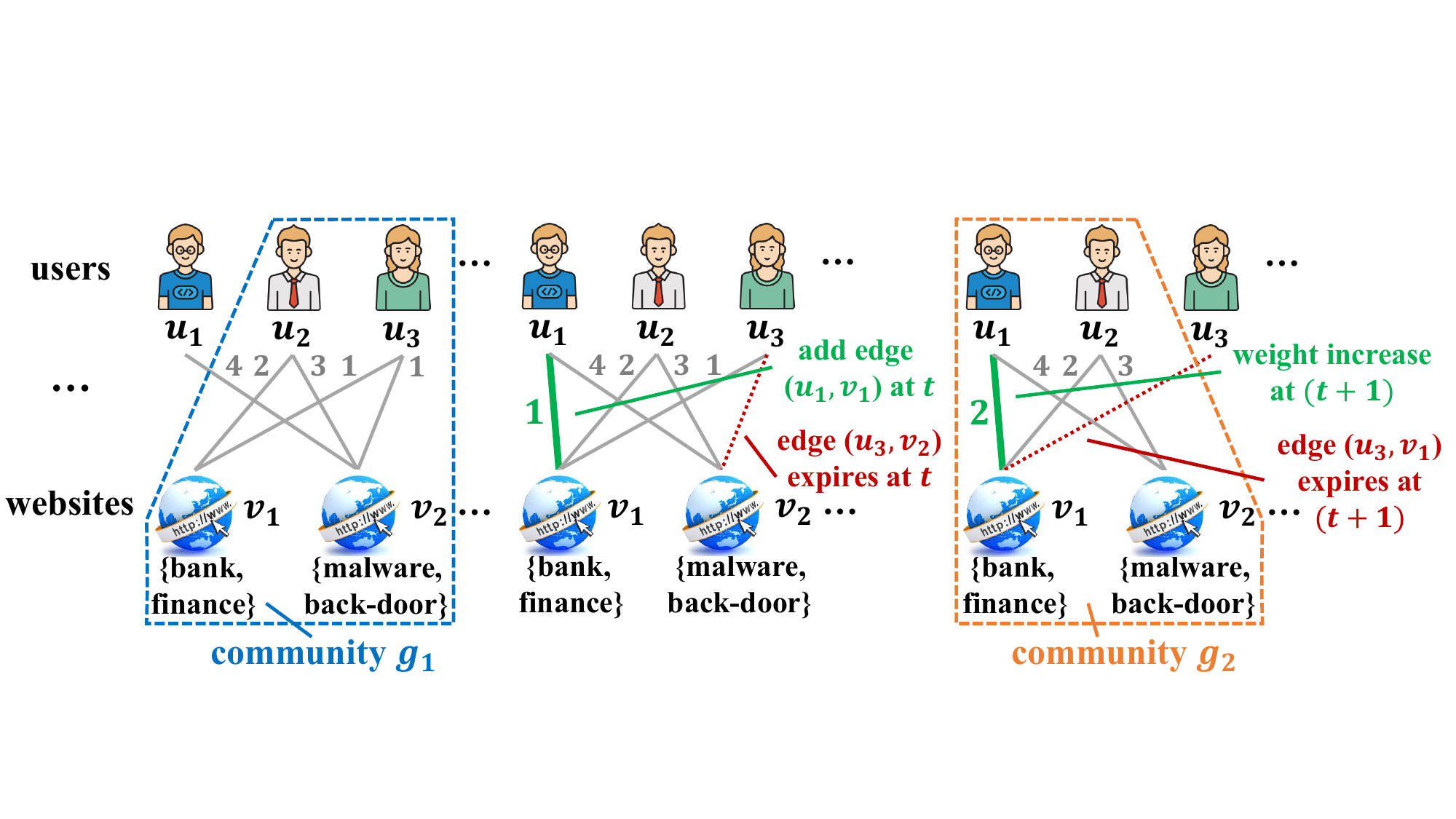}
        \label{subfig:motivation_2}
    }
        \subfigure[$G_{t+1}$ at time $(t+1)$]{
        \includegraphics[height=2.5cm]{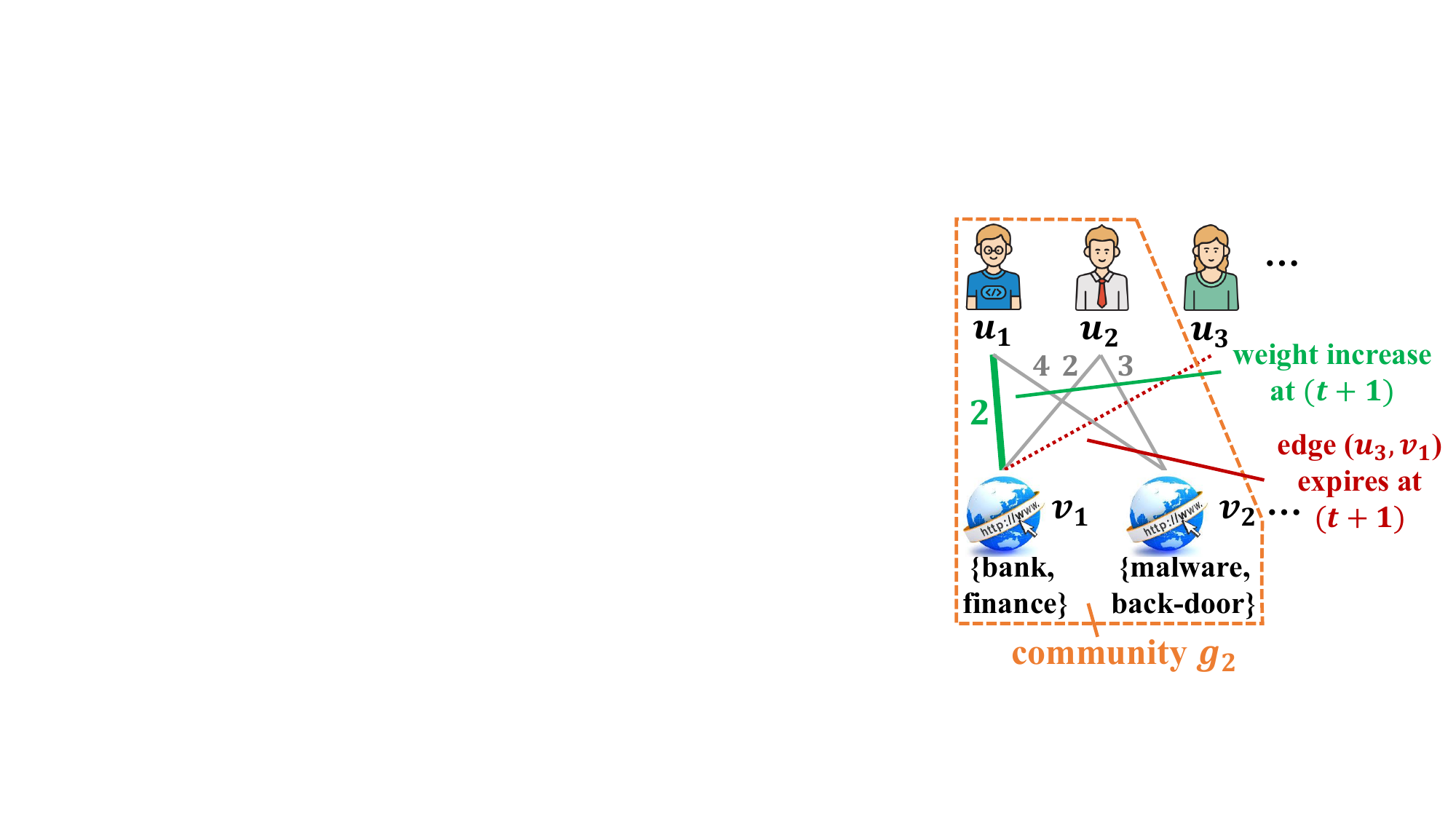}
        \label{subfig:motivation_3}
    }
    \vspace{-4ex}
    \caption{\small A community detection example in streaming click bipartite networks $G_t$.}
    \vspace{-4ex}
    \Description{An example of community detection in streaming click bipartite networks $G_t$.}
    \label{fig:motivation}
\end{figure}

\section{Introduction}
Nowadays, the management of bipartite graphs that involve two distinct types of nodes (e.g., customers and products, visitors and check-in locations, users and clicked websites) has been extensively studied in many real applications such as online recommendation from user-product transaction graphs \cite{fan2022ModelingUserBehavior,weng2022distributed,kou2023ModelingSequentialCollaborative}, location-based services with visitor-location check-in graphs \cite{hang2018ExploringStudentCheckIn,zhou2021ButterflyCountingUncertain,wang2017SemanticAnnotationPlaces,kim2022abc}, and behavior analysis in user-webpage click graphs \cite{sun2009RankingbasedClusteringHeterogeneous,neria2017risk}.



The \textit{community detection} (CD) over such bipartite graphs has recently become increasingly important for bipartite graph analysis and mining \cite{barber2007modularity,bouguessa2020BiNeTClusBipartiteNetwork,yen2020community,zhou2019analysis}, which uncovers hidden and structurally cohesive bipartite subgraphs (or groups). The CD problem over bipartite graphs has numerous useful real-world applications, including online product recommendations and marketing, user behavior analysis, and identifying malicious user groups.

Although many prior works \cite{wang2023DiscoveringSignificantCommunities,abidi2023SearchingPersonalizedkWing,wang2021EfficientEffectiveCommunity,zhang2024SizeboundedCommunitySearch,wang2022EfficientPersonalizedMaximum,zhang2021ParetooptimalCommunitySearch} studied the CD problem in \textit{static} bipartite graphs, real-world bipartite graphs are usually dynamically changing over time (e.g., updates of purchase transactions, check-ins, or website browsing). Therefore, in this paper, we will consider the \textit{\underline{C}ommunity \underline{D}etection over \underline{S}treaming \underline{B}ipartite \underline{N}etwork} (CD-SBN), upon graph updates (e.g., edge insertions/deletions or edge weight changes). Below, we provide a motivating example of our CD-SBN problem to detect malicious user groups (communities) on a user-website click bipartite graph.

\begin{example} 
{\bf{(Anomaly Detection for the Cybersecurity)}} In the application of online anomaly detection, the network security officer may want to specify query keywords for websites (e.g., ``bank'' and ``malware'') and identify a group (community) of malicious users accessing sensitive websites. Users in communities frequently visit common websites with query keywords (e.g., bank or hacking websites). The officer will warn suspicious users in the resulting community and take immediate actions (e.g., record evidence for the police).

Figure~\ref{subfig:motivation_1} illustrates an example of a clickstream bipartite network $G_{t-1}$ at timestamp $(t-1)$, which contains 3 user vertices, $u_1\sim u_3$, 2 website vertices, $v_1$ and $v_2$, and 5 edges $(u_i, v_a)$ between user $u_i$ and website $v_a$ (for $1\leq i\leq 3$ and $1\leq a \leq 2$). Here, each website vertex $v_a$ has a set of keywords that represent the website's features (e.g., \{bank, finance\} for a bank website $v_1$). Moreover, each edge $(u_i, v_a)$ is associated with an integer weight, indicating the frequency that user $u_i$ accessed website $v_a$ over the past five minutes. At timestamp $(t-1)$, the officer notices a community $g_1$ (circled by blue dashed lines) and starts to monitor suspicious users $u_2$ and $u_3$ in $g_1$.

At timestamp $t$, a new access by user $u_1$ to website $v_1$ (i.e., edge $(u_1, v_1)$) is recorded. Meanwhile, since the officer only focuses on suspicious users who frequently visit sensitive websites within five minutes (rather than normal users who visit infrequently over a long period), access records five minutes ago will expire (i.e., edge $(u_3, v_2)$ is deleted). Due to this edge deletion, community $g_1$ is no longer cohesive (i.e., less suspicious), so the officer will pay less attention to $g_1$.


Finally, at timestamp $(t+1)$, due to frequent accesses by user $u_1$ to website $v_1$ and the expiration of the access by user $u_3$ to $v_1$, the officer will pinpoint a new community $g_2$ (as circled by orange dashed lines in Figure~\ref{subfig:motivation_3}) and take actions against users and websites in $g_2$. \quad $\blacksquare$

\nop{
Figure~\ref{subfig:motivation_1} illustrates an example of a clickstream bipartite network $G_{t-1}$ at timestamp $(t-1)$, which contains 3 user vertices, $u_1\sim u_3$, 2 website vertices, $v_1$ and $v_2$, and 5 edges $(u_i, v_a)$ between two types of nodes, user $u_i$ and website $v_a$ (for $1\leq i\leq 3$ and $1\leq a \leq 2$). Here, each website vertex $v_a$ (for $a=1, 2$) has a set of keywords that represent the website's features (e.g., \{bank, finance\} for a bank website $v_1$). Moreover, each edge $(u_i, v_a)$ (for $i=1\sim 3$ and $a=1, 2$) is associated with an integer weight, indicating the frequency that user $u_i$ accessed website $v_a$ for the past five minutes. 

As shown in Figure~\ref{subfig:motivation_1}, to detect malicious groups of users accessing sensitive/suspicious websites, the network security officer may want to specify some query keywords of websites (e.g., ``bank'' and ``malware''), and identify some group (community), $g_1$, of users (e.g., $u_2$ and $u_3$) who frequently visit some common websites with query keywords (e.g., bank website $v_1$ and hacking website $v_2$). The officer will warn suspicious users in the resulting community $g_1$ and/or take immediate action (e.g., record evidence for the police).


Figure~\ref{subfig:motivation_2} shows the streaming bipartite network $G_t$  at timestamp $t$, with a newly inserted edge $(u_1,v_1)$ and an expired edge $(u_3,v_2)$ (i.e., user $u_3$ has not visited website $v_2$ for the past five minutes). At timestamp $t$, the community in $G_t$ is changed to $g_2$ (rather than $g_1$ at timestamp $(t-1)$), as circled by the orange dashed line. $\blacksquare$
}


\nop{


Figure~\ref{subfig:motivation_1} illustrates an example of a streaming bipartite network $G_{t-1}$ at timestamp $(t-1)$, which contains 3 user vertices, $u_1\sim u_3$, 2 POI vertices, $v_1$ and $v_2$, and 5 edges $(u_i, v_a)$ between two types of nodes, user $u_i$ and POI $v_a$. Here, each POI vertex $v_a$ (for $a=1, 2$) has a set of keywords that represent the POI's features (e.g., $\{Book,  DVD\}$ sold by bookstore $v_2$). Moreover, each edge $(u_i, v_a)$ (for $i=1\sim 3$ and $a=1, 2$) is associated with an integer weight, indicating the frequency that user $u_i$ visited POI $v_a$ for the past three months. 

As shown in Figure~\ref{subfig:motivation_1}, to promote the business of POIs (e.g., restaurant and/or bookstore), a sales manager may want to specify some query keywords of POIs (e.g., ``food'' and ``book''), and identify a community, $g_1$, of users (e.g., $u_2$ and $u_3$) who frequently visit some common POIs with query keywords (e.g., restaurant $v_1$ and bookstore $v_2$), and provide users in $g_1$ with group buying coupons (e.g., Groupons \cite{??}) or POI visiting suggestions.

Figure~\ref{subfig:motivation_2} shows the streaming bipartite network $G_t$  at timestamp $t$, with a newly inserted edge $(u_1,v_1)$ and an expired edge $(u_3,v_2)$ (i.e., user $u_3$ has not visited the bookstore $v_2$ for the past three months). At timestamp $t$, the community in $G_t$ is changed to $g_2$ (rather than $g_1$ at timestamp $(t-1)$), as circled by the orange dashed line. \qquad $\blacksquare$\\

}

\end{example}

In addition to the example above, the CD-SBN problem has many other real applications. For example, in the customer-product transaction bipartite graph, we can utilize the CD-SBN results to identify communities of customers who have recently exhibited purchasing behaviors related to online advertising and marketing. Similarly, in the visitor-location check-in bipartite graph (e.g., from Yelp \cite{yelp2024}), we can identify a group of visitors who frequently check in at some common \textit{points of interest} (POIs) and provide them with group buying coupons/discounts (e.g., Groupon \cite{groupon2024}).

In this paper, we formally define the community semantics in the context of dynamic bipartite graphs, that is, a keyword-aware and structurally dense bipartite subgraph (called $(k,r,\sigma)$-bitruss containing query keywords, as described in Section~\ref{subsec:community_detection_over_streaming_bipartite_network}). Then, we will tackle the CD-SBN problem, which identifies all bipartite communities with user-specified vertex features and high structural cohesiveness. We consider both \textit{snapshot} and \textit{continuous} scenarios for our CD-SBN problem. The former detects communities over a snapshot of the streaming bipartite network $G_t$ at timestamp $t$, whereas the latter continuously monitors CD-SBN answers for each registered community constraint upon streaming graph updates.


Due to the large scale of bipartite graphs and the rapid streaming of graph changes, efficiently processing both the snapshot and the continuous CD-SBN is rather challenging. Thus, we propose effective pruning strategies regarding community constraints (e.g., query keywords, community radius, edge supports, and community scores) to significantly reduce the CD-SBN problem space. We also design a hierarchical synopsis to effectively facilitate candidate community search and develop efficient snapshot and continuous algorithms to retrieve or incrementally maintain actual community answers (via proposed synopsis and pruning methods), respectively.

In this paper, we make the following major contributions.

\begin{enumerate}
    \item We formally define the problem of the \textit{community detection over streaming bipartite network} (CD-SBN) containing snapshot and continuous queries in Section~\ref{sec:problem_definition}.
    \item We present an efficient CD-SBN framework for processing queries over streaming bipartite graphs in Section~\ref{sec:solution_framework}.
    \item We design effective community-level pruning strategies to reduce the search space of the CD-SBN problem in Section~\ref{sec:pruning_strategies}.
    \item We propose a novel hierarchical synopsis to facilitate CD-SBN query processing and devise an efficient procedure for incremental graph maintenance in Section~\ref{sec:synopsis_and_graph_maintenance}.
    \item We develop an efficient algorithm with effective synopsis-level pruning strategies to answer the snapshot CD-SBN query, as well as another algorithm to maintain the result set of a continuous CD-SBN query with low computational cost, in Section~\ref{sec:query_processing}.
    \item We demonstrate the efficiency and effectiveness of our CD-SBN processing approach through extensive experiments over synthetic/real-world graphs in Section~\ref{sec:experiments}.
\end{enumerate}

Section \ref{sec:related_work} reviews the related work on community search/ detection on unipartite graphs and static/streaming bipartite graphs. Finally, Section \ref{sec:conclusion} concludes this paper.

\section{Problem Definition}
\label{sec:problem_definition}

Section~\ref{subsec:streaming_bipartite_network} provides the data model for streaming bipartite networks. Section \ref{subsec:basic_pattern} defines the basic units that indicate the relationship between two users. Section~\ref{subsec:k_r_sigma_bitruss_community} introduces the \textit{relationship score} and the $(k,r,\sigma)$-bitruss to measure the cohesiveness of a subgraph. Section~\ref{subsec:community_detection_over_streaming_bipartite_network} formally defines our problem of the \textit{community detection over streaming bipartite networks} (CD-SBN).

\subsection{Streaming Bipartite Networks}
\label{subsec:streaming_bipartite_network}

\noindent {\bf Bipartite Graphs:} We first define an undirected and weighted bipartite graph as follows.

\begin{definition}
\label{def:bipartite_graph}
(\textbf{Bipartite Graph, $G$}) A bipartite graph, $G$, is represented by a quadruple $(U(G), L(G), E(G), \Phi(G))$, where $U(G)$ and $L(G)$ are two disjoint sets (types) of vertices, $E(G)$ is a set of edges, $e_{u, v}$, between vertices $u\in U(G)$ and $v\in L(G)$, and $\Phi(G)$ is a mapping function: $U(G) \times L(G) \rightarrow E(G)$. 

Each vertex $v \in L(G)$ has a set, $v.K$, of keywords, and each edge $e_{u,v} \in E(G)$ is associated with a weight $w_{u,v}$.
\end{definition}

Bipartite graphs have been widely used in many real applications, such as movie recommendations for users~\cite{grujic2008MoviesRecommendationNetworks}, recommending \textit{places of interest} (POIs) to tourists~\cite{lang2022POIRecommendationBased}, or online product shopping recommendation for customers~\cite{chen2022SocialBoostedRecommendation, shan2017AdaptiveKendallsCorrelationa, yoon2020ItemRecommendationPredicting}. 

For simplicity,  this paper considers vertices $u$ in the upper vertex layer $U(G)$ as users in real-world applications, and vertices $v$ in the lower vertex layer $L(G)$ as items that are associated with keyword sets $v.K$ such as movie types, POI features, and product descriptions. Moreover, each edge $e_{u,v}$ from user $u$ to item $v$ is associated with a weight $w_{u,v}$, indicating the user-item interaction frequency such as $\#$ of movie views for movie recommendations, $\#$ of POI visits by users, $\#$ of product purchases for online shopping, and so on.


\noindent {\bf Streaming Bipartite Networks:} Below, we introduce the bipartite network in the streaming scenario.

\begin{definition}
\label{def:streaming_bipartite_network}
(\textbf{Streaming Bipartite Network, $G_t$}) A streaming bipartite network $G_t$ consists of an initial bipartite graph $G_0=(U(G_0), L(G_0), E(G_0),\Phi(G_0))$, and an ordered sequence of update items $S =\{p_1, p_2, \ldots, p_t, \ldots\}$, where each item $p_t = (e_{u,v}, t)$ indicates an update operator of weight $w_{u,v}$ for an inserted edge $e_{u,v}$ arriving at timestamp $t$. A \textit{sliding window}, $W_t$, of size $s$ in a streaming bipartite network $G_t$ contains the most recent $s$ update items $p_{t-s+1}$, $p_{t-s+2}$, $\ldots$, and $p_t$, in $G_t$, where $t$ is the current timestamp. 
\end{definition}



In Definition \ref{def:streaming_bipartite_network}, we consider a sliding window, $W_t$, of $s$ update items over a streaming bipartite network $G_t$.
In reality, streaming updates are interactions between users and items in real-world applications (e.g., users' ratings of movies, access records of POIs, and purchase records).
Each update can be the insertion of a new edge/interaction between a user and an item (or equivalently, the increase of the edge weight $w_{u,v}$). Since the interaction records are time-sensitive, we employ a sliding window, $W_t$, to focus on recently coming edges and ignore those expired edge updates.

After applying update items in the sliding window $W_t$ to the bipartite graph $G$, we can obtain the latest snapshot, $G_t$, of the streaming bipartite network at timestamp $t$.

\noindent {\bf Sliding Window Maintenance of Streaming Bipartite Networks:} We maintain the sliding window $W_t$ upon dynamic updates in 
$G_t$. When a new item $p_{t+1}$ arrives at timestamp $(t+1)$,  we will add $p_{t+1}$ to $W_t$ and remove the expired item $p_{t-s+1}$ from $W_t$, which yields a new sliding window $W_{t+1}$.

\underline{\it Insertion of a New Item $p_{t+1}$.} For the insertion of a new item $p_{t+1}$ $ = (e_{u,v}, t+1)$, we consider two cases:
\begin{itemize}
    \item When both vertices $u$ and $v$ exist in $G_t$ (i.e., $u \in U(G_t)$ and $v \in L(G_t)$),  we add edge $e_{u,v}$ to $G_t$ and increment edge weight $w_{u,v}$ by 1 (i.e., $w_{u,v} = w_{u,v}+1$), and; 
    \item When either of vertices $u$ and $v$ does not exist in $G_t$ (i.e., $u \notin U(G_t)$ and/or $v \notin L(G_t)$), we add new vertex(es) $u$ and/or $v$ to $U(G_t)$ and/or $L(G_t)$, respectively, insert a new edge $e_{u,v}$ into $G_t$, and set edge weight $w_{u,v}$ to 1 (i.e., $w_{u,v} = 1$).
\end{itemize}

\underline{\it Expiration of an Old Item $p_{t-s+1}$.} When an old item $p_{t-s+1} = (e_{u,v}, t-s+1)$ expires, we decrement the edge weight $w_{u,v}$ by 1 (i.e., $w_{u,v}= w_{u,v}-1$). If $w_{u,v}=0$ holds, we remove the edge $e_{u,v}$ from $G_t$ (keeping two ending vertices $u$ and $v$).

\subsection{Basic Patterns in the Bipartite Graph}
\label{subsec:basic_pattern}


In this subsection, we give the definitions of basic patterns, \textit{wedge} and \textit{butterfly} \cite{gou2023SlidingWindowbasedApproximate}, in the bipartite graph $G$. 

\noindent {\bf Wedge:} A wedge is formally defined as follows:

\begin{definition}
\label{def:wedge}
(\textbf{Wedge, $\angle(u_i, v, u_j)$}) Given two user vertices $u_i, u_j$ $\in U(G)$ and an item vertex $v \in L(G)$ in a bipartite graph $G$, a \textit{wedge} \cite{gou2023SlidingWindowbasedApproximate}, $\angle(u_i, v, u_j)$, is a path $u_i \rightarrow v \rightarrow u_j$, where $u_i$, $v$, and $u_j$ are called start-, middle-, and end-vertex of the wedge, respectively. The weight of a wedge $\angle(u_i,v,u_j)$ is given by:
\begin{equation}
\label{eq:wedge_weight}
    w_{\angle(u_i,v,u_j)}=\min\{w_{u_i,v}, w_{u_j,v}\}.
\end{equation}

\end{definition}
\vspace{-1ex}

Based on Definition~\ref{def:wedge}, a wedge $\angle(u_i,v,u_j)$ in the bipartite graph can be two audiences watching the same movie, two travelers visiting the same site, or two customers purchasing the same product. 
The weight of the wedge $\angle(u_i,v,u_j)$ takes the minimum edge weights between $w_{u_i,v}$ and $w_{u_j,v}$, indicating the intensity (or frequency) of \textit{maximally common} interest/behavior $v$ (e.g., movie, POI site, or product) between two users $u_i$ and $u_j$.
In the example of Figure \ref{fig:butterfly_example}, $\angle(u_1,v_1,u_2)$ is one of the wedges, with weight $w_{\angle(u_1,v_1,u_2)} = \min\{2, 4\} = 2$.


\begin{figure}[t]
  \centering
  \includegraphics[width=0.2\textwidth]{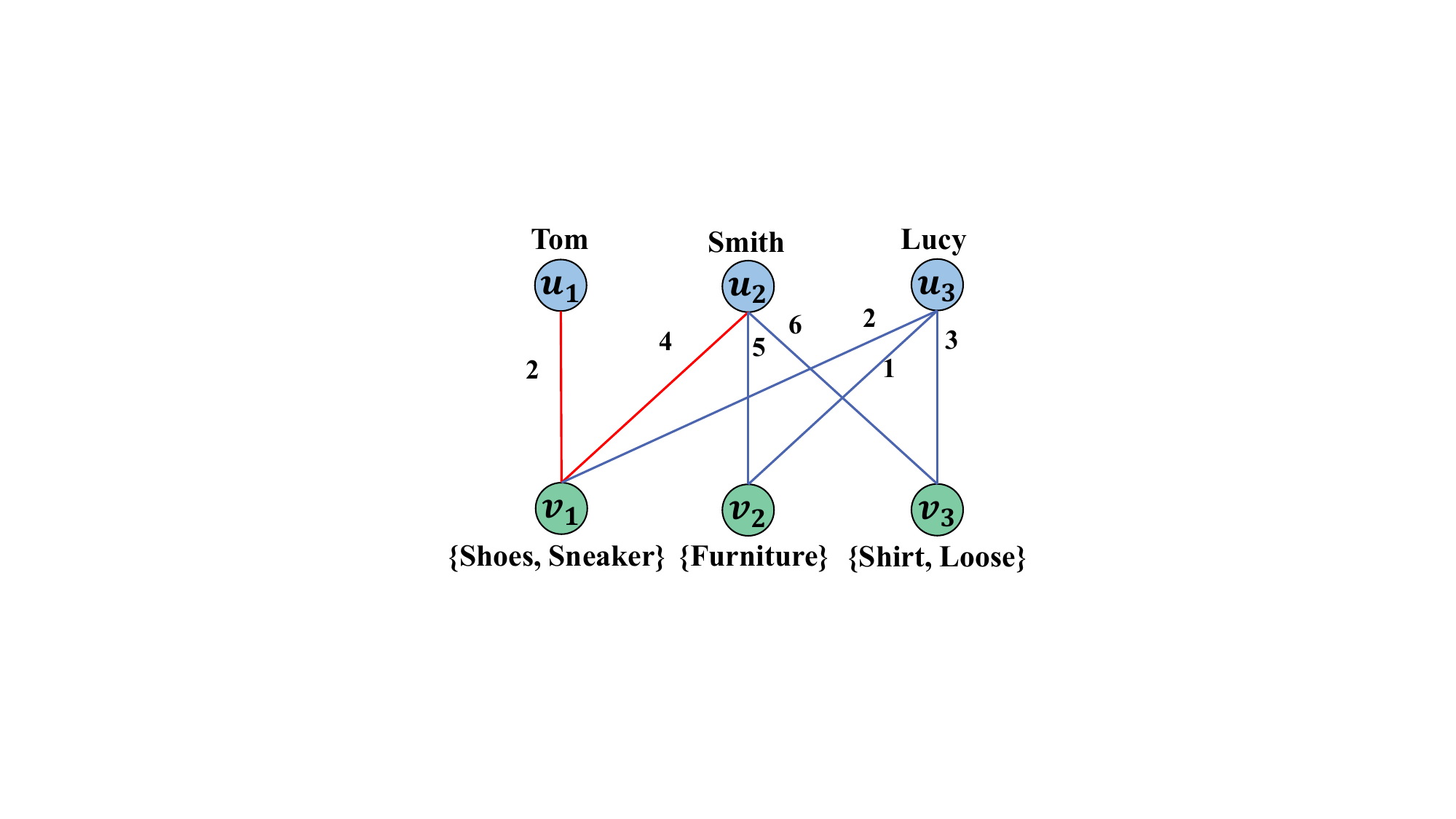}
  \vspace{-3ex}
  \caption{\small An example of basic patterns over a bipartite network.}
  \Description{ An example of basic patterns over a bipartite network. }
  \vspace{-4ex}
  \label{fig:butterfly_example}
\end{figure}

\noindent {\bf Butterfly:} Two wedges form a \textit{butterfly}, as defined below:
\vspace{-1ex}
\begin{definition}
\label{def:butterfly}
(\textbf{Butterfly \cite{gou2023SlidingWindowbasedApproximate}, $\Join(u_i, u_j, v_a, v_b)$}) Given two user vertices $u_i, u_j \in U(G)$ and two item vertices $v_a, v_b \in L(G)$ in a bipartite graph $G$, a \textit{butterfly}, $\Join(u_i, u_j, v_a, v_b)$, is a complete bipartite subgraph \nop{(i.e., $2 \times 2$ biclique)} containing wedges $\angle(u_i, v_a, u_j)$ and $\angle(u_i, v_b, u_j)$.
\end{definition}
\vspace{-1ex}

The butterfly structure (as given in Definition \ref{def:butterfly}) has been widely used in many algorithms and applications \cite{sanei-mehri2018ButterflyCountingBipartite, xu2022EfficientLoadbalancedButterfly, zhou2021ButterflyCountingUncertain} to detect communities with high cohesiveness in bipartite graphs. In Figure \ref{fig:butterfly_example}, $\Join(u_2,u_3,v_2,v_3)$ is an example of a butterfly.



\noindent {\bf The Score of a Butterfly:} 
The butterfly is a basic graphlet pattern of bipartite graphs. Below, we define a \textit{butterfly score} to measure the relationship among user and item vertices in the butterfly.


\vspace{-1ex}
\begin{definition}
\label{def:weighted_butterfly_counting}
(\textbf{The Butterfly Score}) The score, $w_{\Join(u_i, u_j, v_a, v_b)}$, of a butterfly $\Join(u_i, u_j, v_a, v_b)$ is given by:
\begin{eqnarray}
\label{eq:butterfly_score}
    \vspace{-3ex}
    w_{\Join(u_i, u_j, v_a, v_b)} &=& w_{\angle(u_i,v_a,u_j)} \cdot w_{\angle(u_i,v_b,u_j)}\\
    &=& \min\{w_{u_i,v_a}, w_{u_j,v_a}\} \cdot \min\{w_{u_i,v_b}, w_{u_j,v_b}\}.\notag
\end{eqnarray}
\end{definition}

In Definition \ref{def:weighted_butterfly_counting}, we consider two wedges $\angle(u_i,v_a,u_j)$ and $\angle(u_i,v_b,$ $u_j)$ in the butterfly $\Join(u_i, u_j, v_a, v_b)$, and define the score of the butterfly by multiplying their corresponding weights, that is, $\min\{w_{u_i,v_a},$ $w_{u_j,v_a}\} \cdot \min\{w_{u_i,v_b},$ $w_{u_j,v_b}\}$. Intuitively, we treat the weight of each wedge as the frequency of duplicate wedges, and the score of the butterfly is given by the count of duplicate butterflies (i.e., the multiplication of these two wedge weights). 

For the previous example in Figure \ref{fig:butterfly_example}, the score, $w_{\Join(u_2,u_3,v_2,v_3)}$, of the butterfly $\Join(u_2,u_3,v_2,v_3)$ can be calculated by: $\min\{5, 1\}\cdot\min\{6,3\} = 1 \times 3 = 3$.


\vspace{-1ex}
\subsection{The \texorpdfstring{$(k,r,\sigma)$}{(k,r,s)}-Bitruss Community in a Bipartite Graph}
\label{subsec:k_r_sigma_bitruss_community}

In this subsection, we first define the \textit{user relationship score} between any two user vertices $u_i$ and $u_j$ in a bipartite graph $G$ and then formalize the $(k,r,\sigma)$-bitruss community with high structural cohesiveness in $G$. 

\noindent {\bf The Relationship Score of a User Vertex Pair:} Based on the score of a butterfly (as given in Eq.~(\ref{eq:butterfly_score})), we can define the \textit{user relationship score} of a user vertex pair  $(u_i, u_j,u_j)$ as follows.

\begin{definition}
\label{def:user_relationship_score}
(\textbf{User Relationship Score, $score_{u_i,u_j}$}) Given two vertices $u_i, u_j \in U(G_t)$ in a bipartite graph $G_t$, the \textit{user relationship score}, $score_{u_i, u_j}$, between $u_i$ and $u_j$ in $G_t$ can be calculated by:
\vspace{-1ex}
\begin{equation}
\label{eq:user_relationship_score}
    score_{u_i, u_j}(G_t) = \sum_{\forall v_a,v_b \in L(G_t)}{w_{\Join(u_i,u_j,v_a,v_b)}}.
\end{equation}
\end{definition}
\vspace{-1ex}

In Definition \ref{def:user_relationship_score}, the user relationship score $score_{u_i, u_j}$ between two users $u_i$ and $u_j$ is defined as the summed butterfly score for all butterflies containing vertices $u_i$ and $u_j$. A higher relationship score indicates a stronger relationship between users $u_i$ and $u_j$.

As illustrated in Figure \ref{fig:butterfly_example}, the relationship score, $score_{u_2, u_3}$, between users $u_2$ and $u_3$ is given by: $w_{\Join(u_2,u_3,v_1,v_2)} + w_{\Join(u_2,u_3,v_1,v_3)} + w_{\Join(u_2,u_3,v_2,v_3)} = \min\{4,2\} \cdot \min\{5,1\} + \min\{4,2\} \cdot \min\{6,3\} + \min\{5,1\} \cdot \min\{6,3\} = 2 \times 1 + 2 \times 3 + 1 \times 3 = 11$.


\noindent {\bf The $\mathbf{(k,r,\sigma)}$-Bitruss:} We next define a $(k,r,\sigma)$-bitruss community in the bipartite graph $G$ with high structural and user relationship cohesiveness. 

\vspace{-2ex}
\begin{definition}
\label{def:k_r_sigma_bitruss}
 (\textbf{$(k,r,\sigma)$-Bitruss}) Given a bipartite graph $G_t=(U(G_t),L(G_t),E(G_t),\Phi(G_t))$, a center vertex $u_c \in U(G_t)$, a threshold $k$ of butterfly number, a maximum radius $r$, and a threshold, $\sigma$, of the user relationship score, a \textit{$(k,r,\sigma)$-bitruss} is a connected subgraph, $g$, of $G_t$ (denoted as $g \subseteq G_t$), such that:
 \vspace{-1ex}
 \begin{itemize}
    \item (Support) for each edge $e_{u, v} \in E(g)$, the edge support $sup(e_{u,v})$ (defined as the number of butterflies containing edge $e_{u, v}$) is not smaller than $k$;
    \item (Radius) for any user vertex $u_i \in U(g)$, we have $dist(u_c, u_i) \leq 2r$, and;
    \item (Score) for any two user vertices $u_i, u_j \in \angle(u_i, v, u_j)$, we have $score_{u_i, u_j}(g) \geq \sigma$,
 \end{itemize}
where $u_c \in g$, and $dist(u_c, u_i)$ is the shortest path distance between $u_c$ and $u_i$ in bipartite subgraph $g$.
\end{definition}

\vspace{-1ex}
The $(k,r,\sigma)$-bitruss $g$ in Definition~\ref{def:k_r_sigma_bitruss} is defined as a connected subgraph that: 1) each edge is contained in at least $k$ butterflies, 2) subgraph $g$ is centered at $u_c$ with a maximum radius $2r$, and 3) each pair of user vertices $u_i, u_j$ share a common neighbor $v$ holding $score_{u_i, u_j}(g) \geq \sigma$. Intuitively, in real applications, the $(k,r,\sigma)$-bitruss refers to a user group with strong connections (measured by a user relationship score), for example, users who enjoy movies of the same genre, visitors who visited the same POIs before, or customers with many common purchased products.


\vspace{-1ex}
\subsection{The Problem of the Community Detection Over Streaming Bipartite Network (CD-SBN)}
\label{subsec:community_detection_over_streaming_bipartite_network}

\noindent {\bf The CD-SBN Problem:} We are now ready to define the problem of community detection over streaming bipartite network.

\vspace{-1ex}
\begin{definition}
\label{def:cd_sbn_problem}
(\textbf{Community Detection Over Streaming Bipartite Network, CD-SBN}) Given a streaming bipartite network $G_t$, a support threshold $k$, a maximum radius $r$, a threshold, $\sigma$, of the user relationship score, and a set, $Q$, of query keywords, the problem of the \textit{community detection over streaming bipartite network} (CD-SBN) retrieves a result set, $R$, of subgraphs $g_i$ of $G_t$ (i.e., $g_i \subseteq G_t$), such that:
\begin{itemize}
    \item (Keyword Relevance) for any item vertex $v_i \in L(g_i)$, its keyword set $v_i.K$ must contain at least one query keyword in $Q$ (i.e., $v_i.K \cap Q \neq \emptyset$), and;
    \item (Structural and User Relationship Cohesiveness) bipartite subgraph $g_i$ is a $(k,r,\sigma)$-bitruss (as given in Definition \ref{def:k_r_sigma_bitruss}).
\end{itemize}
\end{definition}

Note that the CD-SBN problem has two scenarios: \textit{one-time snapshot} and \textit{continuous CD-SBN}. In particular, the one-time snapshot CD-SBN problem finds CD-SBN communities over a snapshot, $G_t$, of a streaming bipartite graph at timestamp $t$, whereas the continuous CD-SBN continuously monitors CD-SBN communities upon edge updates (i.e., $S$) in $G_t$, for some registered community constraints in Definition \ref{def:cd_sbn_problem}. The former version can be used to obtain initial CD-SBN community answers when CD-SBN predicates are online-specified, and the latter dynamically maintains CD-SBN community results upon graph updates.

\noindent {\bf Challenges:} The processing of snapshot and continuous CD-SBN queries is challenging.
One straightforward method to tackle the snapshot problem is to enumerate all community candidates and check whether they satisfy the CD-SBN constraints. However, this method is quite inefficient because of the costly checking of CD-SBN community predicates, including how to efficiently verify the constraints of keywords, graph structural cohesiveness, and user relationship scores. For the continuous CD-SBN problem, a straightforward method is to answer the snapshot query at each timestamp $t$ upon graph updates over a sliding window $W_t$. Nonetheless, there are many possible CD-SBN community candidates, which incur high computation costs to enumerate and refine. Therefore, it is also challenging to incrementally maintain the dynamically changing graph $G_t$ and efficiently update the CD-SBN query answers.


Table~\ref{table:symbols_and_descriptions} lists the commonly used symbols and their descriptions in this paper.

\setlength{\textfloatsep}{0pt}
\begin{table}[t]
\caption{Symbols and descriptions.}\small
\vspace{-4ex}
\footnotesize
\label{table:symbols_and_descriptions}
\begin{center}
\begin{tabular}{|l|p{6cm}|}
\hline
\textbf{Symbol}&{\textbf{Description}} \\
\hline\hline
$G$ & a bipartite graph \\
\hline
$G_t$ & a streaming bipartite network at timestamp $t$\\
\hline
$U(G)$ & a set of user vertices $u$ in the upper layer of bipartite graph $G$\\
\hline
$L(G)$ & a set of item vertices $v$ in the lower layer of bipartite graph $G$\\
\hline
$E(G)$ & a set of edges $e_{u,v}$ for $u\in U(G)$ and $v\in L(G)$\\
\hline
$S$ & an ordered sequence of update items $p_t = (e_{u, v}, t)$\\
\hline
$u$ (or $u_i$, $u_j$) & a user vertex in $U(G)$ of bipartite graph $G$\\
\hline
$v$ (or $v_a$, $v_b$) & an item vertex in $L(G)$ of bipartite graph $G$\\
\hline
$W_t$ & a sliding window of size $s$ at timestamp $t$\\
\hline
$w_{u,v}$& the edge weight of edge $e_{u,v}$\\
\hline
$\angle (u_i, v, u_j)$ & a wedge with $u_i, u_j \in U(G)$ and $v\in L(G)$\\
\hline
$\Join(u_i, u_j, v_a, v_b)$ & a butterfly with $u_i, u_j \in U(G)$ and $v_a, v_b \in L(G)$\\
\hline
$w_{\Join(u_i, u_j, v_a, v_b)}$ & the score of a butterfly $\Join(u_i, u_j, v_a, v_b)$\\
\hline
$k$ & the support threshold for a $(k,r,\sigma)$-bitruss\\
\hline
$sup(e)$ & the number of butterflies containing edge $e$\\
\hline
$r$ & the maximum radius for a $(k,r,\sigma)$-bitruss\\
\hline
$\sigma$ & the relationship score threshold for a $(k,r,\sigma)$-bitruss\\
\hline
$v_i.K$ & the keyword set for each vertex $v_i \in L(G)$ \\
\hline
\end{tabular}
\label{tab1}
\end{center}
\vspace{-1ex}
\end{table}
\setlength{\textfloatsep}{0pt}

\section{The CD-SBN Processing Framework}
\label{sec:solution_framework}

Figure~\ref{algo:the_solution_framework} illustrates a framework for answering snapshot/continuous CD-SBN queries, which consists of three components, \textit{initialization}, \textit{graph incremental maintenance}, and \textit{CD-SBN query processing}.

To support the CD-SBN stream processing, the \textit{initialization component} prepares some auxiliary data over the initial bipartite graph $G_0$ at timestamp $0$, including a keyword bit vector $v.BV$ for hashing keywords in item vertices, wedge-score-related data for the butterfly score calculation, support upper bound $ub\_sup(e_{u,v})$, and a synopsis $Syn$ (lines 1-4).


At each timestamp $t$, the sliding window $W_t$ includes a new item $p_t$ (i.e., the edge insertion) and removes an expired item $p_{t-s}$ (i.e., the edge deletion). Therefore, a \textit{graph incremental maintenance component} will compute an incremental update of wedge-score-related data and obtain the updated wedge-score-related data at timestamp $t$ (lines 5-6). Similarly, for those affected edges $e'$ (due to edge insertion/deletion), we will update support upper bounds $ub\_sup(e')$ (line 7) and synopsis $Syn$ (line 8).


For the \textit{CD-SBN query processing component}, the snapshot CD-SBN algorithm traverses the synopsis $Syn$ to retrieve candidate communities $g$ by applying our proposed pruning strategies, and refines the resulting candidate communities $g$ to obtain actual CD-SBN answers in $R$ (lines 8-11). Moreover, the continuous CD-SBN algorithm incrementally maintains CD-SBN community answers in $R$ by removing communities with expired edges and inserting new CD-SBN communities due to edge insertion (lines 12-14).


\setlength{\textfloatsep}{0pt}
\begin{algorithm}[t]
\caption{{\bf The CD-SBN Query Process Framework}}\small
\label{algo:the_solution_framework}
\KwIn{
    \romannumeral1) a streaming bipartite network $G_t$;
    \romannumeral2) a set, $Q$, of query keywords;
    \romannumeral3) a support threshold $k$;
    \romannumeral4) a maximum radius $r$;
    \romannumeral5) a threshold, $\sigma$, of user relationship score

}
\KwOut{
    a set, $R$, of CD-SBN communities
}

\tcp{\bf Initialization ($t=0$)}
hash keywords in $v.K$ into a keyword bit vector $v.BV$ for each item vertex $v \in L(G_0)$\\
calculate the wedge-score-related data for each pair of user vertices $u_i, u_j \in U(G_0)$
compute the support upper bound $ub\_sup(e_{u,v})$ for each edge $e_{u,v} \in E(G_0)$\\
construct a hierarchical synopsis, $Syn$, over $G_0$\\

\For{each timestamp $t$ ($>0$)}{
    \tcp{\bf Graph Incremental Maintenance}
    compute the increment of wedge-score-related data and update the data for each pair of user vertices $u_i, u_j \in U(G_t)$\\
    update support upper bounds $ub\_sup(e')$ for relevant edges $e'$ (upon edge insertion of $p_t$ and deletion of $p_{t-s}$) \\
    update synopsis $Syn$\\
    
    \tcp{\bf CD-SBN Query Processing}
    \For{each snapshot CD-SBN query}{
       traverse the synopsis $Syn$ by applying \textit{keyword}, \textit{support}, and \textit{score upper bound pruning strategies} to retrieve candidate communities $g$\\
       refine candidate communities $g$ and obtain actual CD-SBN results $R$\\
       \Return $R$\\
    }
    
    \For{each continuous CD-SBN query}{
        remove the expired communities from $R$\\
        insert new CD-SBN communities into $R$\\
    }
}
\end{algorithm}
\setlength{\textfloatsep}{0pt}

\section{Pruning Strategies}
\label{sec:pruning_strategies}
This section outlines the community-level pruning strategies (see Section~\ref{sec:solution_framework}) for reducing the search space of the CD-SBN problem. 
\nop{
{\bf Due to space limitations, we omit proofs of all the lemmas throughout this paper. Please refer to the complete proofs in our technical report on Arxiv~\cite{zhang2024EffectiveCommunityDetection}.}
}

\subsection{Keyword Pruning}
\label{subsec:keyword_pruning}
Based on the CD-SBN problem definition (Definition~\ref{def:cd_sbn_problem}), any item vertex $v_i$ in the lower layer of a candidate bipartite subgraph $g$ should contain at least one query keyword in $Q$. In the sequel, we present an effective \textit{keyword pruning}, which filters out item vertices $v_i$ in subgraph $g$ that do not contain any query keywords in $Q$.

\begin{lemma}
\label{lemma:keyword_pruning}
    \textbf{(Keyword Pruning)} Given a query keyword set $Q$ and a candidate bipartite subgraph $g$, any item vertex $v_i \in L(g)$ can be safely pruned from the subgraph $g$, if $v_i.K \cap Q = \emptyset$ holds, where $v_i.K$ is the keyword set associated with item vertex $v$.
\end{lemma}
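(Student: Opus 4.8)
The plan is to prove the claim by a direct argument resting entirely on the Keyword Relevance condition of Definition~\ref{def:cd_sbn_problem}. The key conceptual point is that ``safely pruned'' is a statement about the \emph{answer set}, not about $v_i$ in isolation: it asserts that removing $v_i$ (and its incident edges) from $g$ discards no subgraph that could ever qualify as a CD-SBN community. Consequently, it is not enough merely to note that $v_i$ fails a constraint; I must show that \emph{every} qualifying community contained in $g$ already avoids $v_i$, so that completeness of the result set is preserved.

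First I would fix an arbitrary item vertex $v_i \in L(g)$ with $v_i.K \cap Q = \emptyset$, and let $g^{*} \subseteq g$ be any subgraph that qualifies as a CD-SBN answer. By the Keyword Relevance condition of Definition~\ref{def:cd_sbn_problem}, every item vertex $v \in L(g^{*})$ must satisfy $v.K \cap Q \neq \emptyset$; that is, the keyword constraint is hereditary over all item vertices of a qualifying community. Applying this to $v_i$, the hypothesis $v_i.K \cap Q = \emptyset$ directly contradicts the requirement $v_i.K \cap Q \neq \emptyset$, so $v_i \notin L(g^{*})$, and therefore none of the edges incident to $v_i$ appear in $g^{*}$ either. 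Hence $g^{*} \subseteq g \setminus \{v_i\}$. Since $g^{*}$ was an arbitrary qualifying community inside $g$, every such community already lives in $g \setminus \{v_i\}$, so deleting $v_i$ leaves the set of valid answers untouched, which is exactly the safety of the pruning.

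The main obstacle here is more a matter of careful phrasing than of genuine difficulty: the one-line observation ``$v_i$ violates keyword relevance'' is immediate from the definition, but the actual content of the lemma is the stronger completeness claim that discarding $v_i$ loses no answer. This stronger statement follows cleanly once the keyword constraint is recognized as a per-vertex (hereditary) condition on $L(g^{*})$, so that membership of any invalid item vertex is impossible in every qualifying subgraph simultaneously. I would therefore keep the proof at the level of an arbitrary qualifying $g^{*}$ throughout, rather than arguing about $v_i$ alone, to make the preservation of the answer set explicit.
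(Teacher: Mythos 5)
Your proposal is correct and follows essentially the same reasoning as the paper's proof: a vertex with $v_i.K \cap Q = \emptyset$ violates the per-vertex Keyword Relevance condition of Definition~\ref{def:cd_sbn_problem} and so cannot belong to any qualifying community, hence removing it loses no answers. Your version merely makes the completeness aspect (quantifying over every qualifying $g^{*} \subseteq g$) explicit, which the paper leaves implicit.
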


\begin{proof}
\label{proof:keyword_pruning}
    If $v_i.K \cap Q = \emptyset$ holds for an item vertex $v_i \in L(g)$, then vertex $v_i$ does not contain any query keyword in $Q$, which violates the keyword relevance constraint (as given in Definition~\ref{def:cd_sbn_problem}). Hence, the item vertex $v_i$ cannot be in the CD-SBN subgraph answer and can thus be safely pruned, which completes the proof.
\end{proof}
\nop{
\begin{proof}
Please refer to the proof in our technical report~\cite{zhang2024EffectiveCommunityDetection}.
\end{proof}
}

\subsection{Support Pruning}
\label{subsec:support_pruning}
As mentioned in Definition~\ref{def:cd_sbn_problem}, since the CD-SBN community should be a $(k,r,\sigma)$-bitruss, the support, $sup(e_{u,v})$, of edge $e_{u,v}$ should not be smaller than $k$ (i.e., $sup(e_{u,v}) \geq k$). Assume that we can obtain an upper bound, $ub\_sup(e_{u,v})$, of support $sup(e_{u,v})$ for an edge $e_{u,v} \in E(g)$. We provide the following  \textit{support pruning} lemma to rule out edges in a candidate bipartite subgraph $g$ with low supports.

\begin{lemma}
\label{lemma:support_pruning}
    \textbf{(Support Pruning)} Given a support threshold $k$ and a candidate bipartite subgraph $g$, the edge $e_{u,v} \in E(g)$ can be safely pruned from subgraph $g$ if it satisfies $ub\_sup(e_{u,v}) < k$.
\end{lemma}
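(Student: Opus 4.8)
The plan is to argue by contradiction, paralleling the keyword pruning argument (Lemma~\ref{lemma:keyword_pruning}), and relying on a single fact: that $ub\_sup(e_{u,v})$ is a \emph{valid} upper bound of the true support, i.e., $sup(e_{u,v}) \leq ub\_sup(e_{u,v})$ for every edge $e_{u,v}$. This is guaranteed by the way the bound is constructed and maintained (lines 3 and 7 of Algorithm~\ref{algo:the_solution_framework}), and I would state it explicitly at the outset rather than leaving it implicit, since the entire lemma hinges on it.

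First I would chain the two inequalities. Assume $ub\_sup(e_{u,v}) < k$. Combining this with the upper-bound property $sup(e_{u,v}) \leq ub\_sup(e_{u,v})$ gives $sup(e_{u,v}) \leq ub\_sup(e_{u,v}) < k$, and hence $sup(e_{u,v}) < k$.

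Next I would invoke the support constraint of the $(k,r,\sigma)$-bitruss (the first bullet of Definition~\ref{def:k_r_sigma_bitruss}), which requires $sup(e_{u,v}) \geq k$ for every edge $e_{u,v} \in E(g)$ contained in a valid community. Since the chaining step yields $sup(e_{u,v}) < k$, the edge $e_{u,v}$ violates this constraint and therefore cannot belong to any $(k,r,\sigma)$-bitruss, and consequently not to any CD-SBN answer (Definition~\ref{def:cd_sbn_problem}). Thus $e_{u,v}$ can be safely removed from $g$ without discarding any genuine community, which completes the argument.

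There is no substantive obstacle here; the only point requiring care is the soundness direction, namely that $ub\_sup$ never underestimates $sup$, so that the computable test $ub\_sup(e_{u,v}) < k$ truly certifies $sup(e_{u,v}) < k$. Making the inequality $sup(e_{u,v}) \leq ub\_sup(e_{u,v})$ explicit is what guarantees the pruning is lossless, so I would foreground it as the one assumption the proof consumes.
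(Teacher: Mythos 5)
Your proposal is correct and follows essentially the same route as the paper's own proof: it chains $sup(e_{u,v}) \leq ub\_sup(e_{u,v}) < k$ and then invokes the support constraint of Definition~\ref{def:k_r_sigma_bitruss} to conclude the edge cannot appear in any valid community. Your explicit foregrounding of the soundness of the upper bound is a reasonable stylistic refinement, but the substance is identical.
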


\begin{proof}
\label{proof:support_pruning}
    According to Definition~\ref{def:k_r_sigma_bitruss}, the support, $sup(e_{u,v})$, is defined as the number of butterflies containing a specific edge. Thus, each edge $e$ must be reinforced by at least $k$ butterflies in a $(k,r,\sigma)$-bitruss. Since $ub\_sup(e_{u,v})$ is an upper bound of support $sup(e_{u,v})$, we have $sup(e_{u,v})\leq ub\_sup(e_{u,v})$. Moreover, from the assumption of the lemma that $ub\_sup(e_{u,v}) < k$ holds, by the inequality transition, we have $sup(e_{u,v})\leq ub\_sup(e_{u,v}) < k$. Hence, the support $sup(e_{u,v})$ of the edge $e_{u,v}$ is below the threshold $k$, and can be safely pruned, which completes the proof.
\end{proof}
\nop{
\begin{proof}
Please refer to the proof in our technical report~\cite{zhang2024EffectiveCommunityDetection}.
\end{proof}
}

\noindent {\bf Discussions on How to Obtain the Upper Bound of Support $\mathbf{ub\_sup(e_{u,v})}$.}
In the offline phase, the upper bound on support must be computed to enable support pruning. Since part of the item vertices in the data graph $G_t$ will be filtered out by the \textit{keyword pruning}, where the rest of the vertices induce a subgraph $g$, the number of butterflies containing the edge $e_{u,v}$ in $g$ is smaller than or equal to that in $G_t$. Thus, the support of $e_{u,v}$ in $G_t$ can be considered as an upper bound $ub\_sup(e_{u,v})$ in the subgraph $g$ defined below. 
\begin{equation}
\label{eq:edge_support_upper_bound}
    ub\_sup(e_{u,v}) = \sum_{\forall v_x\in N(u)-\{v\}}{|N(v)\cap N(v_x)-\{u\}|},
\end{equation}
where $N(v)$ is the neighbor vertices set of vertex $v$. With $u\in U(G_t)$ and $v\in L(g)$, the number of butterflies containing $e_{u,v}$ in $G_t$ can be calculated as the sum of the number of common neighbors for $v$ and each neighbor $v_x$ of $u$.

\subsection{Layer Size Pruning}
\label{subsec:layer_size_pruning}


Since the support of a bipartite subgraph is closely related to the subgraph size, we can filter out a candidate community $g$ with small sizes of upper/lower layers (i.e., $|U(g)|$ and $|L(g)|$), with respect to the support threshold $k$.

\begin{lemma}
\label{lemma:layer_size_pruning}
    \textbf{(Layer Size Pruning)} Given a support threshold $k$ and a candidate bipartite subgraph $g=(U(g), L(g), E(g), \Phi(g))$, subgraph $g$ can be safely pruned, if it satisfies the condition that $(|U(g)|-1)\cdot(|L(g)|-1) < k$.
\end{lemma}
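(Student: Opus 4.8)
The plan is to establish a purely combinatorial upper bound on the support of any single edge of $g$ in terms of the layer sizes $|U(g)|$ and $|L(g)|$, and then contradict the support constraint of the $(k,r,\sigma)$-bitruss. First I would recall from Definition~\ref{def:butterfly} that any butterfly containing a fixed edge $e_{u,v}$ (with $u \in U(g)$ and $v \in L(g)$) must have the form $\Join(u, u_j, v, v_b)$, where $u_j \in U(g)\setminus\{u\}$ is a second user vertex and $v_b \in L(g)\setminus\{v\}$ is a second item vertex, together with the requirement that all four edges $e_{u,v}$, $e_{u,v_b}$, $e_{u_j,v}$, and $e_{u_j,v_b}$ be present in $g$. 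The key observation is that each such butterfly through $e_{u,v}$ is uniquely identified by the pair $(u_j, v_b)$, so the map from butterflies through $e_{u,v}$ to these pairs is injective.

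Next I would count the admissible pairs. Since the ``other'' user $u_j$ ranges over $U(g)\setminus\{u\}$ and the ``other'' item $v_b$ ranges over $L(g)\setminus\{v\}$, there are at most $(|U(g)|-1)\cdot(|L(g)|-1)$ candidate pairs. Relaxing the edge-existence requirements can only increase this count, so I obtain $sup(e_{u,v}) \leq (|U(g)|-1)\cdot(|L(g)|-1)$ for \emph{every} edge $e_{u,v} \in E(g)$, a bound that does not depend on which edge was chosen. This is exactly the maximum number of butterflies an edge could possibly participate in, attained when $g$ is a complete bipartite graph.

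Finally I would invoke the support constraint of Definition~\ref{def:k_r_sigma_bitruss}, which requires $sup(e_{u,v}) \geq k$ for every edge of a $(k,r,\sigma)$-bitruss. Combining this with the bound above, whenever $(|U(g)|-1)\cdot(|L(g)|-1) < k$ we get $sup(e_{u,v}) \leq (|U(g)|-1)\cdot(|L(g)|-1) < k$ for all edges, so no edge can meet the threshold. Hence $g$ cannot be a $(k,r,\sigma)$-bitruss, cannot contain any valid CD-SBN community, and is therefore safely pruned.

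The hard part will not be the arithmetic but the careful bookkeeping in the counting step: one must verify that distinct butterflies through $e_{u,v}$ yield distinct pairs $(u_j, v_b)$ (injectivity), that the two ``free'' vertices necessarily lie in the two opposite layers, and that the $2\times2$ biclique structure rules out any double-counting. Once this injective correspondence is pinned down, the layer-size bound $(|U(g)|-1)\cdot(|L(g)|-1)$ follows immediately and the pruning conclusion is routine.
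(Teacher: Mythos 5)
Your proposal is correct and follows essentially the same route as the paper's proof: bound the support of any edge $e_{u,v}$ by the number of candidate pairs $(u_j, v_b)$ drawn from $U(g)\setminus\{u\}$ and $L(g)\setminus\{v\}$, giving $sup(e_{u,v}) \leq (|U(g)|-1)\cdot(|L(g)|-1)$, and then contradict the support constraint of Definition~\ref{def:k_r_sigma_bitruss}. Your explicit injectivity bookkeeping is a slightly more careful rendering of the paper's ``complete bipartite graph attains the bound, general graphs only lose butterflies'' argument, but the substance is the same.
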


\begin{proof}
\label{proof:layer_size_pruning}
    According to Definition~\ref{def:k_r_sigma_bitruss}, the support of each edge $e$ in $(k,r,\sigma)-bitruss$ must be higher than the support threshold $k$, which means that each edge $e$ must be contained by at least $k$ butterflies. For each edge, $e_{u_i, v_a}$, any pair of other user vertex and item vertex, $(u_x, v_y)$, can form a butterfly, $\Join(u_i,u_x,v_a,v_y)$, with the two ending vertices of edge $e_{u_i, v_a}$. Assume that $g$ is a fully connected bipartite graph, the edge $e_{u_i,v_a}$ is supported by pairs of vertices $|U-\{u_i\}|\times|L-\{v_a\}|$ so that its support is $sup(e_{u_i,v_a})=(|U(g)|-1)\cdot(|L(g)|-1)$. However, not every pair of vertices is connected to $u_i$ or $v_a$ in general, hence it holds that $sup(e_{u_i,v_a}) \leq (|U(g)|-1)\cdot(|L(g)|-1)$. If $(|U(g)|-1)\cdot(|L(g)|-1)<k$ holds, we have $sup(e_{u_i,v_a}) \leq (|U(g)|-1)\cdot(|L(g)|-1) < k$. Thus, edge $e_{u_i,v_a}$ can be safely pruned, which completes the proof.
\end{proof}

\nop{
\begin{proof}
Please refer to the proof in our technical report~\cite{zhang2024EffectiveCommunityDetection}.
\end{proof}
}

\subsection{Radius Pruning}
\label{subsec:radius_pruning}
In the definition of $(k,r,\sigma)$-bitruss, the radius parameter $r$ ensures that the distances between the center vertex and other user vertices are smaller than or equal to $2r$. Thus, we devise a radius pruning strategy to discard a user vertex $u_i$ whose distance, $dist(u_i, u_q)$, to the center vertex $u_q$ is greater than $2r$.

\begin{lemma}
\label{lemma:radius_pruning}
    \textbf{(Radius Pruning)} Given a candidate bipartite subgraph $g$ (centered at user vertex $u_q$) and a radius $r$, a user vertex $u_i \in U(g)$ can be safely pruned from subgraph $g$, if it holds $dist(u_i, u_q) > 2r$.
\end{lemma}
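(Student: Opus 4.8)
The plan is to follow the same template as the three preceding pruning lemmas: invoke the relevant clause of the $(k,r,\sigma)$-bitruss definition (Definition~\ref{def:k_r_sigma_bitruss}) and show that the stated condition $dist(u_i, u_q) > 2r$ contradicts it, so that $u_i$ cannot survive in any valid community. First I would recall the radius clause of Definition~\ref{def:k_r_sigma_bitruss}: a $(k,r,\sigma)$-bitruss centered at $u_c$ requires $dist(u_c, u_i) \leq 2r$ for every user vertex $u_i$ it contains. Identifying the query center $u_q$ with $u_c$, any admissible CD-SBN answer (Definition~\ref{def:cd_sbn_problem}) containing $u_i$ must therefore satisfy $dist(u_q, u_i) \leq 2r$.

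The second step is simply to contrapose: since the hypothesis gives $dist(u_i, u_q) > 2r$, vertex $u_i$ cannot belong to any $(k,r,\sigma)$-bitruss centered at $u_q$, and hence cannot appear in the result set $R$. Consequently $u_i$ is safely removable from the candidate subgraph $g$, which is what the lemma asserts.

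The one point I would be careful about — and the closest thing to an obstacle — is that the radius constraint is a shortest-path distance measured \emph{inside} a subgraph, so it is a global rather than a purely local quantity; this is unlike the keyword test of Lemma~\ref{lemma:keyword_pruning}, which depends only on $v_i$ itself. To make the soundness of the pruning fully rigorous, I would note that shortest-path distance is monotone under subgraph containment: for any eventual answer $g^\ast \subseteq g$, every $u_q$--$u_i$ path in $g^\ast$ is also a path in $g$, so $dist_{g^\ast}(u_q, u_i) \geq dist_g(u_q, u_i) > 2r$. Thus removing $u_i$ from $g$ cannot discard any genuine community centered at $u_q$. This monotonicity is precisely the distance analogue of the support upper-bound argument used in Lemma~\ref{lemma:support_pruning}, so the overall structure mirrors the earlier pruning proofs and the technical content beyond the direct definitional contradiction is minimal.
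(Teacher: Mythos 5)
Your proposal is correct and follows essentially the same route as the paper's proof, which simply observes that $dist(u_i,u_q) > 2r$ directly violates the radius clause of Definition~\ref{def:k_r_sigma_bitruss}. Your added monotonicity observation — that $dist_{g^\ast}(u_q,u_i) \geq dist_g(u_q,u_i)$ for any answer $g^\ast \subseteq g$ — is a worthwhile extra rigor step that the paper's one-line argument leaves implicit, but it does not change the overall approach.
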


\begin{proof}
\label{proof:radius_pruning}
    For the candidate subgraph $g$ centered at the user vertex $u_q$, if $dist(u_i, u_q) > 2r$ holds for a user vertex $u_i \in U(g)$, then $u_i$ violates the radius constraint. Therefore, we should prune the user vertex $u_i$ from $g$, which completes the proof.
\end{proof}

\nop{
\begin{proof}
Please refer to the proof in our technical report~\cite{zhang2024EffectiveCommunityDetection}.
\end{proof}
}

\subsection{Score Upper Bound Pruning}
\label{subsec:score_upper_bound_pruning}

Based on Definition~\ref{def:k_r_sigma_bitruss}, any pair of user vertices belonging to a wedge in a $(k,r,\sigma)$-bitruss must have the user relationship score greater than or equal to $\sigma$, where $\sigma$ is a score threshold specified by users online. In the sequel, we propose a \textit{score upper bound pruning} strategy to eliminate the user vertex(es) with low scores.

\begin{lemma}
\label{lemma:score_upper_bound_pruning}
    \textbf{(Score Upper Bound Pruning)} Given a community $g$ and a candidate user vertex $u_i\in U(g)$, the user vertex $u_i$ can be safely pruned, if there exists a user vertex $u_j \in U(g)$ and a wedge $\angle(u_i, v, u_j) \subseteq g$, such that $ub\_score_{u_i, u_j}(g) < \sigma$ holds, where $ub\_score_{u_i, u_j}(g)$ is an upper bound of score $score_{u_i, u_j}(g)$.
\end{lemma}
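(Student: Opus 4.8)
The plan is to reuse the transitivity schema of the Support Pruning lemma (Lemma~\ref{lemma:support_pruning}), applied now to the \emph{Score} condition of Definition~\ref{def:k_r_sigma_bitruss}. First I would note that the Score condition is imposed only on user pairs joined by a wedge: it requires $score_{u_i,u_j}(g)\geq\sigma$ whenever $\angle(u_i,v,u_j)\subseteq g$. Since the hypothesis supplies exactly such a wedge $\angle(u_i,v,u_j)$, the constraint is active for the pair $(u_i,u_j)$. Because $ub\_score_{u_i,u_j}(g)$ is by assumption an upper bound of the true score, we have $score_{u_i,u_j}(g)\leq ub\_score_{u_i,u_j}(g)$, and combining this with the hypothesis $ub\_score_{u_i,u_j}(g)<\sigma$ gives, by inequality transition, $score_{u_i,u_j}(g)\leq ub\_score_{u_i,u_j}(g)<\sigma$. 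Thus $g$ itself already fails to be a $(k,r,\sigma)$-bitruss.

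To upgrade the conclusion from the infeasibility of $g$ to the safe removal of $u_i$, I would add a monotonicity observation. By Definition~\ref{def:user_relationship_score}, $score_{u_i,u_j}(\cdot)$ is a sum of nonnegative butterfly scores over item-vertex pairs; hence for any connected subgraph $g'\subseteq g$ that still contains $u_i$, $u_j$, and a common neighbor (so that the wedge persists), we have $score_{u_i,u_j}(g')\leq score_{u_i,u_j}(g)<\sigma$. Consequently the offending wedge can never be repaired by shrinking the candidate: as long as $u_i$ and $u_j$ retain a shared neighbor, the Score condition is violated. The only way to restore feasibility is to break every such wedge, and dropping the vertex $u_i$ currently under examination breaks all wedges incident to it. Therefore removing $u_i$ from $g$ discards no valid $(k,r,\sigma)$-bitruss, establishing safe prunability.

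The step I expect to be the main obstacle is justifying the \emph{asymmetric} decision to delete $u_i$ rather than $u_j$, since the inequality $score_{u_i,u_j}(g)<\sigma$ is symmetric and, taken in isolation, only forbids the two vertices from coexisting with a shared neighbor. I would resolve this by situating the lemma inside the iterative refinement of a candidate community: $u_i$ is the vertex being tested in the current peeling step, while $u_j$ (and any shared neighbor $v$) is evaluated in its own turn, so that every violating wedge is eventually eliminated and no vertex that could still belong to some feasible community is prematurely discarded. A secondary detail to pin down is that the precomputed bound $ub\_score_{u_i,u_j}(g)$ remains a valid upper bound across all sub-communities, which the monotonicity argument above already guarantees.
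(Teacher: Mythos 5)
Your first paragraph is exactly the paper's proof: since $ub\_score_{u_i, u_j}(g)$ upper-bounds $score_{u_i, u_j}(g)$, the hypothesis yields $score_{u_i, u_j}(g) \leq ub\_score_{u_i, u_j}(g) < \sigma$ by inequality transition, which violates the Score condition of Definition~\ref{def:k_r_sigma_bitruss} for the wedge-connected pair $(u_i, u_j)$, so $u_i$ is pruned. Your additional monotonicity and asymmetry discussion goes beyond what the paper records (its argument stops at the constraint violation and leaves the ``safety'' of deleting $u_i$ rather than $u_j$ implicit in the iterative refinement), but it does not change the approach --- it only makes explicit a justification the paper omits.
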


\nop{
\begin{proof}
\label{proof:score_upper_bound_pruning}
Given a candidate user vertex $u_i$ and a user vertex $u_j \in U(g)$, since $ub\_score_{u_i, u_j}(g)$ is a score upper bound, it holds that $score_{u_i, u_j}(g) \leq ub\_score_{u_i, u_j}(g)$. From the assumption of the lemma, if $ub\_score_{u_i, u_j}(g) < \sigma$ holds, by the inequality transition, we have $score_{u_i, u_j}(g)\leq ub\_score_{u_i, u_j}(g) < \sigma$, violating the score constraint of the $(k,r,\sigma)$-bitruss (given in Definition \ref{def:k_r_sigma_bitruss}). Thus, the candidate user vertex $u_i$ can be safely pruned, which completes the proof.
\end{proof}

\begin{proof}
Please refer to the proof in our technical report~\cite{zhang2024EffectiveCommunityDetection}.
\end{proof}
}

\noindent {\bf Discussions on How to Obtain the Score Upper Bound,\\ $ub\_score_{u_i, u_j}(g)$:} To enable the score upper bound pruning, we need to calculate the score upper bound, $ub\_score_{u_i,u_j}(g)$, for the user pair $(u_i, u_j)$ in subgraph $g$ that is connected to some common-neighbor item vertices with the desired (query) keywords. While Eq.~(\ref{eq:user_relationship_score}) computes the user relationship score $score_{u_i, u_j}(G_t)$, by considering all possible common-neighbor item vertices $v\in L(G_t)$ of users $u_i$ and $u_j$, and ignoring the constraint of query keywords (i.e., $Q$), the score $score_{u_i,u_j}(G_t)$ is essentially an upper bound of $score_{u_i,u_j}(g)$ (since $g\subseteq G$ holds). Therefore, we can use $score_{u_i,u_j}(G_t)$ as the score upper bound $ub\_score_{u_i, u_j}(g)$ (i.e., $ub\_score_{u_i, u_j}(g) = score_{u_i,u_j}(G_t)$).

\noindent {\bf Computation of the Score $score_{u_i, u_j}(G_t)$:} Now the only remaining issue is how to efficiently compute the user relationship score, $score_{u_i, u_j}(G_t)$, which is given by the summed score for all butterflies containing $u_i$ and $u_j$ by combining Eqs.~(\ref{eq:user_relationship_score}) and ~(\ref{eq:butterfly_score}).

Based on the \textit{multinomial theorem} \cite{bolton1968MultinomialTheorem} with power $n=2$, we have:
\vspace{-4ex}
\begin{eqnarray}
    \left(\sum_{i=1}^m {o_i}\right)^2 &=& \sum_{i=1}^m {o_i}^2 + 2\sum_{1\leq i < j \leq m}{o_i o_j},\notag
\end{eqnarray}
which can be rewritten as:
\vspace{-3ex}
\begin{eqnarray}
    \sum_{1\leq i <j \leq m}{o_i o_j} &=& \frac{1}{2}\left(\left(\sum_{i=1}^m {o_i}\right)^2 - \sum_{i=1}^m {o_i}^2\right).
    \label{eq:multinomial_theorem}
\end{eqnarray}

Based on Eq.~(\ref{eq:multinomial_theorem}), the user relationship score, $score_{u_i, u_j}(G_t)$, between $u_i$ and $u_j$ (as given in Eq.~(\ref{eq:user_relationship_score})) can be rewritten as:
\begin{eqnarray}
\label{eq:user_relationship_score_expansion}
    &&score_{u_i, u_j}(G_t) = \sum_{\forall a<b, v_a,v_b \in L(G_t)}{w_{\Join(u_i,u_j,v_a,v_b)}} \\
    &=& \sum_{\forall a<b, v_a,v_b \in L(G_t)}{w_{\angle(u_i, v_a, u_j)} \cdot {w_{\angle(u_i, v_b, u_j)}}}\notag\\
    &=& \frac{1}{2}\left(\left(\sum_{\forall v \in L(G_t)} {w_{\angle(u_i, v, u_j)}}\right)^2-\sum_{\forall v \in L(G_t)}\left(w_{\angle(u_i, v, u_j)}\right)^2\right)\notag\\
    &=& \frac{1}{2}\left(\left(\sum_{\forall v \in L(G_t)} \hspace{-2ex}{\min\{w_{u_i,v}, w_{u_j,v}\}}\right)^2\hspace{-1ex}-\hspace{-2ex}\sum_{\forall v \in L(G_t)}\hspace{-2ex}\left(\min\{w_{u_i,v}, w_{u_j,v}\}\right)^2\right)\notag\\
    &=& \frac{1}{2}\left(X_{u_i,u_j}^2 - Y_{u_i,u_j} \right).\notag
\end{eqnarray}

From Eq.~(\ref{eq:user_relationship_score_expansion}), in order to calculate the user relationship score $score_{u_i, u_j}(G_t)$ for each user pair $u_i$ and $u_j$, we only need to maintain two terms, $X_{u_i,u_j} = \sum_{\forall v \in L(G_t)}{\min\{w_{u_i,v}, w_{u_j,v}\}}$ and $Y_{u_i,u_j} = \sum_{\forall v \in L(G_t)}\left(\min\{w_{u_i,v}, w_{u_j,v}\}\right)^2$. Upon graph updates in stream $S$, we can incrementally and efficiently update $X_{u_i,u_j}$ and $Y_{u_i,u_j}$ online (as will be discussed later in Section~\ref{subsec:incremental_maintenance_graph}).



\vspace{-1ex}
\section{Auxiliary Data Initialization and Incremental Maintenance}
\label{sec:synopsis_and_graph_maintenance}
This section details the initialization stage (lines 1-4 of Algorithm~\ref{algo:the_solution_framework}) that computes auxiliary data over an initial bipartite graph $G_0$ in Section~\ref{subsec:data_pre_computation}, constructs a synopsis, $Syn$, for such auxiliary data in Section~\ref{subsec:synopsis_construction}, and incremental maintenance of auxiliary data and synopsis (lines 5-8 of Algorithm~\ref{algo:the_solution_framework}) in Sections~\ref{subsec:incremental_maintenance_graph} and ~\ref{subsec:incremental_maintenance_synopsis}, respectively.

\vspace{-2ex}
\subsection{Auxiliary Data Initialization}
\label{subsec:data_pre_computation}
To facilitate efficient online community detection, we will compute auxiliary data over the initial graph $G_0$ during the initialization phase, which will enable our proposed pruning strategies (as mentioned in Section~\ref{sec:pruning_strategies}).
\vspace{-1ex}
\begin{itemize}
    \item {\bf (Item Keyword Bit Vector)} a keyword bit vector, $v.BV$, of size $B$ for each item vertex $v \in L(G_0)$, whose elements $v.BV[i]$ are $1$ if a keyword in $v.K$ is hashed to the $i$-th position (otherwise, $v.BV[i]=0$);
    \item {\bf (Support Upper Bound)} an upper bound, $ub\_sup(e_{u,v})$ (as given by Eq.~(\ref{eq:edge_support_upper_bound})), of the support $sup(e_{u,v})$, for each edge $e_{u,v} \in E(G_0)$, and;
    \item {\bf (Aggregated Wedge Scores)} the summed wedge score, $X_{u_i,u_j}$ $ =\sum_{\forall v \in L(G_t)}$ $w_{\angle(u_i, v, u_j)}$, and the summation over the squares of wedge scores, $Y_{u_i,u_j} = $ $\sum_{\forall v \in L(G_t)} (w_{\angle(u_i, v, u_j)})^2$,  for each user pair $(u_i,u_j) \in U(G_0)\times U(G_0)$ and item vertices $v \in N(u_i)\cap N(u_j)$.
\end{itemize}


\vspace{-1ex}
\subsection{Incremental Maintenance of Auxiliary Data}
\label{subsec:incremental_maintenance_graph}
In this subsection, we illustrate how to incrementally maintain auxiliary data for each user vertex in graph $ G_t$ upon changes to the update items in the sliding window $W_t$. As mentioned in Section~\ref{subsec:streaming_bipartite_network}, the sliding window $W_t$ at a new timestamp $t$ (from $W_{t-1}$) needs to insert a new item $p_t$ and remove an expired old item $p_{t-s}$. Upon such updates, we need to incrementally compute the edge weights in the graph $G_t$, as well as the auxiliary data in the synopsis $Syn$.



\noindent {\bf Maintenance of Support Upper Bounds:} At timestamp $t$, due to the insertion of $p_t$ or the deletion of $p_{t-s}$, an edge $e_{u_i, v_a}$ can be added or deleted, respectively. Consequently, we need to update the support upper bounds of the edges in all butterflies $\Join(u_i, u_j, v_a, v_b)$ containing the edge $e_{u_i, v_a}$. When inserting a new edge $e_{u_i, v_a}$, for each edge $e$ in $\Join(u_i, u_j, v_a, v_b)$, we increase the support upper bound $ub\_sup(e)$ by 1. Similarly, when deleting an edge $e_{u_i, v_a}$, for each edge $e$ in $\Join(u_i, u_j, v_a, v_b)$, we decrease the support upper bound $ub\_sup(e)$ by 1.

\nop{

As an edge $e_{u_i, v_a}$ is added or removed, the supports of edges $e_{u_j, v_a}$, $e_{u_i, v_b}$ and $e_{u_j, v_b}$ increase/ decrease, where $u_j \in N(v_a), v_b\in N(u_i)\cap N(u_j)$. Therefore, the support upper bound can be updated as the Algorithm~\ref{algo:support_upper_bound_maintenance}.

\begin{algorithm}[!ht]
\caption{{\bf Support Upper Bound Maintenance}}\small
\label{algo:support_upper_bound_maintenance}
\KwIn{
    a streaming bipartite network $G_t$ with an ordered sequence of update items $S =\{p_1, p_2, \ldots, p_t, \ldots\}$, where each item $p_t = (e, t)$
}
\For{each timestamp $t$ ($>0$)}{
    \tcp{\bf Edge Addition}
    obtain $e_{u_i, v_a}$ from $p_t \in S$\;
    \tcp{\bf Search for possible wedges $\angle(u_i, v_a, u_j)$}
    \For{each $u_j \in N(v_a) \backslash u_i$}{
        $CN = N(u_i)\cap N(u_j)-\{v_a\}$\;
        $ub\_sup(e_{u_i, v_a}) = ub\_sup(e_{u_i, v_a}) + |CN|$\;
        $ub\_sup(e_{u_j, v_a}) = ub\_sup(e_{u_j, v_a}) + |CN|$\;
        \For{each $v_b \in CN$}{
            $ub\_sup(e_{u_i, v_b}) = ub\_sup(e_{u_i, v_b}) + 1$\;
            $ub\_sup(e_{u_j, v_b}) = ub\_sup(e_{u_j, v_b}) + 1$\;
        }
    }
    \tcp{\bf Edge removal}
    \If{$p_{t-s} \in S$} {
        obtain $e_{u_i', v_a'}$ from $p_{t-s} \in S$\;
        \tcp{\bf Search for possible wedges $\angle(u_i', v_a', u_j')$}
        \For{each $u_j' \in N(v_a') \backslash u_i'$}{
            $CN = N(u_i')\cap N(u_j')-\{v_a'\}$\;
            $ub\_sup(e_{u_i', v_a'}) = ub\_sup(e_{u_i', v_a'}) - |CN|$\;
            $ub\_sup(e_{u_j', v_a'}) = ub\_sup(e_{u_j', v_a'}) - |CN|$\;
            \For{each $v_b' \in CN$}{
                $ub\_sup(e_{u_i', v_b'}) = ub\_sup(e_{u_i', v_b'}) - 1$\;
                $ub\_sup(e_{u_j', v_b'}) = ub\_sup(e_{u_j', v_b'}) - 1$\;
            }
        }
    }
}
\end{algorithm}

{\color{blue}
In Algorithm~\ref{algo:support_upper_bound_maintenance}, for item $p_t$ (and $p_{t-s}$ if $p_{t-s} \in S$) at each time $t$, we obtain the inserted edge $e_{u_i, v_a}$ (lines 1-2). Each edge from $v_a$ to its neighbor $u_j$ forms a wedge $\angle(u_i, v_a, u_j)$ with edge $e_{u_i,v_a}$ (line 3), thus increment of $ub\_sup(e_{u_j, v_a})$ can be computed as the number, $|CN|$, of butterflies containing $\angle(u_i, v_a, u_j)$ (line 4-5). Next, the algorithm applies the same calculation to the support upper bound for the edges of the wedges containing $u_i$ and $u_l$ (lines 6-8). If there exists an expired edge, $e_{u_i', v_a'}$, from item $p_{t-s}$, the removal of $e_{u_i', v_a'}$ causes the support reduction on the edges, where the decrement can be computed as the number of butterflies containing wedges, $\angle(u_i', v_a', u_j')$, where each $u_j'$ is the neighbor of $v_a'$ and be applied to each edge containing in butterflies of $u_i'$ and $u_j'$ (lines 9-16). 
}

}

 \noindent {\bf Maintenance of User Relationship Scores:} Since the user relationship score $score_{u_i,u_j}(G_t)$ is calculated using $X_{u_i, u_j}$ and $Y_{u_i, u_j}$ (as given by Eq.~(\ref{eq:user_relationship_score_expansion})), we maintain these two terms incrementally. Specifically, we compute the differences, $\Delta X_{u_i,u_j}$ and $\Delta Y_{u_i,u_j}$, of $X_{u_i,u_j}$ and $Y_{u_i,u_j}$, respectively, between timestamps $(t-1)$ and $t$. Upon updating the edge $e_{u_i, v_a}$ at timestamp $t$, the edge weight $w_{u_i, v_a}$ has an increase/decrease of 1. Accordingly, the wedge weight $w_{\angle(u_i, v_a, u_j)}$ ($= min\{w_{u_i, v_a}, w_{u_j, v_a}\}$) is updated with $w_{\angle(u_i, v_a, u_j)}$ $+$ $ \lambda_{\angle(u_i, v_a, u_j)}$, where $\lambda_{\angle(u_i, v_a, u_j)}$$\in$$\{-1, 0, 1\}$ (depending on the change of $min\{w_{u_i, v_a}, w_{u_j, v_a}\}$).

Here, there are three possible cases for the $\lambda_{\angle(u_i, v_a, u_j)}$ value:
\begin{itemize}
    \item {\bf (Case 1)} if the weight $w_{u_i, v_a}$ of edge $e_{u_i, v_a}$ is increased by 1 and $w_{u_i, v_a} + 1 \leq w_{u_j, v_a}$, then we have $\lambda_{\angle(u_i, v_a, u_j)} = 1$;
    \item {\bf (Case 2)} if the weight $w_{u_i, v_a}$ of edge $e_{u_i, v_a}$ is decreased by 1 and $w_{u_i, v_a} - 1 < w_{u_j, v_a}$, then we have $\lambda_{\angle(u_i, v_a, u_j)} = -1$;
    \item {\bf (Case 3)} for the remaining case, we have $\lambda_{\angle(u_i, v_a, u_j)} = 0$.
\end{itemize}

\underline{\it The Computation of Terms $\Delta X_{u_i,u_j}$ and $\Delta Y_{u_i,u_j}$:} We provide the formulae of the two terms $\Delta X_{u_i,u_j}$ and $\Delta Y_{u_i,u_j}$ below.

\begin{eqnarray}
\label{eq:summed_wedge_score_increment}
    && \hspace{-2ex}\Delta X_{u_i,u_j} \\
    &\hspace{-2ex}=& \hspace{-2ex}X_{u_i,u_j}^{(t)}-X_{u_i,u_j}^{(t-1)}\notag\\
    &\hspace{-2ex}=& \hspace{-2ex}\left(\sum_{\forall v \in L(G), v \neq v_a}\hspace{-4ex}{w_{\angle(u_i, v, u_j)}} + (w_{\angle(u_i, v_a, u_j)} + \lambda_{\angle(u_i, v_a, u_j)})\right) - \hspace{-2ex}\sum_{\forall v \in L(G)}\hspace{-2ex}{w_{\angle(u_i, v, u_j)}}\hspace{-4ex} \notag\\
    &\hspace{-2ex}=& \hspace{-2ex}\lambda_{\angle(u_i, v_a, u_j)}, \notag
\end{eqnarray}
where $X_{u_i,u_j}^{(t)}$ and $X_{u_i,u_j}^{(t-1)}$ are the wedge-score-related data $X_{u_i,u_j}$ at timestamps $t$ and $(t-1)$, respectively.

\begin{eqnarray}
\label{eq:summed_square_of_wedge_score_increment}
    && \hspace{-2ex}\Delta Y_{u_i,u_j} \\
    &\hspace{-2ex}=& \hspace{-2ex}Y_{u_i,u_j}^{(t)}-Y_{u_i,u_j}^{(t-1)}\notag\\
    &\hspace{-2ex}=& \hspace{-2ex}\left(\sum_{\forall v \in L(G), v \neq v_a}\left(w_{\angle(u_i, v, u_j)}\right)^2 + (w_{\angle(u_i, v_a, u_j)} + \lambda_{\angle(u_i, v_a, u_j)})^2 \right) \notag\\
    && -\sum_{\forall v \in L(G)}\left(w_{\angle(u_i, v, u_j)}\right)^2 \notag\\
    &\hspace{-2ex}=& \hspace{-2ex}2\lambda_{\angle(u_i, v_a, u_j)} w_{\angle(u_i, v_a, u_j)} + \lambda_{\angle(u_i, v_a, u_j)}^2, \notag
\end{eqnarray}
where $Y_{u_i,u_j}^{(t)}$ and $Y_{u_i,u_j}^{(t-1)}$ are the wedge-score-related data $Y_{u_i,u_j}$ at timestamps $t$ and $(t-1)$, respectively.

As given in Eq.~(\ref{eq:summed_wedge_score_increment}) and Eq.~(\ref{eq:summed_square_of_wedge_score_increment}), for any user pair $(u_i, u_j)$, the increase/decrease of $X_{u_i,u_j}$ is equal to $\lambda_{\angle(u_i, v_a, u_j)}$, and the increment/decrement of $Y_{u_i,u_j}$ can be computed via $w_{\angle(u_i, v_a, u_j)}$ and $\lambda_{\angle(u_i, v_a, u_j)}$.

\noindent {\bf Time Complexity Analysis:} 
For an updated edge $e_{u_i, v_a}$, we update the support upper bound of edges connected to $v_a$ and $v_b$, where $v_b \in N(u_i)\cap N(u_j)-\{v_a\}, u_j \in N(v_a)$. Thus, the worst-case time cost is $O(|N(v_a)|\cdot|N(u_i)|)$. 
Upon the update of edge $e_{u_i, v_a}$, for any user pair $(u_i, u_j)$, the time complexity of updating $X_{u_i,u_j}$ and $Y_{u_i,u_j}$ is given by $O(1)$. The worst-case time complexity is given by $O(|N(v_a)|)$, where $N(v_a)$ is a set of user neighbors $u_j$ of the item vertex $v_a$.
Therefore, the total time complexity of the graph increment maintenance is given by $O(|N(v_a)|\cdot|N(u_i)|)$.




\nop{

\begin{algorithm}[!ht]
\caption{{\bf User Relationship Score Maintenance}}\small
\label{algo:user_relationship_score_maintenance}
\KwIn{
    \romannumeral1) a streaming bipartite network $G_t$ with an ordered sequence of update items $S =\{p_1, p_2, \ldots, p_t, \ldots\}$, where each item $p_t = (e, t)$
}

\For{each timestamp $t$ ($>0$)}{
    \tcp{\bf Edge insertion}
    obtain $e_{u_i, v_a}$ from $p_t \in S$ (and $p_{t-s}$ if existing in $S$)\;
    \For{each $u_j \in N(v_a) \backslash u_i$}{
        \eIf{$w_{u_i, v_a} < w_{u_j, v_a}$}{
            \lIf{$e_{u_i, v_a}$ is from $p_t$}{$\lambda = 1$}
            \lElse(\tcp*[h]{\bf $e_{u_i, v_a}$ is from $p_{t-s}$}){$\lambda = -1$}
        }(\tcp*[h]{\bf $w_{\angle(u_i, v_a, u_j)}$ is unaltered}){
            $\lambda = 0$\;
        }
        $X_{u_i, u_j} = X_{u_i, u_j} + \lambda$\;
        $Y_{u_i, u_j} = Y_{u_i, u_j} + 2\lambda w_{\angle(u_i, v_a, u_j)} + \lambda^2$\;
    }
}
\end{algorithm}

{\color{blue}
Algorithm~\ref{algo:user_relationship_score_maintenance} firstly obtains the inserted/expired $e_{u_i,v_a}$ from item $p_t$ (and $p_{t-s}$ if $p_{t-s} \in S$) at each time $t$ (lines 1-2). Since the edge weight, $w_{u_i,v_a}$, of $e_{u_i,v_a}$ will augment/decay $1$ for the insertion/expiration, according to Eq.~(\ref{eq:wedge_weight}), the wedge scores of each wedge contain $e_{u_i,v_a}$ increase/decrease from $w_{\angle(u_i, v_a, u_j)}$ to $w_{\angle(u_i, v_a, u_j)} + \lambda$, where the increment/decrement $\lambda$ may be one of $\{-1, 0, 1\}$ (lines 3-8). Based on Eq.~(\ref{eq:user_relationship_score_expansion}), we can compute the updated user relationship score by subtracting the square of $X_{u_i,u_j}$ and $Y_{u_i,u_j}$ with the updated wedge score (lines 9-10), which can be calculated as the following equations:

\begin{eqnarray}
\label{eq:summed_wedge_score_increment}
    && \Delta X_{u_i,u_j} \\
    &=& \hspace{-2ex}\sum_{\forall v \in L(G), v \neq v_a}\hspace{-4ex}{w_{\angle(u_i, v, u_j)}} + (w_{\angle(u_i, v_a, u_j)} + \lambda) - \hspace{-2ex}\sum_{\forall v \in L(G)}\hspace{-2ex}{w_{\angle(u_i, v, u_j)}} \notag\\
    &=& \lambda. \notag
\end{eqnarray}

\begin{eqnarray}
\label{eq:summed_square_of_wedge_score_increment}
    && \Delta Y_{u_i,u_j} \\
    &=& \left(\sum_{\forall v \in L(G), v \neq v_a}\left(w_{\angle(u_i, v, u_j)}\right)^2 + (w_{\angle(u_i, v_a, u_j)} + \lambda)^2 \right) \notag\\
    &-& \hspace{-2ex}\sum_{\forall v \in L(G)}\left(w_{\angle(u_i, v, u_j)}\right)^2 \notag\\
    &=& 2\lambda w_{\angle(u_i, v_a, u_j)} + \lambda^2. \notag
\end{eqnarray}

In Eq.~(\ref{eq:summed_wedge_score_increment}), the increment/decrement of $X_{u_i,u_j}$ is equal to the increment of $w_{\angle(u_i, v_a, u_j)}$ because each edge is only contained by one wedge with two definite user vertices. On the other hand, the increment/decrement of $Y_{u_i,u_j}$ can be computed by $w_{\angle(u_i, v_a, u_j)}$ and $\lambda$. Since it takes $O(1)$ to update $X_{u_i,u_j}$ and $Y_{u_i,u_j}$, we can also maintain the relationship score in $O(1)$.

\noindent\underline{\it The Maintenance for Edge Addition/Removal:}
If the $e_{u_i,v_a}$ is a new edge and the edge is added/removed from the graph for the insertion/expiration, we can also use Eq.~(\ref{eq:summed_wedge_score_increment}) and Eq.~(\ref{eq:summed_square_of_wedge_score_increment}) to update the user relationship score between $u_i$ and each user vertex $u_j \in N(v_a)$.
If addition, it means the wedge $\angle(u_i, v_a, u_j)$ appears and its weight $w_{\angle(u_i, v_a, u_j)}$ change from $0$ to $1$. If expiration, the wedge $\angle(u_i, v_a, u_j)$ is broken with eight $w_{\angle(u_i, v_a, u_j)}$ reducing from $1$ to $0$. After computing $\lambda$ of $w_{\angle(u_i, v_a, u_j)}$, the updated user relationship score can be calculated via $X_{u_i,u_j}$ and $Y_{u_i,u_j}$.

\noindent\underline{\it The Maintenance for User/Item Addition:}
If the $e_{u_i,v_a}$ is a new edge with a new user, since the relationship score between the new user $u_i$ and each existing user $u_j \in N(v_a)$ needs to be computed, we maintain $X_{u_i,u_j}$ and $Y_{u_i,u_j}$ for each pair of them. There is no need to update the existing relationship score because no new butterfly constructs have been added. On the other hand, if the $e_{u_i,v_a}$ is a new edge with a new item, no new wedge constructs, so that $X_{u_i,u_j}$ and $Y_{u_i,u_j}$ are unaltered. Therefore, the user relationship score is unaltered in both conditions.
}

}

\subsection{The Synopsis Construction}
\label{subsec:synopsis_construction}

We next present a hierarchical tree synopsis, $Syn$, on auxiliary data for dynamic graph $G_t$ (i.e., initial graph $G_0$ or its subsequent version at timestamp $t$) to accelerate online CD-SBN query processing.

\noindent{\bf The Data Structure of Synopsis, $Syn$:}
We design a hierarchical synopsis, $Syn$, over the initial bipartite graph $G_0$, where each synopsis node, $M$, has multiple entries $M_i$, and each of them corresponds to a partition of $G_0$. In detail, the hierarchical synopsis consists of two types of nodes: leaf and non-leaf nodes.


\underline{\it Leaf Nodes:} Each leaf node $M$ contains entries, each in the form of a user vertex $u \in U(G_t)$, associated with auxiliary data, $u.agg$. For each possible radius $r\in [1, r_{max}]$, we calculate and store aggregates in the form $(u.BV_r, u.ub\_sup_r, u.ub\_score_r)$ as follows:
\begin{itemize}
    \item a keyword bit vector, $u.BV_r$, of user vertex $u$, which is a bit-OR of bit vectors $v_i.BV$ for all item vertices $v_i$ in $(2r)$-hop subgraph of $u$ (i.e., $u.BV_r = \bigvee_{\forall v_i \in hop(u, 2r))} v_i.BV$);
    \item a support upper bound, $u.ub\_sup_r$ of support upper bounds for edges in the $(2r)$-hop subgraph of $u$ (i.e.,$u.ub\_sup_r = max_{\forall e \in hop(u, 2r)}{ub\_sup(e)}$), and;
    \item a user relationship score upper bound, $u.ub\_score_r$, of scores, $score_{u_i, u_j}(G_t)$, for all user pairs $(u_i, u_j)$ within the $(2r)$-hop subgraph of user $u$ (i.e., $u.ub\_score_r =  max_{\forall u_i, u_j \in hop(u, 2r)} \\score_{u_i, u_j}(G_t)$).
\end{itemize}

\underline{\it Non-Leaf Nodes:} Each non-leaf node $M$ in the synopsis $Syn$ contains entries $M_i$, each associated with the following aggregated data $M_i.agg$ (w.r.t. each possible radius $r \in [1, r_{max}]$).
\begin{itemize}
    \item an aggregated keyword bit vector $M_i.BV_r = \bigvee_{\forall u\in M_i} u.BV_r$;
    \item the maximum support upper bound $M_i.ub\_sup_r = \\ max_{\forall u \in M_i}{u.ub\_sup_r}$;
    \item the maximum score upper bound $M_i.ub\_score_r = \\ max_{\forall u \in M_i}{u.ub\_score_r}$, and;
    \item a pointer, $M_i.ptr$, pointing to a child node.
\end{itemize}

\noindent{\bf Bottom-up Construction of the Synopsis $\mathbf{Syn}$:}
We construct the hierarchical synopsis, $Syn$, as follows. For the initial bipartite graph $G_0$ (i.e., timestamp $t=0$), we start from each vertex $u_i \in U(G_0)$ and perform a breadth-first search (BFS) to extract a $(2r)$-hop subgraph, $hop(u_i, 2r)$, centered at $u_i$ with radius $2r$ (for $r\in [1, r_{max}]$). Next, for each subgraph $hop(u_i, 2r)$, we compute its auxiliary data, $u_i.agg$. We sort the vertices in $G_0$ according to their summed score upper bounds for all possible radii (i.e., $\sum_{\forall r\in [1, r_{max}]}{u.ub\_score_r}$), and divide them into multiple partitions of the same size $\gamma$ (i.e., vertices with similar score upper bounds are grouped together), which are leaf nodes of the synopsis $Syn$ (as discussed in Section~\ref{subsec:synopsis_construction}). Then, in a bottom-up manner, we recursively obtain non-leaf nodes $M$ of the synopsis $Syn$, by grouping leaf or non-leaf nodes $M_i$ (of the same size) on the lower level. The score upper bound associated with node $M$ is computed by aggregating the score upper bounds of its child nodes (i.e., $\sum_{\forall r\in [1, r_{max}]}$ $M_i.ub\_score_r$). The recursive synopsis construction terminates until one root node, $root(Syn)$, is obtained. After building the tree structure of the synopsis, we will recursively calculate aggregates, $M_i.agg$, of entries $M_i$ in leaf/non-leaf nodes of synopsis $Syn$. In addition, we keep an inverted list, $inv\_list$, for user vertices $u_i \in U(G_t)$, where each user vertex $u_i$ is associated with an ordered list of node IDs from root to a leaf node in synopsis $Syn$ in which $u_i$ resides. This inverted list, $inv\_list$, can be used for accessing a particular user vertex for the synopsis maintenance (as discussed in Section~\ref{subsec:incremental_maintenance_synopsis}).



\noindent{\bf Complexity Analysis:}
For a tree index $\mathcal{I}$, we denote the fanout of each non-leaf node $N$ as $\gamma$ and the average number of vertex degrees as $avg\_deg$. To compute the auxiliary data that we mentioned above, it takes $O(2r_{max})$ to aggregate the item keyword bit vector ($v_i.BV_r$), support upper bound ($ub\_sup_r$), and user relationship score upper bound $ub\_score_r$ from child nodes.
Since the number of leaf nodes is equal to the number of user vertices $|U(G)|$ divided by $\gamma$, the depth of the tree index $\mathcal{I}$ is $\lceil \log_{\gamma}{|U(G)|} \rceil$. The time complexity of constructing a $\gamma$-ary tree with a depth of $dep$ is $O(\gamma^{dep}-1)/gamma-1$. Therefore, the total time complexity of our tree index construction is given by $O(2r_{max}\cdot(\gamma^{\lceil \log_{\gamma}{|U(G)|} \rceil}-1)/(\gamma-1))$.

For the space complexity, each leaf node contains a keyword bit vector of size $B$, a support upper bound, and a user relationship score upper bound. Therefore, the space complexity of all leaf nodes is $O(2r_{max}\cdot B \cdot |V(G)|/\gamma)$. On the other hand, for each non-leaf node, it needs a bit vector sized of $B$, a support upper bound, a user score upper bound, and a list of pointers with size $\gamma$ to its child nodes. Since the depth of the tree index is $\lceil \log_{\gamma}{|V(G)|} \rceil$,  the space complexity of all non-leaf nodes is $O(2r_{max} \cdot (B+\gamma) \cdot (\gamma^{\lceil \log_{\gamma}{|V(G)|} \rceil-1}-1)/(\gamma-1))$

Overall, the space cost of our tree index is $O(2r_{max} \cdot ((B+\gamma) \cdot (\gamma^{\lceil \log_{\gamma}{|V(G)|} \rceil-1}-1/(\gamma-1)) + B \cdot |V(G)|/\gamma)))$.

\subsection{Incremental Maintenance of the Synopsis}
\label{subsec:incremental_maintenance_synopsis}
In this subsection, we discuss how to identify the affected vertices and nodes in the synopsis $Syn$ and update their aggregates in $Syn$. 

\noindent {\bf Identification of the Affected Vertices and Synopsis Nodes:} 
When an item from $S$ arrives or expires in $W_t$ (i.e., an edge $e_{u_i, v_a}$ changes its weight $w_{u_i, v_a}$ or is inserted/removed into/from graph $G_{t-1}$), the auxiliary data, $u.agg$, of some user vertices need to be updated. Therefore, our first goal is to identify those potentially affected vertices (with aggregate updates), which are user vertices in the $(2r)$-hop subgraph of $u_i$ for all possible radii $r\in[1,r_{max}]$.

Specifically, if the weight of edge $e_{u_i, v_a}$ is updated, we will consider those users $u \in hop(u_i, 2r)$ within $(2r)$ hops away from $u_i$ as the affected vertices, which need to update keyword bit vectors $BV_r$. Moreover, since the support of each edge $e$ in $\Join(u_i, u_j, v_a, v_b)$ may increase/decrease, user vertices, $u$, within $(2r)$-hop from $u_i$ may be affected (as $(2r)$-hop subgraph of $u$ may contain those edges $e$). Moreover, the user relationship score between $u_i$ and each vertex $u_j \in N(v_a)$ may change. Thus, we need to update the scores of user vertices $u_j$ whose $(2r)$-hop subgraphs contain both $u_i$ and $u_j$.

Finally, in synopsis $Syn$, we identify those (non-)leaf nodes that contain the affected user vertices and update their aggregated data.

\nop{
\noindent {\bf The Maintenance of Aggregated Data:}
After identifying the affected vertices and nodes via $BV_r$, $ub\_sup_r$, and $ub\_score_r$, respectively, we maintain their aggregates for all possible radii $r\in[1,r_{max}]$ by updating to the new maximum value for the edge insertion condition and re-aggregating the data for the edge expiration condition.

\underline{\it Edge Insertion:} 
For each vertex $u$ affected via $BV_r$, we update the $u.BV_r = u.BV_r \bigvee u_i.BV$ and assign $u.BV_r$ to $M.BV_r$ for each affected node $M$ containing $u$.
Next, for each $u$ affected via $ub\_sup_r$, we maintain the maximum support upper bound of $u$ and affected nodes containing $u$ ($u.ub\_sup_r$ and $M.ub\_sup_r$) as the maximum value between the current value and the alter edge support upper bound $ub\_sup(e_{u_j, v_b})$.
Finally, we compare $u.ub\_score_r$ of vertices affected via $ub\_score_r$ and $M.ub\_score_r$ of affected nodes containing $u$ with $ub\_score_{u_i, u_j}(G_t)$ and assign the higher value as the current value.

\underline{\it Edge Expiration:}
For the expiration condition, we re-aggregate the keyword bit vector of vertices, $u$, affected via $BV_r$ as $u.BV_r = \bigvee_{\forall u_l \in hop(u, 2r))} u_l.BV$.
To re-compute the $ub\_sup_r$ of vertices affected by $ub\_sup(e_{u_j, v_b})$, we maintain the $u.ub\_sup_r$ as the maximum support upper bound in each $(2r)$-hop subgraph containing $e_{u_j, v_b}$.
Finally, we calculate $u.ub\_score_r$ as the maximum score upper bound in $hop(u, 2r)$ for each vertex $u$ affected via $ub\_score$.
To avoid unnecessary calculations, we re-aggregate the auxiliary data, $M.agg$, for the affected nodes only when the re-aggregated data for the lower-level affected node changes (i.e., is not equal to the previous value).
}


\noindent {\bf Maintenance of Leaf and Non-leaf Nodes:}
After identifying the affected vertices/nodes, we update their aggregates for all possible radii $r\in[1,r_{max}]$. Assume that an edge $e_{u_i, v_a}$ is inserted/expired, and $u$ is an affected vertex.
For the edge insertion, we merge $v_a.BV$ into $u.BV_r$, update $u.ub\_sup_r$ and $u.ub\_score_r$ for each affected vertex $u$, and update aggregates of the affected synopsis nodes $M$ containing $u$ (via $inv\_list$).
For the edge expiration, we re-aggregate the data, $u.BV_r$, $u.ub\_sup_r$, and $u.ub\_score_r$, from $(2r)$-hop of each affected vertex $u$, and we re-aggregate the data in the affected nodes containing $u$ (if the re-aggregated data on the lower-level affected nodes have changed).

\noindent {\bf Maintenance of the Newly Added User Vertices:} 
When an edge is inserted with a new user vertex $u_{new}$, we first compute the aggregates, $u_{new}.agg$ for this new user $u_{new}$ (within $(2r)$ hops from $u_{new}$) and add $u_{new}$ to a leaf node $M_i$ in synopsis $Syn$ (containing a user vertex with the closest score upper bound). Next, we update the aggregate data of non-leaf nodes containing leaf node $M_i$.

\noindent {\bf Time Complexity Analysis:}
Assume that an edge $e_{u_i, v_a}$ is inserted/expired. To identify affected vertices, it takes $O(|U(hop(u_i, 2r)|)$ to obtain vertices that may need to update $BV_r$, $u.ub\_sup_r$, and $u.ub\_score_r$. After that, we check $inv\_list$ to retrieve the affected nodes containing the affected vertices with the time complexity of $O(1)$.
When edge $e_{u_i,v_a}$ is inserted, it takes $O(1)$ to merge $v_a.BV$ into $u.BV_r$, maintain support upper bound $u.ub\_sup_r$ and score upper bound $u.ub\_score_r$ for each affected vertex $u$. Then, we update the aggregates in leaf and non-leaf nodes containing $u$ with a cost of $O(1)$. For the edge expiration, it takes $O(|U(hop(u,$ $2r))|)$ time complexity to aggregate the bit vector of user keywords, calculate support upper bound $u.ub\_sup_r$, and update score upper bound $u.ub\_score_r$ according to the $(2r)$-hop subgraph of each affected vertex $u$. To maintain non-leaf nodes, we recompute the aggregates for each non-leaf node, which takes $O(\gamma)$, where $\gamma$ is the average size of synopsis nodes.
In the worst case, we can find $|U(hop(u_i, 2r))|$ affected vertices and $h$ synopsis nodes for each vertex, where $h$ is the height of the tree synopsis $Syn$. In summary, it takes $O(|U(hop(u_i, 2r))| + h)$ for each edge insertion and $O(|U(hop(u_i, 2r))|\cdot|U(hop(u, 2r))| + h \cdot \gamma)$ for each edge expiration.

\section{CD-SBN query Processing}
\label{sec:query_processing}


\subsection{Pruning via Synopsis}
\label{subsec:pruning_via_synopses}


\noindent{\bf Keyword Pruning for Synopsis Entries:}
We first discuss how to use the aggregated keyword bit vector, $M_i.BV_r$, to rule out a synopsis entry $M_i$, if no query keyword in $Q$ exists in the vertices under $M_i$. \nop{{\color{blue} Please refer to complete proofs in our technical report on Arxiv~\cite{zhang2024EffectiveCommunityDetection}.}}

\begin{lemma}
\label{lemma:synopsis_keyword_pruning}
    {\bf (Synopsis Keyword Pruning)} Given a synopsis entry $M_i$ and a set, $Q$, of query keywords, entry $M_i$ can be safely pruned, if it holds $M_i.BV_r \bigwedge Q.BV = \boldsymbol{0}$, where $Q.BV$ is a bit vector hashed from query keywords in $Q$.
\end{lemma}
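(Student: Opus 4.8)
The plan is to argue by contradiction: I would show that whenever the bitwise-AND test evaluates to $\boldsymbol{0}$, no item vertex reachable under $M_i$ can carry a query keyword, so by the keyword relevance constraint of Definition~\ref{def:cd_sbn_problem} (together with Lemma~\ref{lemma:keyword_pruning}) the entry $M_i$ cannot contribute any valid CD-SBN community and is safe to discard.

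First I would unfold the definitions of the aggregated bit vectors. By construction $M_i.BV_r = \bigvee_{\forall u \in M_i} u.BV_r$ and $u.BV_r = \bigvee_{\forall v \in hop(u, 2r)} v.BV$, so $M_i.BV_r$ is the coordinate-wise OR of $v.BV$ taken over every item vertex $v$ lying in the $(2r)$-hop subgraph of some user $u$ under $M_i$. Because bitwise-AND distributes over this OR coordinate-wise, the hypothesis $M_i.BV_r \bigwedge Q.BV = \boldsymbol{0}$ forces $v.BV \bigwedge Q.BV = \boldsymbol{0}$ for each such item vertex $v$: at every bit position $p$ with $Q.BV[p]=1$ the aggregate satisfies $M_i.BV_r[p]=0$, hence every contributing $v.BV[p]$ must equal $0$.

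Next I would invoke the no-false-negative property of the keyword hashing. Suppose, for contradiction, that some item vertex $v$ covered by $M_i$ had $v.K \cap Q \neq \emptyset$, witnessed by a keyword $w \in v.K \cap Q$. Since $w \in v.K$, hashing sets $v.BV[h(w)]=1$, and since $w \in Q$ it likewise sets $Q.BV[h(w)]=1$; therefore $(v.BV \bigwedge Q.BV)[h(w)] = 1$, contradicting the previous paragraph. Thus every item vertex $v$ under $M_i$ satisfies $v.K \cap Q = \emptyset$. Any candidate community produced from $M_i$ draws its item vertices only from these $(2r)$-hop subgraphs, so by Lemma~\ref{lemma:keyword_pruning} all of them are removed; with no surviving item vertex there is no butterfly and no keyword-relevant $(k,r,\sigma)$-bitruss under $M_i$, which justifies pruning the entry.

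The main obstacle I expect is precisely the hashing step: one must make explicit that the bit-vector test is \emph{conservative}, i.e., a hash collision can only produce a false positive (a nonzero AND without a genuinely shared keyword) but never a false negative, so that a zero AND is a sound certificate of keyword absence. The distributivity of AND over the aggregating OR is routine, but it should be stated carefully enough that the per-vertex conclusion $v.BV \bigwedge Q.BV = \boldsymbol{0}$ is rigorously derived from the aggregate-level hypothesis before the contradiction is drawn.
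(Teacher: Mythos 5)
Your proof is correct and follows essentially the same route as the paper's: unfold $M_i.BV_r$ as a bitwise OR over the item vertices covered by $M_i$, observe that a zero AND with $Q.BV$ is a sound certificate that no covered item vertex carries a query keyword (hash collisions can only yield false positives), and conclude from the keyword-relevance constraint of Definition~\ref{def:cd_sbn_problem} that $M_i$ contributes no answer. You are in fact more careful than the paper, which compresses the two-level aggregation (over users $u\in M_i$ and then over item vertices in $hop(u,2r)$) and the no-false-negative property of the hashing into a single sentence.
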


\begin{proof}
\label{proof:synopsis_keyword_pruning}
Since $M_i.BV$ is an aggregated keyword bit vector computed by $\bigvee_{\forall v\in M_i} v.BV$, the $f(w)$-th bit position of $M_i.BV_r$ with the value $0$ means that the keyword $w$ is not contained in the keyword set, $v.K$, of any item vertex $v$, where $v \in N(u)$ and the $f(\cdot)$ is a hash function. Therefore, if $M_i.BV_r \wedge Q.BV = \boldsymbol{0}$ holds, no query keywords in $Q$ belong to $M_i$, which violates the keyword constraint in Definition \ref{def:cd_sbn_problem} (i.e., $v.K \cap Q \ne \emptyset$). In this case, the synopsis entry $M_i$ cannot contain any vertices in the CD-SBN query answer, and thus it can be safely pruned, completing the proof. 
\end{proof}

\nop{
\begin{proof}
Please refer to the proof in our technical report~\cite{zhang2024EffectiveCommunityDetection}.
\end{proof}
}

\noindent{\bf Support Pruning for Synopsis Entries:}
Next, we utilize the maximum support upper bound, $M_i.ub\_sup_r$, and the query support threshold $k$ to filter out a synopsis entry $M_i$. 


\begin{lemma}
\label{lemma:synopsis_support_pruning}
    {\bf (Synopsis Support Pruning)} Given a synopsis entry $M_i$ and a support threshold $k$, entry $M_i$ can be safely pruned, if it holds that $M_i.ub\_sup_r < k$, where $M_i.ub\_sup_r$ is the maximum edge support upper bound in all $(2r)$-hop subgraphs of vertices under $M_i$.
    \label{lemma:index_support_pruning}
\end{lemma}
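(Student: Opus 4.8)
The plan is to reduce this synopsis-level statement to the edge-level \textbf{Support Pruning} already established in Lemma~\ref{lemma:support_pruning}, exploiting the fact that $M_i.ub\_sup_r$ is defined as a \emph{maximum}. Intuitively, if even the largest support upper bound among all candidate edges represented by $M_i$ falls below $k$, then every such edge must fail the support test, so no vertex beneath $M_i$ can anchor a valid community.

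First I would unfold the recursive definition of the aggregate. For a leaf entry, $u.ub\_sup_r = \max_{\forall e \in hop(u,2r)} ub\_sup(e)$, and for a non-leaf entry, $M_i.ub\_sup_r = \max_{\forall u \in M_i} u.ub\_sup_r$; composing these gives $M_i.ub\_sup_r = \max_{\forall u \in M_i}\max_{\forall e \in hop(u,2r)} ub\_sup(e)$, i.e., the maximum of $ub\_sup(e)$ over every edge $e$ in every $(2r)$-hop subgraph of the vertices under $M_i$. Hence, under the hypothesis $M_i.ub\_sup_r < k$, each such edge $e$ satisfies $ub\_sup(e) \le M_i.ub\_sup_r < k$, and since $ub\_sup(e)$ upper-bounds $sup(e)$, we obtain $sup(e) \le ub\_sup(e) < k$ by the same inequality transition used in Lemma~\ref{lemma:support_pruning}.

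Next I would invoke Definition~\ref{def:k_r_sigma_bitruss} to close the argument. Any $(k,r,\sigma)$-bitruss $g$ rooted at a center $u_c$ under $M_i$ must, by the radius constraint, have all its user vertices within distance $2r$ of $u_c$, so $g$ is confined to $hop(u_c,2r)$; by the support constraint, every edge of $g$ lies in at least $k$ butterflies, i.e., $sup(e) \ge k$. These two facts are contradictory, because we just showed every edge in $hop(u_c,2r)$ has $sup(e) < k$. Therefore no vertex under $M_i$ can serve as a community center, and $M_i$ is safely prunable.

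The main obstacle is the containment step: I must verify that \emph{every} edge of a candidate community $g$ centered at $u_c$ genuinely lies within the subgraph $hop(u_c,2r)$ over which the aggregate ranges, rather than merely its user vertices. The radius constraint bounds only user-vertex distances, so the delicate point is to argue that any truss edge $e_{u,v}$ with $sup(e_{u,v}) \ge k \ge 1$ forces its item endpoint $v$ to be shared by a second in-radius user, keeping $v$ (and hence the edge) inside $hop(u_c,2r)$; boundary items attached to a single user carry support $0$ and are irrelevant to a $(k,r,\sigma)$-bitruss. Once this confinement is secured, the nested-maximum structure guarantees that $M_i.ub\_sup_r$ dominates the support upper bound of every relevant edge, and the reduction to Lemma~\ref{lemma:support_pruning} completes the proof uniformly for both leaf and non-leaf entries.
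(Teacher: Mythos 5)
Your proposal is correct and follows essentially the same route as the paper: unfold $M_i.ub\_sup_r$ as the nested maximum of $ub\_sup(e)$ over all edges in the $(2r)$-hop subgraphs of vertices under $M_i$, chain the inequalities $sup(e)\le ub\_sup(e)\le M_i.ub\_sup_r<k$, and conclude via the support constraint of Definition~\ref{def:k_r_sigma_bitruss} that no community can be anchored under $M_i$. The containment subtlety you flag (whether every edge of a candidate centered at $u_c$ truly lies in $hop(u_c,2r)$, given that the radius constraint bounds only user-vertex distances) is simply glossed over in the paper's own proof, so your extra care there is a refinement rather than a divergence.
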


\begin{proof}
\label{proof:synopsis_support_pruning}
    Since $M_i.ub\_sup_r$ is the maximum edge support upper bound for all vertices $u \in M_i$, it holds that $M_i.ub\_sup_r \geq u.ub\_sup_r$. Moreover, $u.ub\_sup_r$ is greater than the support of any edge in the $2r$-hop subgraph of $u$. Therefore, if $M_i.ub\_sup_r < k$ holds, the support of any edge connected to $u \in M_i$ is less than $k$ (for any user vertex $u \in M_i$), which indicates that $M_i$ can be safely pruned. 
\end{proof}

\nop{
\begin{proof}
Please refer to the proof in our technical report~\cite{zhang2024EffectiveCommunityDetection}.
\end{proof}
}

\noindent{\bf Score Upper Bound Pruning for Synopsis Entries:}
Finally, we exploit the maximum upper bound, $M_i.ub\_score_r$, of user relationship scores and the score threshold $\sigma$ to discard synopsis entries.


\begin{lemma}
\label{lemma:synopsis_score_upper_bound_pruning}
    {\bf (Synopsis Score Upper Bound Pruning)} Given a synopsis entry $M_i$ and a user relationship score threshold $\sigma$, entry $M_i$ can be safely pruned, if it holds that $M_i.ub\_score_r < \sigma$.
\end{lemma}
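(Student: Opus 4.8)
The plan is to mirror the argument used for Synopsis Support Pruning (Lemma~\ref{lemma:synopsis_support_pruning}), building a chain of inequalities that descends from the entry-level aggregate $M_i.ub\_score_r$ down to the actual in-community relationship score of any candidate community that could be rooted at a vertex under $M_i$. Since $M_i.ub\_score_r$ is defined as a maximum, the hypothesis $M_i.ub\_score_r < \sigma$ should force every such score below $\sigma$, which violates the score constraint of the $(k,r,\sigma)$-bitruss (Definition~\ref{def:k_r_sigma_bitruss}) for every community that $M_i$ could produce.

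First I would invoke the definition of the aggregate, $M_i.ub\_score_r = \max_{\forall u \in M_i} u.ub\_score_r$, to obtain $M_i.ub\_score_r \geq u.ub\_score_r$ for every user vertex $u$ under $M_i$. Next I would unfold the leaf-level definition $u.ub\_score_r = \max_{\forall u_a, u_b \in hop(u, 2r)} score_{u_a, u_b}(G_t)$, giving $u.ub\_score_r \geq score_{u_a, u_b}(G_t)$ for every user pair $(u_a, u_b)$ within the $(2r)$-hop subgraph of $u$. Chaining the two inequalities yields $M_i.ub\_score_r \geq score_{u_a, u_b}(G_t)$ for all such pairs. I would then recall, from the discussion preceding Lemma~\ref{lemma:score_upper_bound_pruning}, that $score_{u_a, u_b}(G_t)$ itself upper-bounds the in-community score, i.e.\ $score_{u_a, u_b}(g) \leq score_{u_a, u_b}(G_t)$ for any candidate community $g$ drawn from the $(2r)$-hop neighborhood. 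Combining the full chain with the lemma hypothesis gives $score_{u_a, u_b}(g) \leq M_i.ub\_score_r < \sigma$ for every relevant user pair.

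Finally, I would close exactly as in the single-vertex Score Upper Bound Pruning (Lemma~\ref{lemma:score_upper_bound_pruning}): a valid $(k,r,\sigma)$-bitruss requires every wedge-sharing user pair to satisfy $score_{u_a, u_b}(g) \geq \sigma$, yet we have shown every such pair falls strictly below $\sigma$. Hence no vertex under $M_i$ can anchor a qualified community, so the entry $M_i$ can be safely pruned, completing the proof.

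I expect the main obstacle to be expository rather than mathematical: I must take the maxima over exactly the right index sets so that the inequalities genuinely compose, and I must make explicit that the bound holds uniformly over all candidate centers $u$ under $M_i$ and the query-specified radius $r$. One point worth stating carefully is that Definition~\ref{def:k_r_sigma_bitruss} quantifies the score constraint only over user pairs that share a wedge; since any such pair lies inside the $(2r)$-hop subgraph that the aggregate ranges over, the chain still covers it, so restricting to wedge pairs does not weaken the conclusion.
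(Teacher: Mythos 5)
Your proposal is correct and follows essentially the same route as the paper's proof: a transitive chain of inequalities from the aggregate $M_i.ub\_score_r$ down through $u.ub\_score_r$ to the per-pair scores, concluding that the hypothesis forces every pair below $\sigma$ and thus violates the score constraint of Definition~\ref{def:k_r_sigma_bitruss}. Your explicit intermediate step $score_{u_a,u_b}(g) \leq score_{u_a,u_b}(G_t)$ is a welcome addition of rigor that the paper's proof leaves implicit when it jumps from $score_{u,u_l}(G_t) < \sigma$ to a violation of the in-community constraint.
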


\begin{proof}
\label{proof:synopsis_score_upper_bound_pruning}
    Since $M_i.ub\_score_r$ is the maximum score upper bound, it holds that $M_i.ub\_score_r \geq u.ub\_score_r$, where $u.ub\_score_r$ is the user relationship score upper bound of any user vertex $u \in M_i$. Moreover, it holds that $u.ub\_score_r \geq score_{u, u_l}(G_t)$, where $score_{u, u_l}(G_t)$ is the user relationship score between $u$ and its $2$-hop reachable user vertex $u_l$. 
    By the inequality transition, we have $M_i.ub\_score_r \geq u.ub\_score_r\geq score_{u, u_l}(G_t)$. If $M_i.ub\_score_r < \sigma$ holds, then we have $score_{u, u_l}(G_t)<\sigma$ for all $u$ under $M_i$, which violates the score constraint. Therefore, in this case, we can safely prune entry $M_i$,  which completes the proof. 
\end{proof}

\nop{
\begin{proof}
Please refer to the proof in our technical report~\cite{zhang2024EffectiveCommunityDetection}.
\end{proof}
}

\subsection{Snapshot CD-SBN Processing Algorithm}
\label{subsec:snapshot_processing_algorithm}

Algorithm~\ref{algo:snapshot_cd_sbn_algorithm} details our snapshot CD-SBN processing algorithm to obtain initial CD-SBN community answers, which consists of three major parts: i) the initialization of data structures and variables for preparing the synopsis traversal (lines 1-4); ii) the traversal of the synopsis, $Syn$, to obtain a candidate set $P$ (lines 5-18); and iii) the refinement of a snapshot CD-SBN result set $R$ (line 19).

\setlength{\textfloatsep}{0pt}
\begin{algorithm}[t]
\caption{{\bf Snapshot CD-SBN Processing}}\small
\label{algo:snapshot_cd_sbn_algorithm}
\KwIn{
    \romannumeral1) a streaming bipartite network $G_t$;
    \romannumeral2) a tree synopsis $Syn$ over $G_t$, and;
    \romannumeral3) a snapshot CD-SBN query with a set, $Q$, of query keywords, a query support threshold $k$, a query maximum radius $r$, and a user relationship score threshold $\sigma$
}
\KwOut{
    a set, $R$, of CD-SBN communities
}

\tcp{\bf Initialization}
hash all query keywords in $Q$ into a bit vector $Q.BV$\;

initialize a maximum heap $\mathcal{H}$ in the form of $(M, key)$\;
insert $(root(Syn), +\infty)$ into heap $\mathcal{H}$\;

$P = \emptyset$; $R = \emptyset$\;

\tcp{\bf Synopsis Traversal}
\While{$\mathcal{H}$ is not empty}{
    $(M, key)$ = de-heap$(\mathcal{H})$\;
    \If(\tcp*[h]{Lemma~\ref{lemma:synopsis_score_upper_bound_pruning}}){$key < \sigma$}{terminate the loop\;}

    \eIf{$M$ is a leaf node}{
        \For{each vertex $u_i \in M$}{
            \If{$2r$-hop subgraph $hop(u_i, 2r)$ cannot be pruned by Lemma~\ref{lemma:keyword_pruning}, \ref{lemma:support_pruning},
            \ref{lemma:layer_size_pruning}, or \ref{lemma:score_upper_bound_pruning}}{
                obtain a maximal $(k,r,\sigma)$-bitruss $g \subseteq hop(u_i, 2r)$ satisfying the keyword constraint\;
                \If{$g$ exists}{
                    add $g$ to $P$\;
                }
            }
        }
    
    }(\tcp*[h]{$M$ is a non-leaf node}){
        \For{each entry $M_i \in M$}{
            \If{$M_i$ cannot be pruned by Lemma~\ref{lemma:synopsis_keyword_pruning}, \ref{lemma:synopsis_support_pruning}, or \ref{lemma:synopsis_score_upper_bound_pruning}}{
                insert entry $(M_i, M_i.ub\_score_r)$ into heap $\mathcal{H}$\;
            }
        }
    }

}

\tcp{\bf Refinement}
refine candidate communities in $P$ (removing the redundancy) and obtain the CD-SBN result set $R$\;

\Return $R$\;
\end{algorithm}
\setlength{\textfloatsep}{0pt}

\noindent{\bf Initialization:} Given a query keyword set $Q$, the algorithm first hashes all query keywords in $Q$ into a query keyword bit vector $Q.BV$ (line 1). Then, we initialize a \textit{maximum heap} $\mathcal{H}$ (for synopsis traversal), accepting heap entries in the form of $(M, key)$, where $M$ is a synopsis entry and $key$ is the maximum score upper bound, $M.ub\_score_r$, of all user vertices in $M$ (as mentioned in Section~\ref{subsec:synopsis_construction}) (line 2). Intuitively, a heap entry with a large key (i.e., a large maximum score upper bound) is more likely to contain communities with high user relationship scores. Next, we insert the root of the tree synopsis $Syn$, in the form $(root(Syn), +\infty)$, into the heap $\mathcal{H}$ (line 3). Moreover, we use an initial empty set $P$ to keep the candidate subgraphs and an empty set $R$ to store snapshot CD-SBN query answers (line 4). 




\noindent{\bf Synopsis Traversal:}
Next, we utilize the maximum heap $\mathcal{H}$ to traverse the synopsis (lines 5-18). Each time we pop out a heap entry $(M, key)$ from heap $\mathcal{H}$ with the highest key, $key$ (i.e., the maximum score upper bound) (lines 5-6). If $key$ is less than the score threshold, $\sigma$, it indicates that all the remaining heap entries in $\mathcal{H}$ have a score less than $\sigma$, and they cannot be contained in our CD-SBN query results. Thus, we can safely terminate the synopsis traversal early (lines 7-8).


When $M$ is a leaf node, we consider each user vertex $u_i \in M$ and check whether a candidate community centered on $u_i$ exists (lines 9-14). For each user $u_i\in M$, if the $2r$-hop subgraph $hop(u_i, 2r)$ cannot be pruned by our pruning strategies (i.e., \textit{keyword}, \textit{support}, and \textit{score upper bound pruning}), then we will obtain a maximal $(k,r,\sigma)$-bitruss $g$ (satisfying the keyword constraint) within $hop(u_i, 2r)$ (lines 10-12). If such a subgraph $g$ exists, we will add $g$ to the candidate community set $P$ (lines 13-14). 


On the other hand, when $M$ is a non-leaf node, we will check each entry $M_i$ of node $M$ (lines 15-16). If synopsis entries $M_i$ cannot be pruned by synopsis-level pruning strategies, then we will insert $(M_i, M_i.ub\_score_r)$ into the heap $\mathcal{H}$ for next checking (lines 17-18).


\noindent{\bf Refinement:} 
Finally, we remove duplicate subgraphs from $P$ (due to overlap of $2r$-hop subgraphs), refine candidate communities in $P$ to obtain CD-SBN query answers in $R$, and return $R$ (lines 19-20).


\noindent{\bf Complexity Analysis:}
Let $PP^{(j)}$ be the pruning power (i.e., the percentage of node entries that can be pruned) at the $j$-th level of synopsis $Syn$, where $0 \leq j \leq h$ (here $h$ is the height of the tree synopsis $Syn$). Denote $f$ as the average fanout of the synopsis nodes in $\mathcal{I}$. For the index traversal, the number of nodes that need to be accessed is given by $\sum_{j=1}^{h}f^{h-j+1}\cdot(1 - {PP}^{(j)})$.
Next, to obtain $g$ satisfying the constraints, we need to prune the item vertices without any query keyword. Therefore, it is necessary to recompute the support of the relevant edges and the user relationship score of any pair of adjacent user vertices. Let $\overline{n}$ be the number of pruned item vertices and $deg_{avg}$ be the average degree of the item vertices. The time cost of re-computation is $\overline{n} \cdot deg_{avg}^2$ (given in Section~\ref{subsec:incremental_maintenance_graph} and Section~\ref{subsec:incremental_maintenance_synopsis}).
Then, we can obtain the maximal $(k,r,\sigma)$-bitruss in $O(deg_{avg}^2)$ with computed edge support, which is demonstrated in~\cite{sariyuce2018PeelingBipartiteNetworks}. Thus, it takes $O(f^{h+1}\cdot(1-{PP}^{(0)}) \cdot \overline{n} \cdot deg_{avg}^2)$  to refine candidate seed communities, where ${PP}^{(0)}$ is the pruning power over $r$-hop subgraphs in leaf nodes.
Therefore, the total time complexity of Algorithm~\ref{algo:snapshot_cd_sbn_algorithm} is given by $O\left(\sum_{j=1}^{h}f^{h-j+1}\cdot(1 - {PP}^{(j)})+f^{h+1}\cdot(1-{PP}^{(0)}) \cdot \overline{n} \cdot deg_{avg}^2\right)$.

\begin{algorithm}[t]
\caption{{\bf Continuous CD-SBN Processing}}\small
\label{algo:continuous_cd_sbn_algorithm}
\KwIn{
    \romannumeral1) a streaming bipartite network $G_{t-1}$ with an update stream $S$;
    \romannumeral2) a set, $R_{t-1}$, of community answers at timestamp $(t-1)$, and;
    \romannumeral3) a continuous CD-SBN query with a set, $Q$, of query keywords, a support threshold $k$, a maximum radius $r$, and a threshold, $\sigma$, of user relationship score
}
\KwOut{
    a set, $R_t$, of CD-SBN community answers at timestamp $t$
}
$R_t = R_{t-1}$\;

\tcp{\bf Processing the Expired Item $p_{t-s} = (e_{u_i, v_a}, t-s)\in S$}

\For{each community answer $g \in R_t$}{
    \If{$e_{u_i, v_a} \in E(g)$}{
    update the CD-SBN community $g$ with $g'$, upon the weight update (or deletion) of edge $e_{u_i, v_a}$\;
    \If{$g'$ does not exist}{
        remove $g$ from $R_t$\;
    }
    }
}
\tcp{\bf Processing the Insertion of a New Item $p_t = (e_{u_i', v_a'}, t)\in S$ }
    $\Delta R = \emptyset$\;
    \For{each user vertex $u_j \in hop(u_i', 2r)$}{
        obtain a CD-SBN candidate community $g$ from $hop(u_j, 2r)$ (via pruning) and add it to $\Delta R$\;
    }
    remove duplicate communities $g$ from $\Delta R$ (overlapping with $R_t$) and refine candidate communities $g$ in $\Delta R$\;
    $R_t = R_t \cup \Delta R$\;
    \Return $R_t$\;
\end{algorithm}

\subsection{Continuous CD-SBN Processing Algorithm}
\label{subsec:continuous_processing_algorithm}

Algorithm~\ref{algo:continuous_cd_sbn_algorithm} details our continuous CD-SBN query processing approach. First, we initialize the CD-SBN result set $R_t$ at timestamp $t$ with that $R_{t-1}$ at timestamp $(t-1)$ (line 1). Next, we check the constraints of the existing communities to refine the previous result set (lines 2-6). Then, we obtain the potential communities from the $2r$-hop of user vertices near the insertion edge and remove duplicates (lines 7-10). Finally, we merge the candidate communities into the result set and return it at timestamp $t$ (lines 11-12).

\noindent{\bf Edge Expiration Processing:}
To deal with the expiration of an item $p_{t-s}$ ($ = (e_{u_i, v_a}, t-s)$) in the sliding window $W_{t-1}$, the weight $w_{u_i, v_a}$ of the edge $e_{u_i, v_a}$ decreases (which might turn to the edge deletion). In this case, we will identify those community answers $g$ in the answer set $R_t$ that contain edge $e_{u_i, v_a}$ (lines 2-3), and update the community $g$ with $g'$ (upon weight/edge changes; line 4). If such a community $g'$ does not exist (violating the CD-SBN predicates), we simply remove community answer $g$ from $R_t$ (lines 5-6).

\noindent{\bf Edge Insertion Processing:}
Next, for the insertion of a new item $p_t$ ($= (e_{u_i', v_a'}, t)$), we aim to update $R_t$ with new CD-SBN community answers (containing the new/updated edge $e_{u_i', v_a'}$) in a set $\Delta R$ (lines 7-11). In particular, for each user vertex $u_j$ within the $2r$-hop away from $u_i'$, we obtain a CD-SBN candidate community $g$ via the pruning strategies mentioned in Section \ref{sec:pruning_strategies} and add it to $\Delta R$ (if $g$ exists) (lines 8-9). Then, we will remove redundancy among community candidates in $\Delta R$ (i.e., those already in $R_t$) and refine the remaining candidate communities in $\Delta R$ (line 10). Finally, we add CD-SBN answers in $\Delta R$ to $R_t$, and return final CD-SBN answers in $R_t$ (lines 11-12).

\noindent{\bf Complexity Analysis:}
Let $deg_{avg}$ be the average degree of the item vertices. For edge insertion and expiration, the time cost of user relationship score re-computation is $O(deg_{avg})$ and of edge support update is $O(deg_{avg}^2)$. Since obtaining the maximal $(k,r,\sigma)$-bitruss with computed edge supports and user relationship scores~\cite{sariyuce2018PeelingBipartiteNetworks} requires $O(deg_{avg}^2)$, it takes $O(|R_{t-1}|deg_{avg}^2)$ to maintain the result set at $t-1$ and $O(deg_{avg}^3)$ to detect potential communities.
Therefore, the total time complexity of Algorithm~\ref{algo:continuous_cd_sbn_algorithm} at each timestamp $t$ is given by $O\left(|R_{t-1}|deg_{avg}^2 + deg_{avg}^3\right)$. 

\nop{
\section{ACD-SBN Query Processing}
In this section, we describe a variant problem, ACD-SBN, of CD-SBN and discuss how to efficiently answer the ACD-SBN queries.

\subsection{ACD-SBN Problem Definition}
\label{subsec:acd_sbn_problem_definition}

First of all, we define the $(k,r,p\sigma)$-bitruss community with overall structural cohesiveness.

\begin{definition}
\label{def:k_r_psigma_bitruss}
(\textbf{$(k,r,p\sigma)$-Bitruss}) Given a bipartite graph $G_t=(U(G_t),L(G_t),E(G_t),\Phi(G_t))$, a center vertex $u_c \in U(G_t)$, a threshold $k$ of butterfly number, a maximum radius $r$, a threshold and a ratio, $\sigma$ and $p$, of the user relationship score, a \textit{$(k,r,p\sigma)$-bitruss} is a connected subgraph, $g$, of $G_t$ (denoted as $g \subseteq G_t$), such that:
\begin{itemize}
    \item (Support) for each edge $e_{u, v} \in E(g)$, the edge support $sup(e_{u,v})$ (defined as the number of butterflies containing edge $e_{u, v}$) is not smaller than $k$;
    \item (Radius) for any user vertex $u_i \in U(g)$, we have $dist(u_c, u_i) \leq 2r$, and;
    \item (Score) for any two user vertices $u_i, u_j \in \angle(u_i, v, u_j)$, we have $\sum_{u_i, u_j\in U(g)} score_{u_i, u_j}(g) \geq p\sigma$,
\end{itemize}
where $u_c \in g$, and $dist(u_c, u_i)$ is the shortest path distance between $u_c$ and $u_i$ in bipartite subgraph $g$.
\end{definition}

Unlike $(k,r,\sigma)$-bitruss, $(k,r,p\sigma)$-bitruss indicates that communities have strong overall cohesion, even though there may be user pairs with weak connections within them.

Now, we are ready to define the problem of detecting communities with overall cohesion.

\begin{definition}
\label{def:acd_sbn_problem}
(\textbf{Aggregated Community Detection Over Streaming Bipartite Network, CD-SBN}) Given a streaming bipartite network $G_t$, a support threshold $k$, a maximum radius $r$, a threshold, $\sigma$, of the user relationship score with a ratio $p$, and a set, $Q$, of query keywords, the problem of the \textit{community detection over streaming bipartite network} (CD-SBN) retrieves a result set, $R$, of subgraphs $g_i$ of $G_t$ (i.e., $g_i \subseteq G_t$), such that:
\begin{itemize}
    \item (Keyword Relevance) for any item vertex $v_i \in L(g_i)$, its keyword set $v_i.K$ must contain at least one query keyword in $Q$ (i.e., $v_i.K \cap Q \neq \emptyset$), and;
    \item (Structural and User Relationship Cohesiveness) bipartite subgraph $g_i$ is a $(k,r,p\sigma)$-bitruss (as given in Definition \ref{def:k_r_psigma_bitruss}).
\end{itemize}
\end{definition}

The ACD-SBN problem also has \textit{snapshot} and \textit{continuous} scenarios. The former aims to find ACD-SBN communities in a snapshot graph, $G_t$, while the latter aims to maintain ACD-SBN communities upon edge updates in a streaming bipartite network.

\subsection{Pruning Strategies for ACD-SBN Problem}
\label{subsec:pruning_strategies_acd_sbn}

To efficiently tackle the ACD-SBN problem, we utilize the score to discard the communities with low scores.

}

\begin{table}[t]
\begin{center}
\caption{Statistics of graph data sets.}
\vspace{-2ex}
\label{tab:datasets}
\footnotesize
\begin{tabular}{c||c|c|c|c|c}
\toprule
\textbf{Data Set}& \textbf{Type}  & $|U|$ & $|V|$ & $|E|$ & $|\Sigma|$ \\
\midrule
    AmazonW (AM) & Rating & 26,112 & 800 & 111,265 & N/A \\
    BibSonomy (BS) & Publication & 5,795 & 767,448 & 2,555,080 & 204,674 \\
    CiaoDVD (CM) & Rating & 17,616 & 16,121 & 295,958 & N/A \\
    CiteULike (CU) & Publication& 22,716 & 731,770 & 2,411,819 & 153,278 \\
    Movielens (ML) & Movie & 4,010 & 7,602 & 95,580 & 16,529 \\
    Escorts (SX) & Rating & 10,106 & 6,624 & 50,632 & N/A \\
    TripAdvisor (TA) & Rating & 145,317 & 1,760 & 703,171 & N/A \\
    UCForum (UF) & Interaction & 899 & 522 & 33720 & N/A \\
    ViSualizeUs (VU) & Picture & 17,110 & 495,402 & 2,298,816 & 82,036 \\
\hline
    dBkU, dBkL, dBkP & Synthetic & 10K$\sim$100K & 10K$\sim$100K & 152,175 & N/A \\
    dPkU, dPkL, dPkP & Synthetic & 10K$\sim$100K & 10K$\sim$100K & 152,922 & N/A \\
\bottomrule
\end{tabular}
\end{center}
\end{table}

\section{Experimental Evaluation}
\label{sec:experiments}

\subsection{Experimental Settings}
\label{subsec:experimental_settings}
We conduct experiments to evaluate the performance of our snapshot and continuous CD-SBN query processing algorithms on various bipartite graphs. Our source code is available on GitHub\footnote{\url{https://github.com/L1ANLab/CD-SBN}}.

\noindent {\bf Bipartite Graph Data Sets:}
We evaluate our proposed CD-SBN algorithms on nine real-world and six synthetic graphs. As depicted in Table~\ref{tab:datasets}, we provide statistics of real-world bipartite graphs from the KONECT\footnote{\url{http://konect.cc/}} project (i.e., Koblenz Network Collection), where $|\Sigma|$ refers to the domain size of keywords in item vertices of the original graphs (``N/A'' for $|\Sigma|$ means that graph data have no keywords). 

To generate synthetic bipartite graphs, we first randomly produce degrees, $deg(u)$, of user vertices $u$ following \textit{PowerLaw} or \textit{Beta} distribution. Then, for each user vertex $u$, we connect it to $deg(u)$ random item vertices $v$.  Next, to simulate the real-world frequency of user-item interaction (i.e., edge weights $w_{u,v}$), we assign each edge with weight ranging from $\left[min\_w, max\_w\right]$ following the \textit{Gaussian} distribution, and associate them with edges, where the mean and standard deviation of the \textit{Gaussian} distribution are $(1.5, 0.25)$, $(2, 0.5)$, and $(2.5, 0.75)$ for $\left[1,2\right]$, $\left[1,3\right]$, and $\left[1,4\right]$, respectively.
Finally, for each item vertex $v$, we generate its keyword set $v.K$, which contains integers following \textit{Log-Normal}, \textit{Pareto}, or \textit{Uniform} distribution. 
For real-world graphs without keywords (e.g., AM, CM, SX, UF, and TA), we produce keyword sets of their item vertices following the \textit{Log-Normal} distribution.

For CD-SBN predicates, we will randomly select $|Q|$ keywords from a keyword domain $\Sigma$, following the keyword distribution in the data graph $G$, and form a set, $Q$, of query keywords. {\color{blue} } Other parameter settings are provided in Table \ref{tab:parameters}.

\begin{table}[t!]
\begin{center}
\caption{Parameter settings.}
\vspace{-3ex}
\label{tab:parameters}
\footnotesize
\begin{tabular}{l||p{30ex}}
\hline
\toprule
\textbf{Parameters}&\textbf{Values} \\
\midrule
    sliding window size $s$ & 200, 300 \textbf{500}, 800, 1000\\
    support, $k$, of bitruss structure & 3, \textbf{4}, 5\\  
    radius $r$ & 1, \textbf{2}, 3\\
    relationship score threshold $\sigma$ &1, 2, \textbf{3}, 4, 5\\
    size, $|Q|$, of query keyword set $Q$  & 2, 3, \textbf{5}, 8, 10 \\
    keyword domain size $|\Sigma|$ & 100, 200, \textbf{500}, 800, 1000\\
    size, $|v_i.K|$, of keywords per item vertex & 1, 2, {\bf 3}, 4, 5 \\
    the number, $|U(G)|$, of user vertices & 10K, 15K, \textbf{25K}, 50K, 100K\\
    the number, $|L(G)|$, of item vertices & 10K, 15K, \textbf{25K}, 50K, 100K\\
    the edge weight range $\left[min\_w, max\_w\right]$ & \textbf{[1,2]}, [1,3], [1,4]\\
\bottomrule
\end{tabular}
\end{center}
\end{table}

\nop{
\underline{\it Analysis of Degree and Keyword Distribution:}
We performed a statistical analysis to fit the distribution of user vertex degrees in real-world graphs.
We first counted the degrees of all the user vertices of each real-world graph, then utilized the goodness-of-fit test to determine the probability of distribution by comparing the observed frequency to the expected frequency from the model (f-hat) and computing the residual sum of squares (RSS), and finally obtained the fitting probability ranking and its parameters for 16 common distributions.
Combining the best probability distribution rankings on all real-world graphs, we selected \textbf{PowerLaw} and \textbf{Beta} distributions as the sample distributions for user vertex degrees. We use the same method to analyze the keyword distributions in real-world graphs with original keywords (\textbf{BS}, \textbf{CU}, \textbf{ML}, and \textbf{VU}) and chose \textbf{Log-Normal}, \textbf{Pareto}, and \textbf{Uniform} distributions as the sample distributions for keywords set associated with user vertices.

For each query, we randomly select $|Q|$ keywords from the keyword domain $\Sigma$ following the distribution of the keywords in the data graph and form a set of query keywords $Q$.
}

\noindent {\bf Competitors:}
Since no prior work has studied the CD-SBN problem under the same semantics as $(k,r,\sigma)$-bitruss communities with specific keywords, we compare our CD-SBN algorithm with a baseline method, \textit{Bitrussness-Based Decomposition (BBD)}.
Specifically, \textit{BBD} first calculates the trussness of each edge, which is the maximum $k$ of all the $k$-bitruss containing the edge, and builds a synopsis based on the trussness. Then, upon edge updates at each timestamp, \textit{BBD} filters out user vertices connected to edges with trussness less than $k$ and computes the $(k,r,\sigma)$-bitruss community centered at each remaining vertex (containing query keywords).

\noindent {\bf Measures:}
We evaluate our CD-SBN performance in terms of the \textit{wall clock time}, which is the time cost to traverse the index and retrieve communities for snapshot CD-SBN, or that to maintain the CD-SBN results at each timestamp for the continuous CD-SBN problem. The wall clock time is the average time across 10 runs with a set of 10 different CD-SBN predicates (e.g., query keywords).


\setlength{\textfloatsep}{0pt}
\subfigcapskip=-0.2cm
\begin{figure*}[t!]
    \centering
    \subfigure[sliding window size $s$]{
        \includegraphics[height=2.7cm]{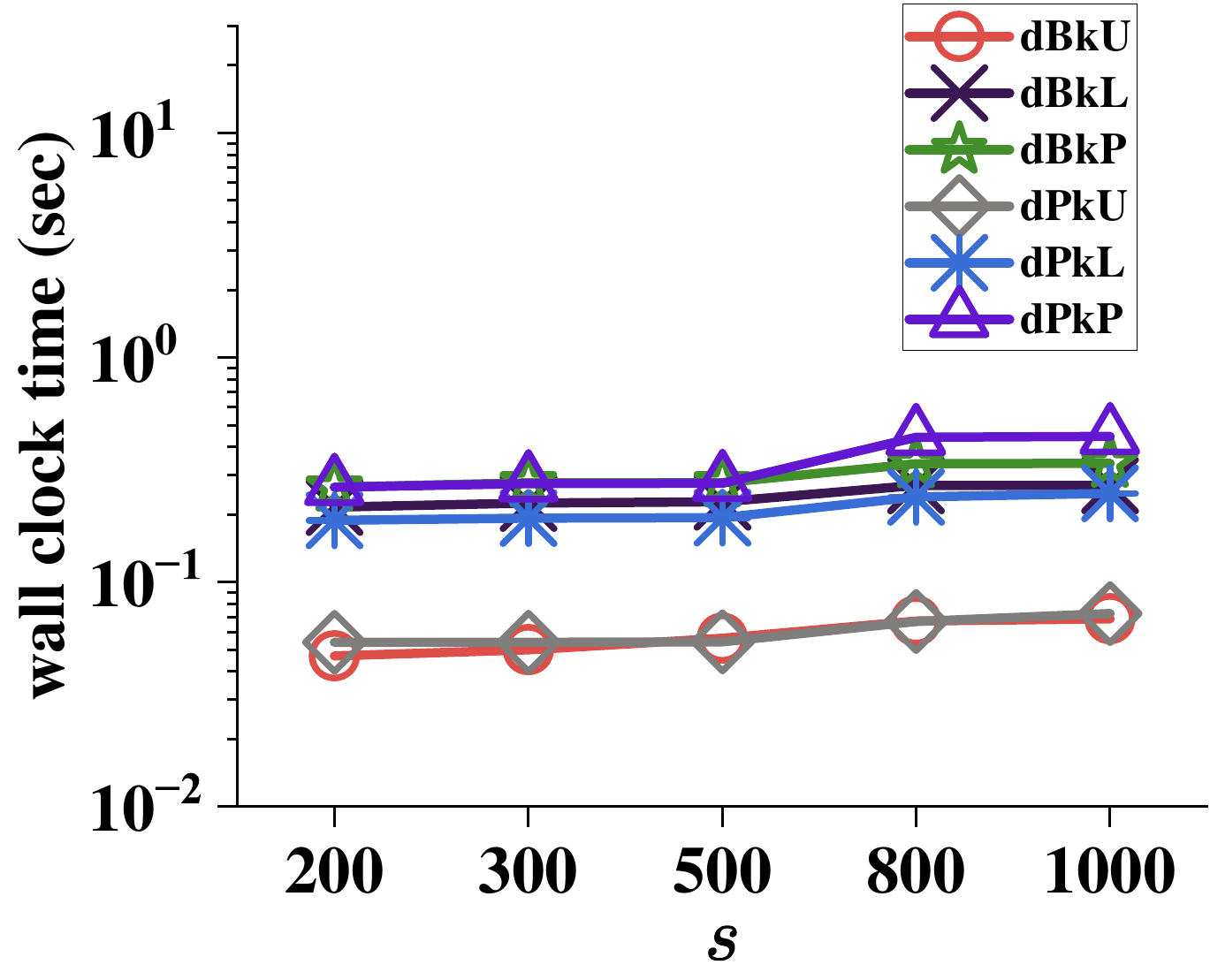}
        \label{subfig:effect_sliding_window_size}
    }\quad
    \hspace{-0.3cm}
    \subfigure[bitruss support threshold $k$]{
        \includegraphics[height=2.7cm]{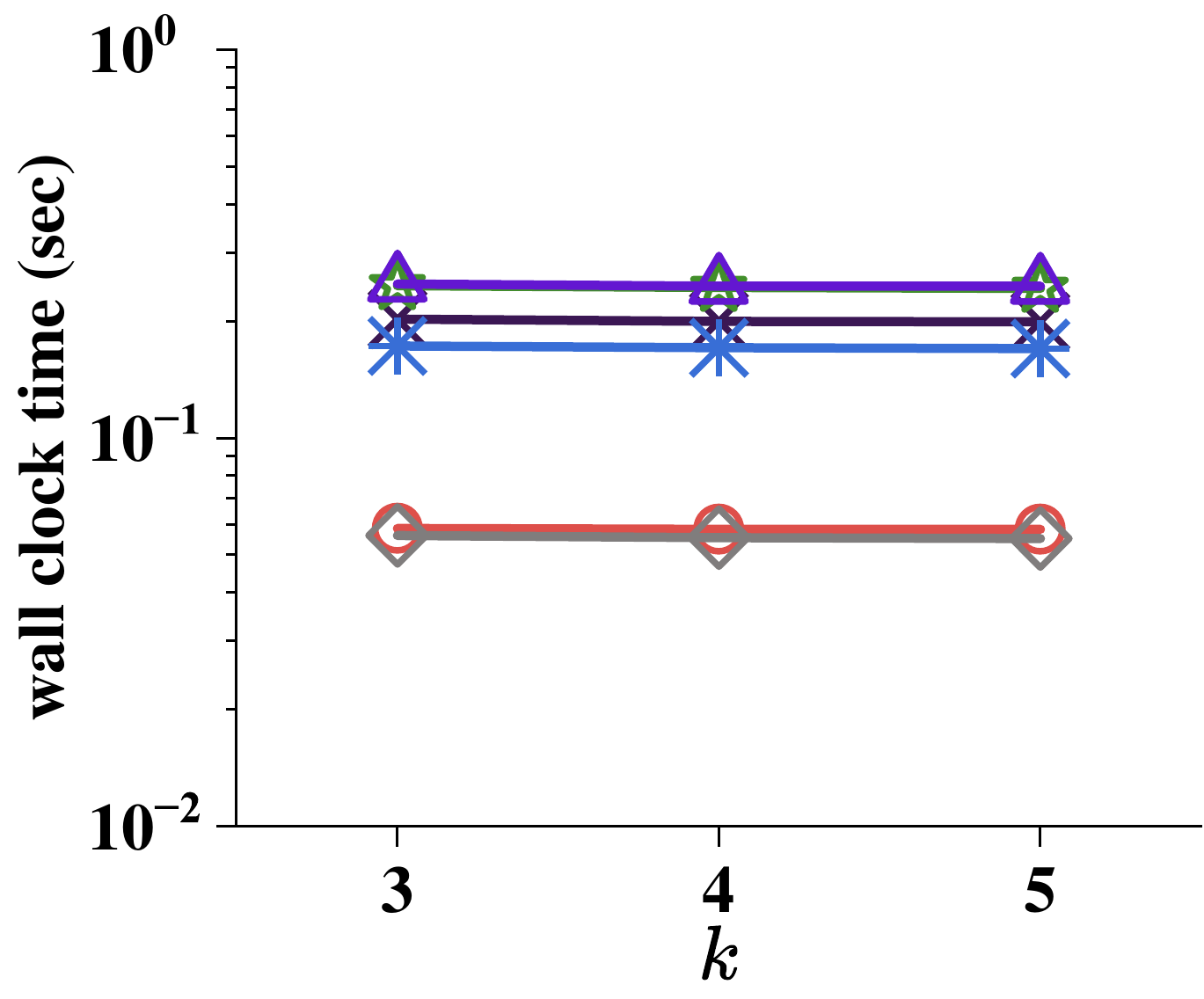}
        \label{subfig:effect_truss_support_parameter}
    }\quad
    \hspace{-0.3cm}
    \subfigure[radius $r$]{
        \includegraphics[height=2.7cm]{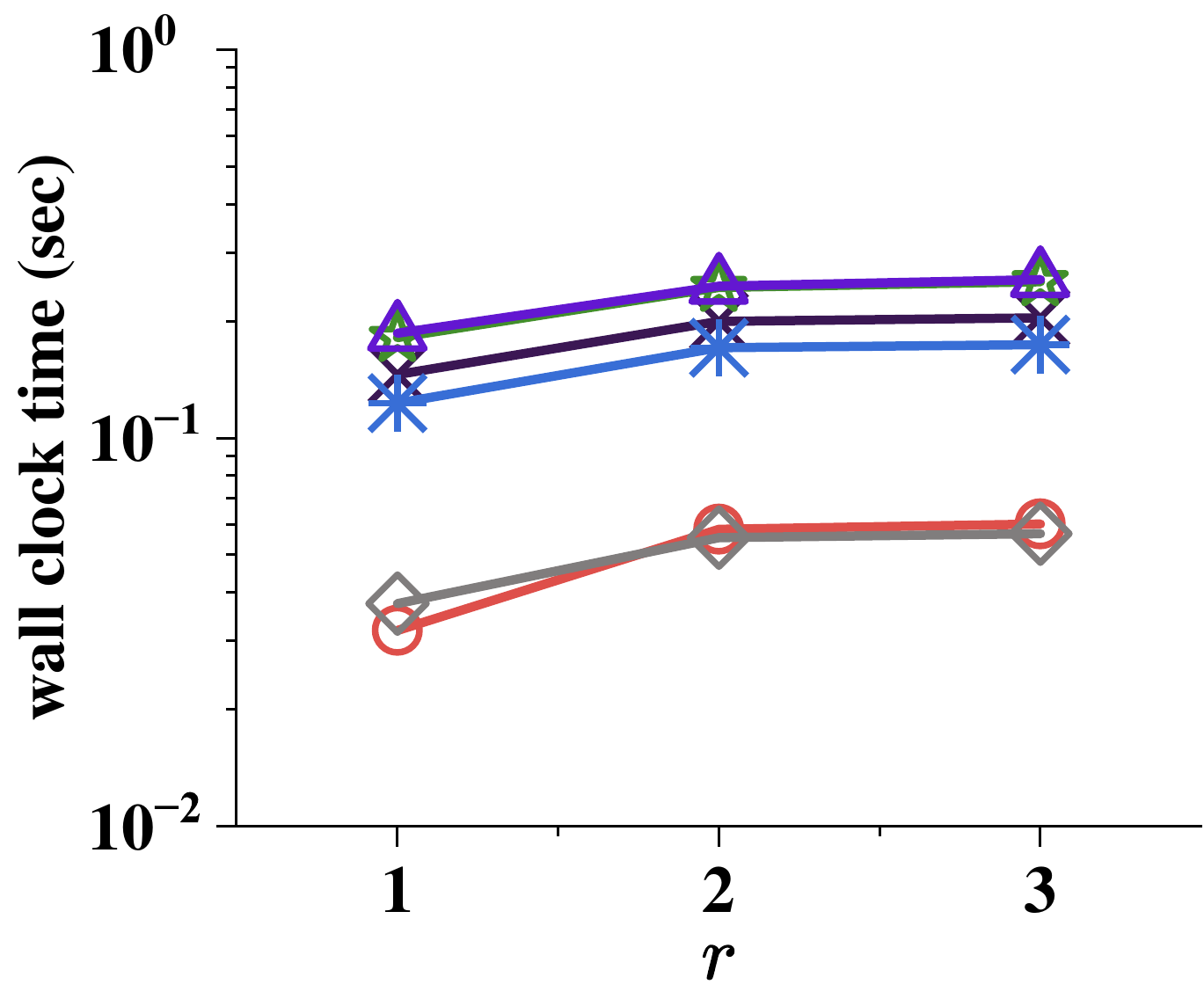}
        \label{subfig:effect_radius}
    }\quad
    \hspace{-0.3cm}
    \subfigure[relationship score threshold $\sigma$]{ 
        \includegraphics[height=2.7cm]{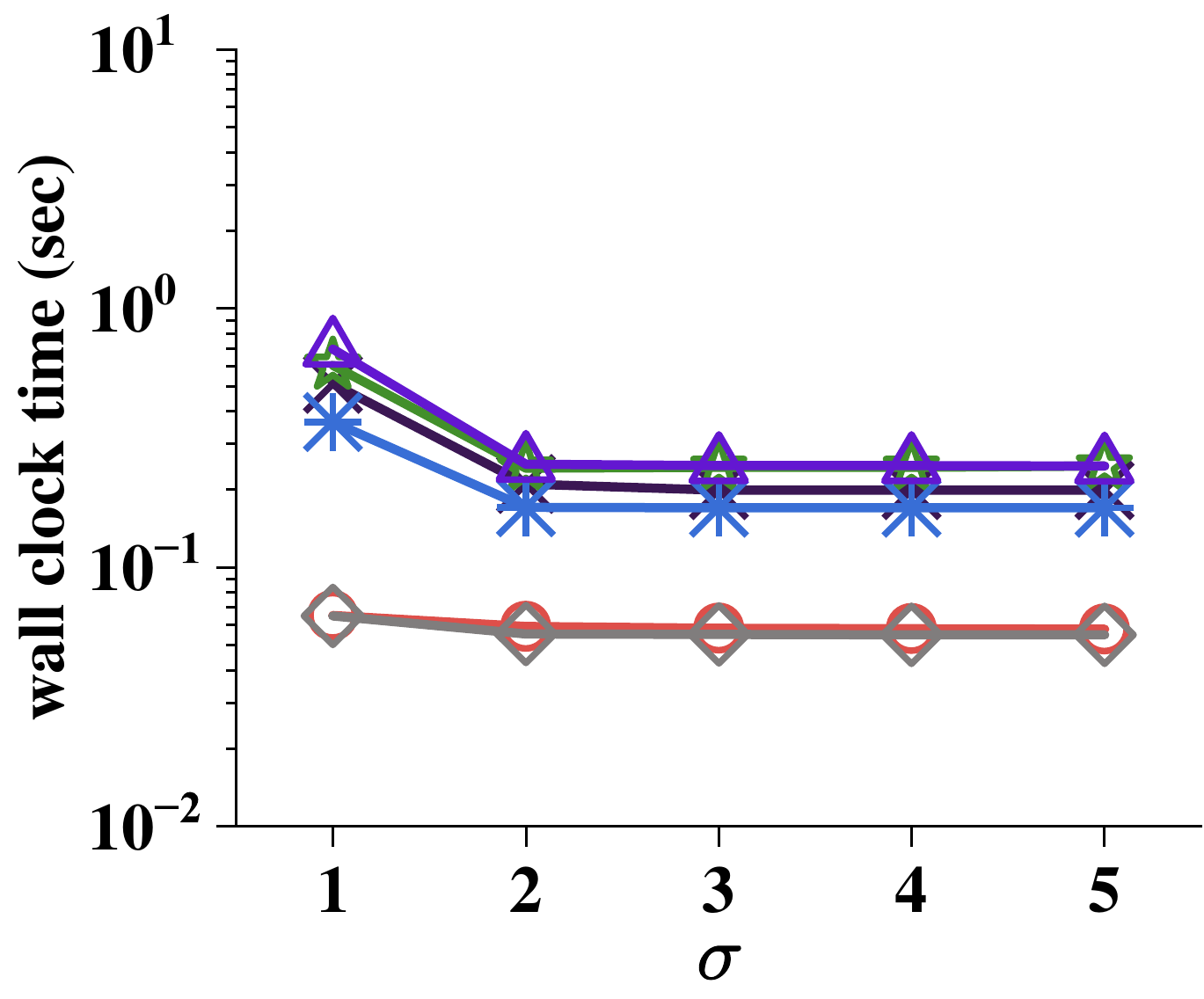}
        \label{subfig:effect_score_threshold}
    }\quad
    \hspace{-0.3cm}
    \subfigure[query keyword set size $|Q|$]{
        \includegraphics[height=2.7cm]{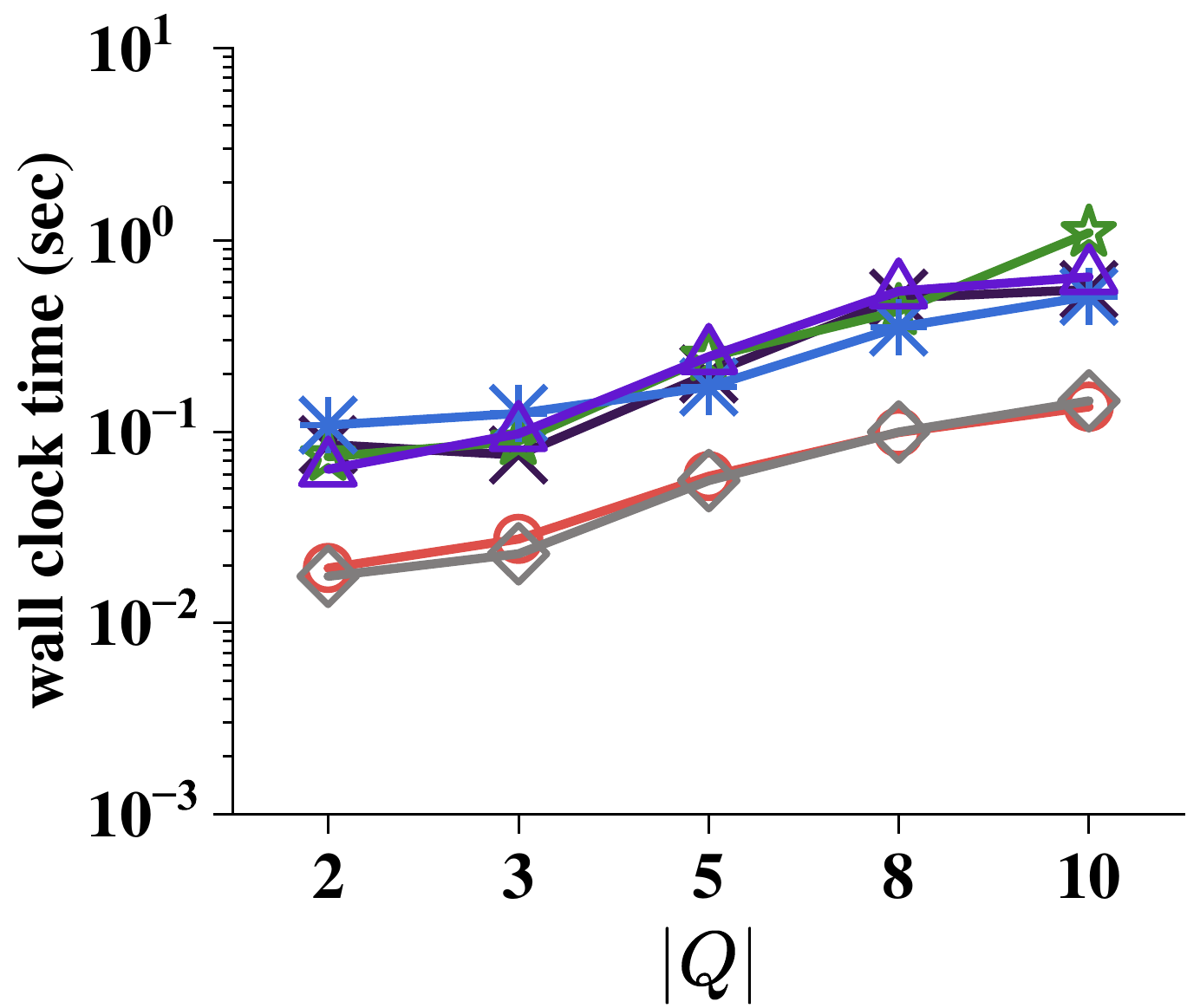}
        \label{subfig:effect_query_keyword_set_size}
    }\\
    \vspace{-0.3cm}
    \subfigure[keyword domain size $|\Sigma|$]{
        \includegraphics[height=2.7cm]{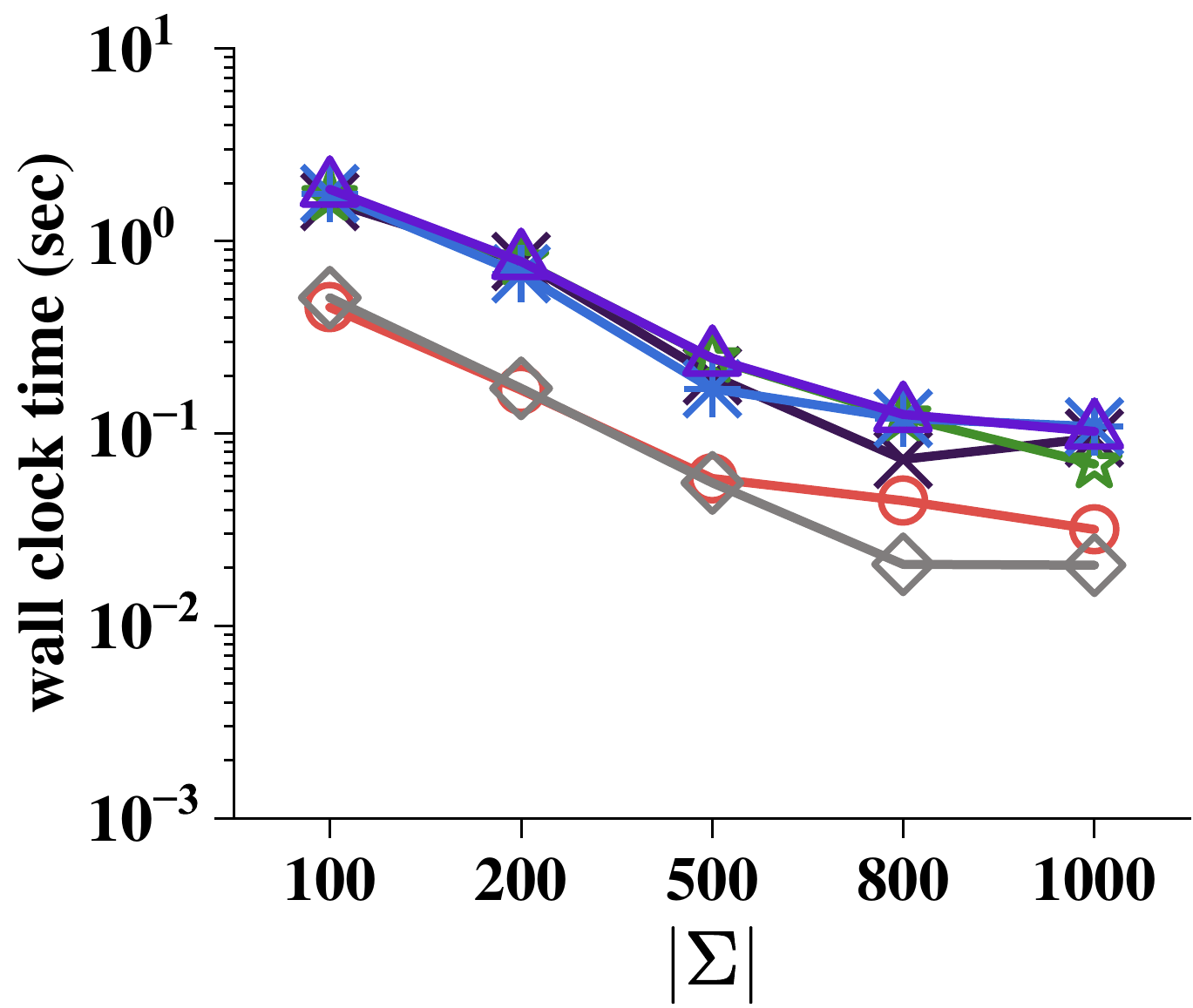}
        \label{subfig:effect_keyword_domain_size}
    }\quad
    \hspace{-0.3cm}
    \subfigure[\# of keywords/vertex $|v_i.K|$]{
        \includegraphics[height=2.7cm]{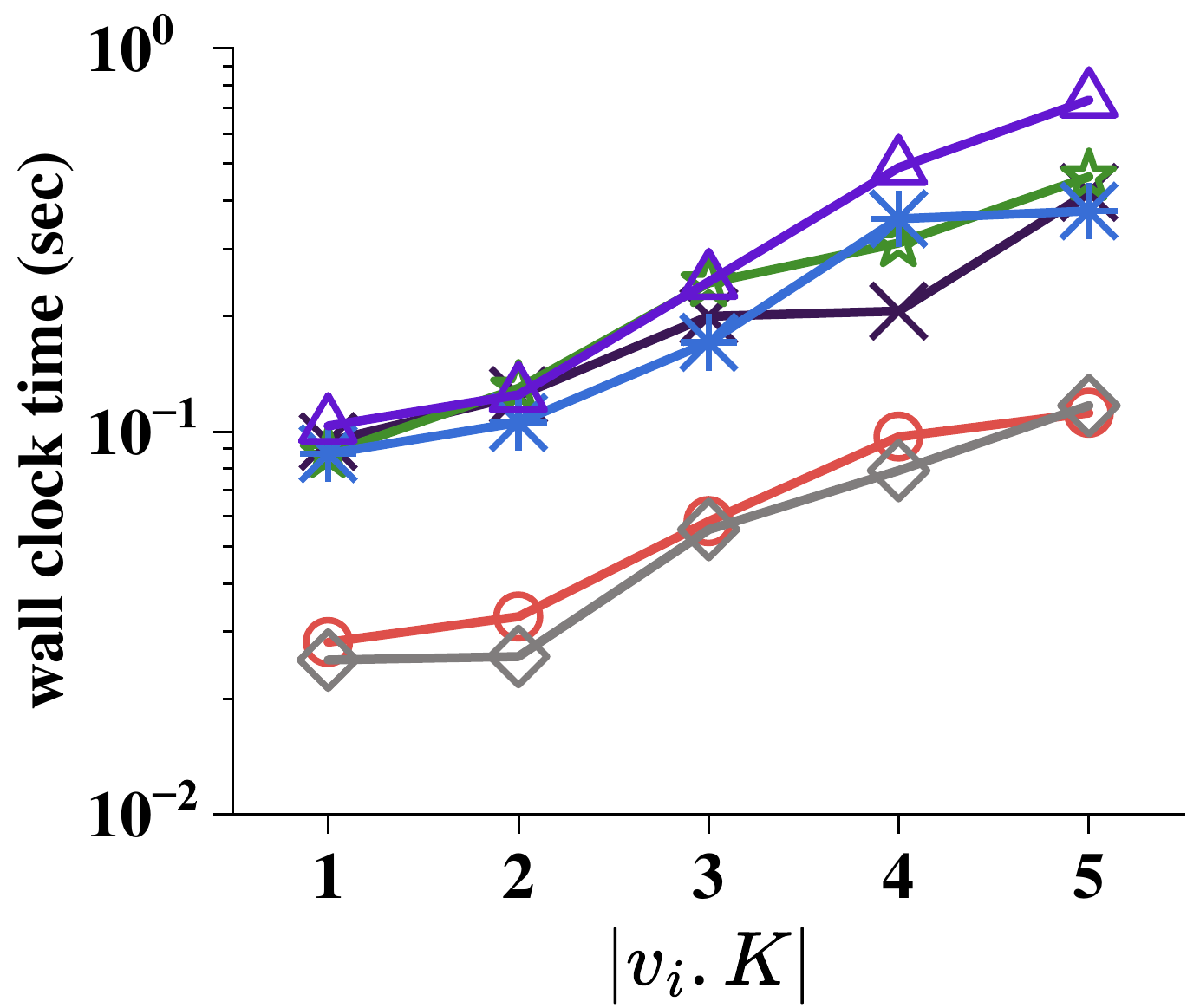}
        \label{subfig:effect_number_of_keywords_per_vertex}
    }\quad
    \hspace{-0.3cm}
    \subfigure[edge weight range]{
        \includegraphics[height=2.7cm]{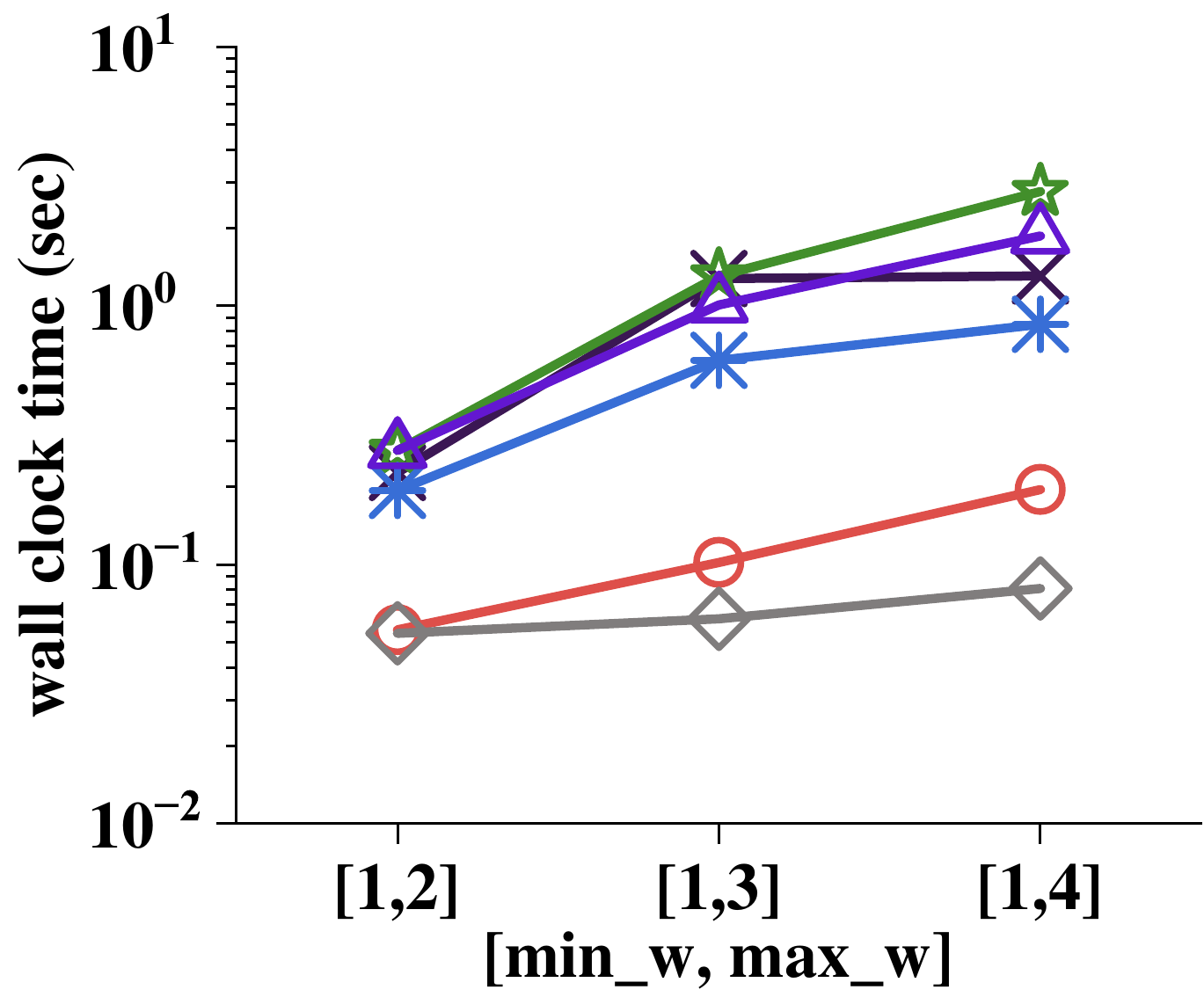}
        \label{subfig:effect_edge_weight_range}
    }\quad
    \hspace{-0.3cm}
    \subfigure[lower layer size $|L(G)|$]{
        \includegraphics[height=2.7cm]{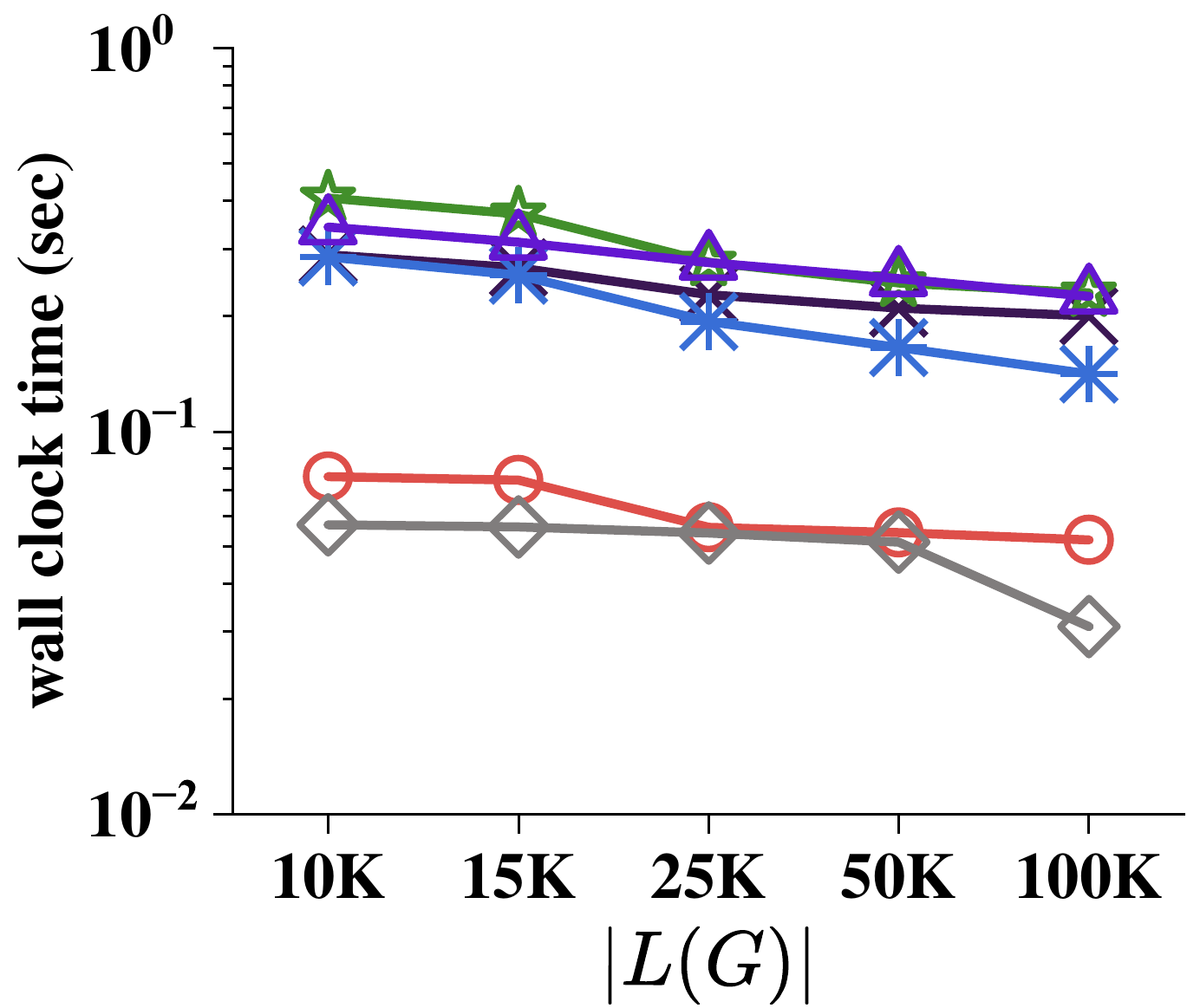}
        \label{subfig:effect_lower_layer_size}
    }\quad
    \hspace{-0.3cm}
    \subfigure[upper layer size $|U(G)|$]{
        \includegraphics[height=2.7cm]{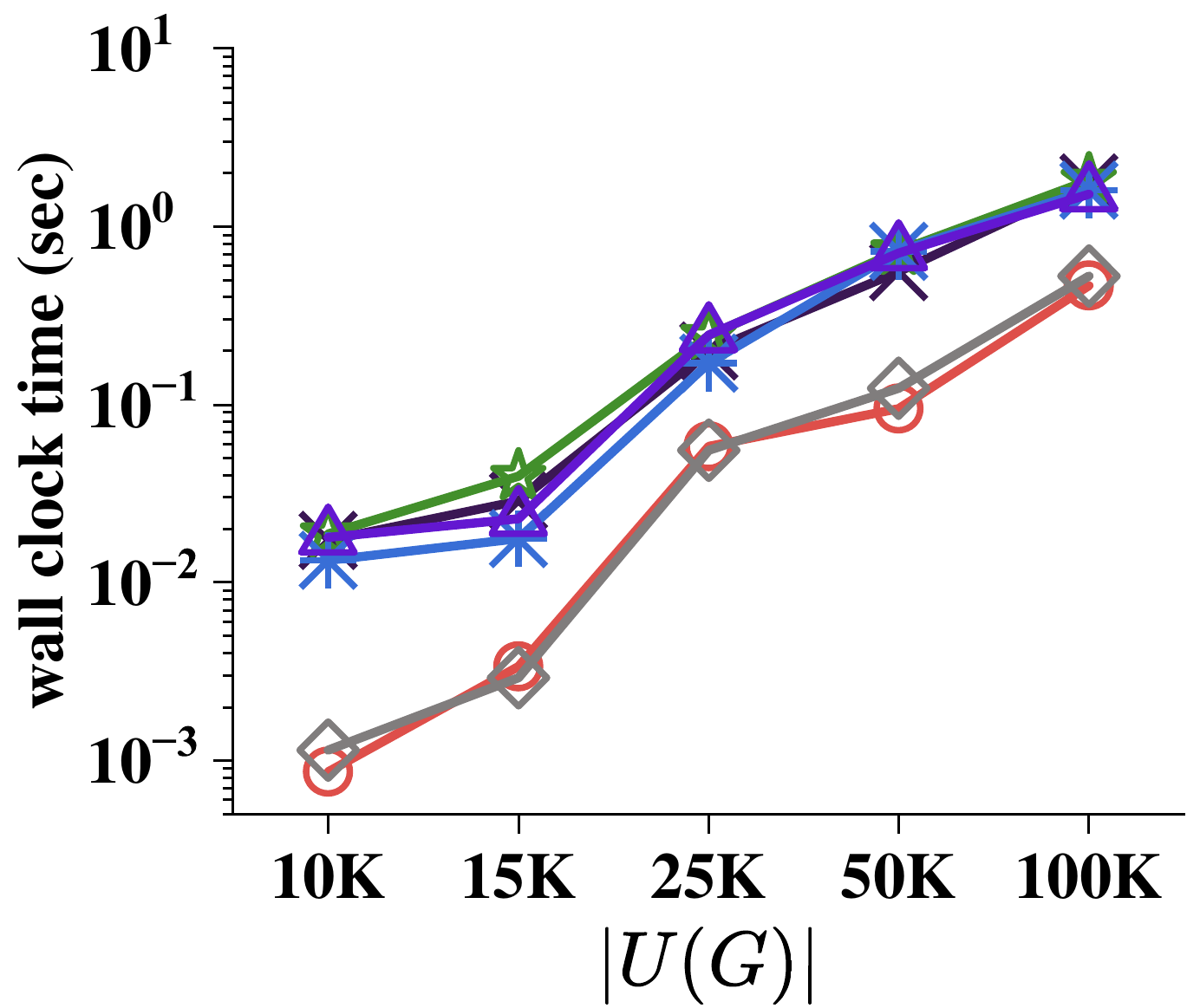}
        \label{subfig:effect_upper_layer_size}
    }\\
    \vspace{-3ex}
    \caption{\small The robustness evaluation of the CD-SBN approach.}
    \Description{\small The robustness evaluation of the CD-SBN approach.}
    \vspace{-2ex}
    \label{fig:efficiency}
\end{figure*}

\nop{

\begin{figure}[t!]
    \centering
        \includegraphics[height=6.0cm]{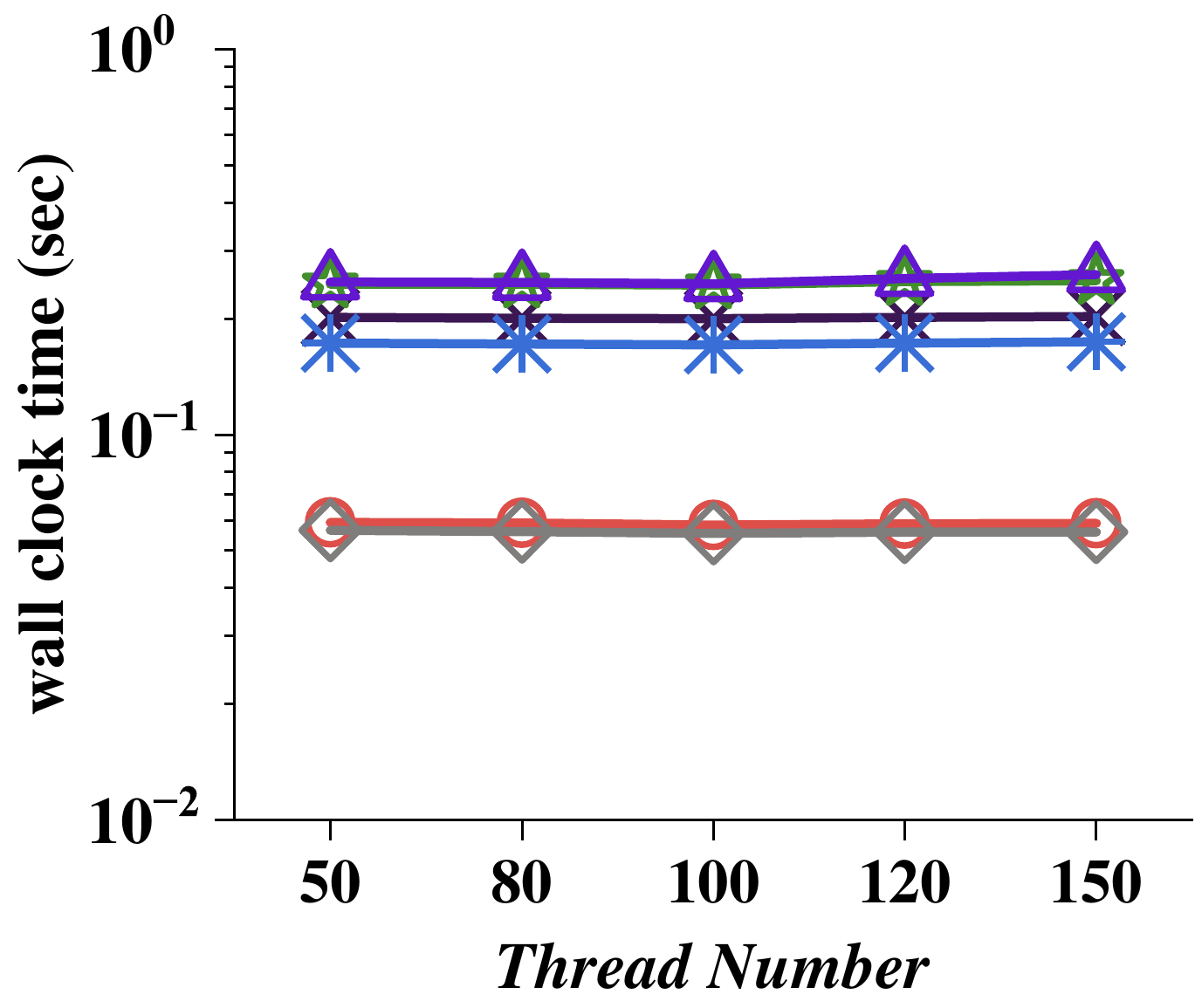}
    \vspace{-1ex}
    \caption{\small Time cost of continuous CD-SBN query processing with varying thread number.}
    \Description{\small Time cost of continuous CD-SBN query processing with varying thread number.}
    \label{fig:thread_num}
\end{figure}

\noindent {\bf Parameter Settings:}
Table~\ref{tab:parameters} depicts the parameter settings of our subsequent experiments, where default values are in bold. Each time we vary one parameter while keeping the others at their default values. {\color{blue}Moreover, since our continuous CD-SBN processing algorithm has no data dependencies during candidate detection, we use multithreaded optimization to speed up computation, with the number of threads set to $100$. As shown in~\ref{fig:thread_num},  All experiments are run on a machine with an AMD Ryzen Threadripper 3990X 64-Core processor, Ubuntu 22.04 OS, 192GB of memory, and a 600GB disk. All algorithms were implemented in C++ (C++17 standard) and compiled with g++ 11.4.0.
}
}


\noindent {\bf Parameter Settings:}
Table~\ref{tab:parameters} depicts the parameter settings of our subsequent experiments, where default values are in bold. Each time we vary one parameter while keeping the others at their default values. Moreover, we use multithreaded optimization to speed up the computation of candidates, where the number of threads is set to $100$ by default. All experiments are conducted on a machine equipped with an AMD Ryzen Threadripper 3990X 64-Core processor, running Ubuntu 22.04 OS, 192GB of memory, and a 600GB disk. All algorithms were implemented in C++ (C++17 standard) and compiled with g++ 11.4.0.

\noindent {\bf Research Questions:}
We design experiments for our CD-SBN algorithms to answer the following research questions (RQs):

\underline{\it RQ1 (Efficiency):} Can our proposed algorithms efficiently process the continuous CD-SBN problem?

\underline{\it RQ2 (Effectiveness):} Can our proposed pruning strategies effectively filter out false alarms of candidate communities during the CD-SBN processing?

\underline{\it RQ3 (Utility):} Do the resulting CD-SBN communities have practical significance for real-world applications?

\subsection{CD-SBN Performance Evaluation}
\noindent {\bf Continuous CD-SBN Efficiency (RQ1):}
Figure~\ref{fig:performance_total} compares the \textit{wall clock time} of our continuous CD-SBN processing algorithm with that of \textit{BBD} over real-world and synthetic bipartite graphs, where all parameters are set to default values in Table~\ref{tab:parameters}. The experiment results indicate that our continuous CD-SBN algorithm outperforms \textit{BBD} by more than two orders of magnitude, confirming the efficiency of our CD-SBN approach on real/synthetic graphs.

We also test the robustness of our CD-SBN approach with different parameters (e.g., $s$, $k$, $r$, $\sigma$, and $|Q|$) on synthetic graphs with different structural parameters (e.g., $|\Sigma|$, $|v_i.K|$, $[min\_w, max\_w]$, $|L(G)|$, and $|U(G)|$).

\underline{\it Effect of Sliding Window Size $s$:} Figure~\ref{subfig:effect_sliding_window_size} presents the performance of our continuous CD-SBN approach, by varying the size $s$ of the sliding window $W_t$ from $50$ to $200$, where other parameters are by default. From the experimental results, as $s$ increases, more edges are involved or have higher edge weights, resulting in more potential candidate communities. Therefore, a larger window size $s$ incurs a higher time cost, but the time cost of our CD-SBN approach remains low (i.e.,0.047  $\sim$ 0.726 $sec$).

\underline{\it Effect of Bitruss Support Threshold $k$:}
Figure~\ref{subfig:effect_truss_support_parameter} illustrates the continuous CD-SBN performance, where the bitruss support parameter $k$ varies from $3$ to $5$, and other parameters are set to default values. Since a larger support threshold $k$ yields greater pruning power, our CD-SBN approach can identify fewer candidate communities, thereby incurring lower time costs. The query cost remains low for different $k$ values (i.e., 0.055 $\sim$ 0.249 $sec$).


\begin{figure}[t!]\vspace{-2ex}
\subfigcapskip=-0.2cm
    \centering
    \subfigure[real-world graphs]{
        \includegraphics[height=2.5cm]{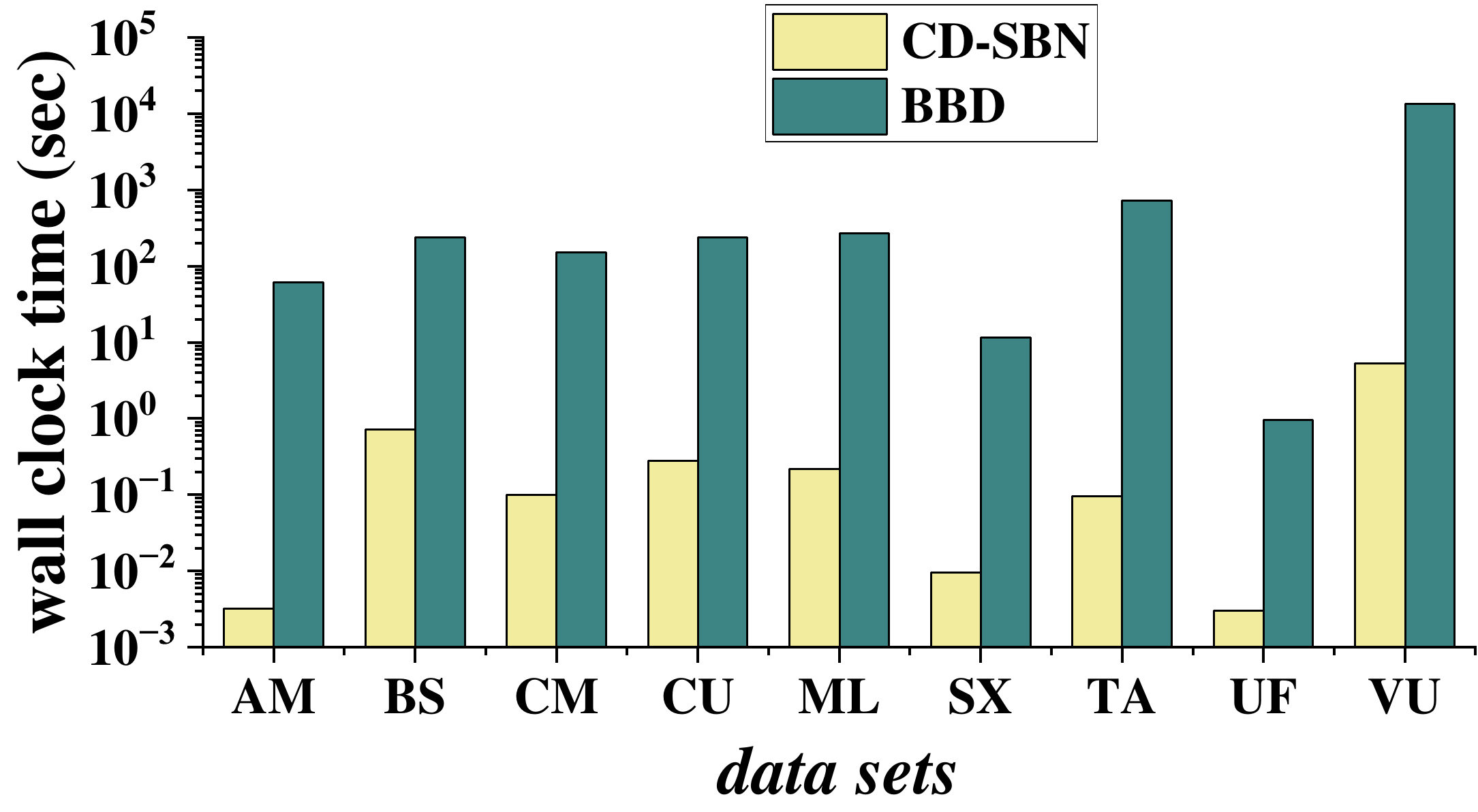}
        \label{subfig:performance_realworld}
    }\quad
    \hspace{-0.3cm}
    \subfigure[synthetic graphs]{
        \includegraphics[height=2.5cm]{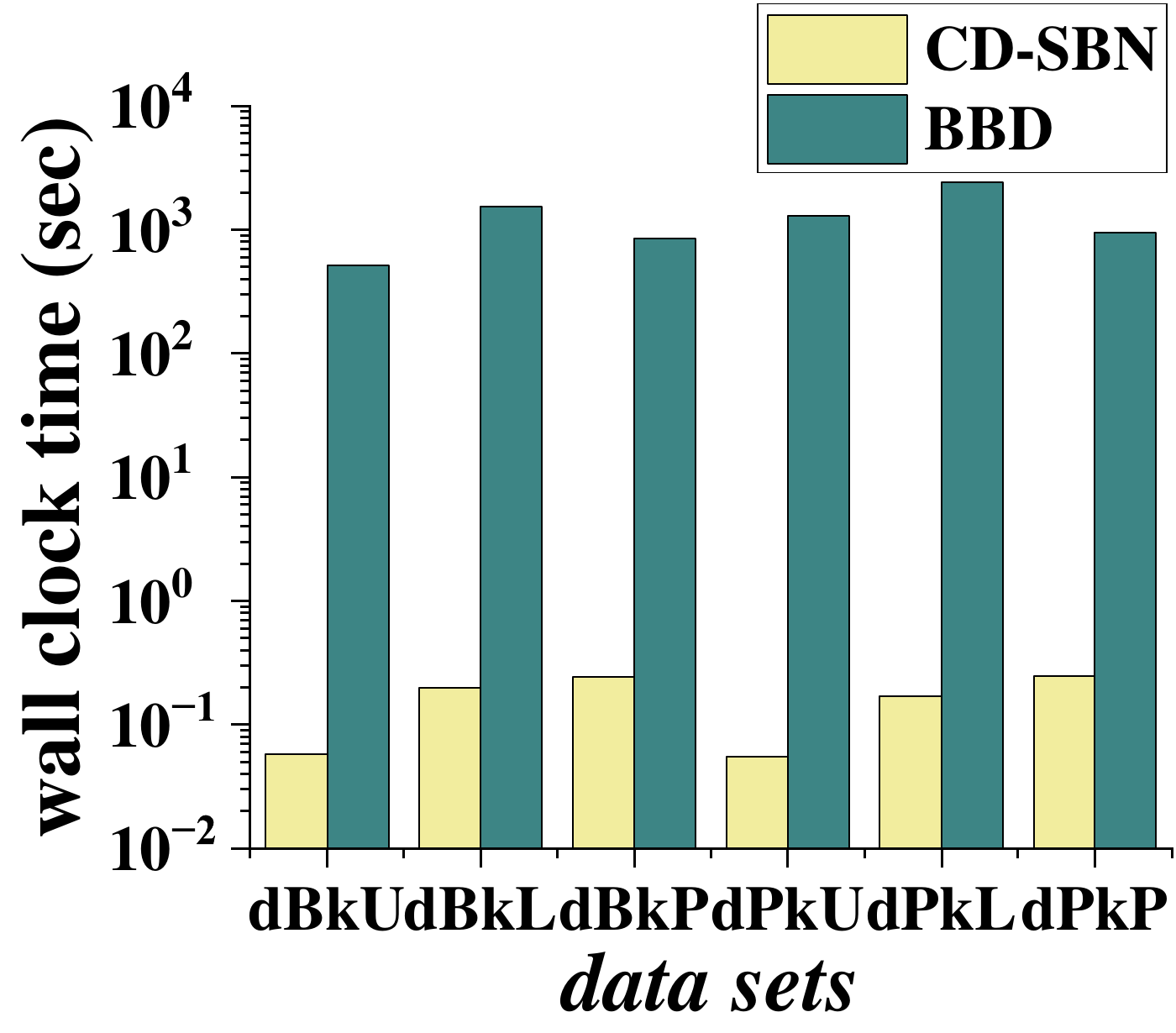}
        \label{subfig:performance_synthetic}
    }\quad
    \vspace{-4ex}
    \caption{\small The CD-SBN performance on real/synthetic graphs.}
    \Description{\small The CD-SBN performance on real/synthetic graphs.}
    \label{fig:performance_total}
\end{figure}

\underline{\it Effect of Radius $r$:}
Figure~\ref{subfig:effect_radius} shows the experimental results of our CD-SBN approach, by varying radius, $r$, the query community from $1$ to $3$, where default values are used for other parameters.
When the radius $r$ becomes larger, the scale of $2r$-hop subgraphs will increase, which results in higher time costs to obtain maximal $(k,r,\sigma)$-bitruss. 
Nevertheless, for different $r$ values, the \textit{wall clock time} remains small (i.e., around 0.032 $\sim$ 0.256 $sec$).

\nop{
\underline{\it Effect of Radius $r$:}
Figure~\ref{subfig:effect_radius} shows the experimental results of our continuous query processing approach for different radii $r$ ($r=1$, $2$, and $3$) of CD-SBN communities, while other parameters are all by default. When the query parameter radius $r$ becomes larger, the scale of $2r$-hop subgraphs will enlarge, which incurs higher time costs to obtain the maximal $(k,r,\sigma)$-bitruss. Regardless, the \textit{wall clock time} remains small (i.e., around 0.033 $\sim$ 0.375 $sec$) for different $r$ values.
}

\underline{\it Effect of Relationship Score Threshold $\sigma$:}
Figure~\ref{subfig:effect_score_threshold} varies the relationship score threshold, $\sigma$, from $1$ to $5$, where other parameters are set by default values.
Since higher $\sigma$ leads to higher pruning power and cohesiveness of candidate communities, the number of candidate communities decreases for larger $\sigma$. Thus, the CD-SBN time cost decreases for larger $\sigma$ values.
Overall, for different $\sigma$ values, the time cost remains low (i.e., 0.059 $\sim$ 0.697 $sec$).

\nop{
\underline{\it Effect of Relationship Score Threshold $\sigma$:}
Figure~\ref{subfig:effect_score_threshold} varies the relationship score threshold, $\sigma$, from $1$ to $5$, where all other parameters are set as default values. The higher sigma's value, the greater the candidate community's cohesion and pruning power. Hence, for the six synthetic graphs, as $\sigma$ increases, the \textit{wall clock time} decreases and remains low (i.e., 0.051$\sim$1.989 $sec$).
}

\underline{\it Effect of Query Keyword Set Size $|Q|$:}
Figure~\ref{subfig:effect_query_keyword_set_size} evaluates the CD-SBN performance, with different sizes, $|Q|$, of the query keyword set $Q$ from $2$ to $10$, where we use default values for other parameters. Intuitively, a larger set of query keywords $Q$ allows more item vertices to be included, resulting in lower pruning power by the keyword pruning. Consequently, this leads to the retrieval of more candidate communities for further filtering and refinement. Thus, from the figure, as $|Q|$ increases, the query cost also increases. Nonetheless, the wall clock time remains low (i.e., 0.007 $\sim$ 0.893 $sec$) for different $|Q|$ values. 

\nop{
\underline{\it Effect of Query Keyword Set Size $|Q|$:}
Figure~\ref{subfig:effect_query_keyword_set_size} evaluates the performance of our proposed CD-SBN continuous query processing algorithm when the query keyword set size, $|Q|$, is equal to $2$, $3$, $5$, $8$, and $10$, and the other parameters are set as default settings. 
A larger query keyword set leads to a larger candidate community set to filter out and refine, which incurs a higher time cost to maintain the continuous query answer. However, the \textit{wall clock time} to process the queries with different query keyword sets remains low (i.e., 0.019 $\sim$ 1.090 $sec$). 
}

\underline{\it Effect of Keyword Domain Size $|\Sigma|$:}
Figure~\ref{subfig:effect_keyword_domain_size} shows the wall clock time of our CD-SBN approach, where the keyword domain size, $|\Sigma|$, varies from $100$ to $1,000$ and other parameters are set to their default values.
In the figure, we find that as $|\Sigma|$ increases, our CD-SBN approach requires less time to retrieve CD-SBN communities, indicating the effectiveness of our keyword pruning for large keyword domains.
Overall, with different $|\Sigma|$ values, the time cost remains low (i.e., 1.761 $\sim$ 0.021 $sec$).

\nop{
\underline{\it Effect of Keyword Domain Size $|\Sigma|$:}
Figure~\ref{subfig:effect_keyword_domain_size} reflects the performance of the proposed CD-SBN method in synthetic graphs, where the keyword domain size, $|\Sigma|$, varies from $100$ to $1000$, and other parameters are by default. As the domain size of keywords grows, the pruning power of keyword pruning decreases, leading to a larger candidate community. Note that, although our approach needs to take more time to compute the CD-SBN community accurately, the \textit{wall clock time} of our approach remains low (i.e., 1.974 $\sim$ 0.009 $sec$) for all tested keyword domain sizes.
}

\underline{\it Effect of \# of Keywords Per Item Vertex $|v_i.K|$:}
Figure~\ref{subfig:effect_number_of_keywords_per_vertex} illustrates the CD-SBN performance for different numbers, $|v_i.K|$, of keywords per item vertex from $1$ to $5$, where we set other parameters to default values. Intuitively, more keywords in item vertices $v_i$ will increase the chance to match with query keyword set $Q$, which results in more/larger candidate communities to filter and refine. Therefore, with the increase of $|v_i.K|$, the time cost of our CD-SBN approach also increases, but, nevertheless, remains low (i.e., 0.028 $\sim$0.773 $sec$).

\nop{
\underline{\it Effect of \# of Keywords Per Item Vertex $|v_i.K|$:}
Figure~\ref{subfig:effect_number_of_keywords_per_vertex} records the performance of the CD-SBN continuous query processing approach on graphs with different numbers of keywords per item vertex, $|v_i.K|$, while other parameters are set by default. The more keywords the item vertices have, the easier it is for the keyword relevance of CD-SBN communities to be satisfied. Therefore, the candidate community is larger when processing insertion edges, resulting in longer computation time. 
Despite this, the \textit{wall clock time} of our approach remains low (i.e., 0.014 $\sim$ 0.821 $sec$).
}

\underline{\it Effect of Edge Weight Range $\left[min\_w, max\_w\right]$:}
Figure~\ref{subfig:effect_edge_weight_range} examines the CD-SBN performance with different edge weight ranges, $\left[min\_w, max\_w\right]$ (i.e., $[1,2]$, $[1,3]$, and $[1,4]$), where default values are used for other parameters.
When the range of edge weights expands, more/larger candidate communities will be retrieved and refined. Thus, the query cost increases as the edge weight interval $\left[min\_w, max\_w\right]$ becomes wider. Nonetheless, the wall clock time remains low (i.e., 0.0543$\sim$2.761 $sec$).

\nop{
\underline{\it Effect of Edge Weight Range $\left[min\_w, max\_w\right]$:}
Figure~\ref{subfig:effect_edge_weight_range} illustrates the performance of our continuous CD-SBN query processing algorithm on different graphs whose edge weights are generated in different ranges following \textit{Gaussian} distribution, and other parameters are set by default values. We tested the edge weights in $\left[1,2\right]$, $\left[1,3\right]$ and $\left[1,4\right]$. As the range of edge weights expands (i.e., the maximum value increases), the scale of communities grows, resulting in higher time costs. However, the \textit{wall clock time} remains low (i.e., 0.0543 $\sim$ 2.761 $sec$ ).
}

\underline{\it Effect of Lower Layer Size $|L(G)|$:}
Figure~\ref{subfig:effect_lower_layer_size} tests the scalability of our proposed CD-SBN approach by varying the lower layer size, $|L(G)|$, from $10K$ to $100K$, where default values are set for other parameters.
From the figure, we can see that the query cost decreases as $|L(G)|$ increases. This is because, with a constant average degree of each user vertex, a larger $|L(G)|$ leads to a more sparse bipartite network, which incurs fewer candidate communities and lower query costs. 
The time cost remains low (i.e., 0.057$\sim$ 0.230 $sec$).

\nop{
\underline{\it Effect of Lower Layer Size $|L(G)|$:}
Figure~\ref{subfig:effect_lower_layer_size} reveals the scalability of our proposed CD-SBN approach when the lower layer size varies from $10K$ to $100K$, and other parameters are set with default values. Since we keep the average degree of each user vertex constant across bipartite networks with different lower-layer sizes, the bipartite network becomes sparser as the lower-layer size increases. Thus, the \textit{wall clock time} decreases (i.e., 0.057 $\sim$ 0.230 $sec$) with the increasing lower layer size and remains low.
}


\begin{figure}[t!]
    \centering
        \includegraphics[height=3.0cm]{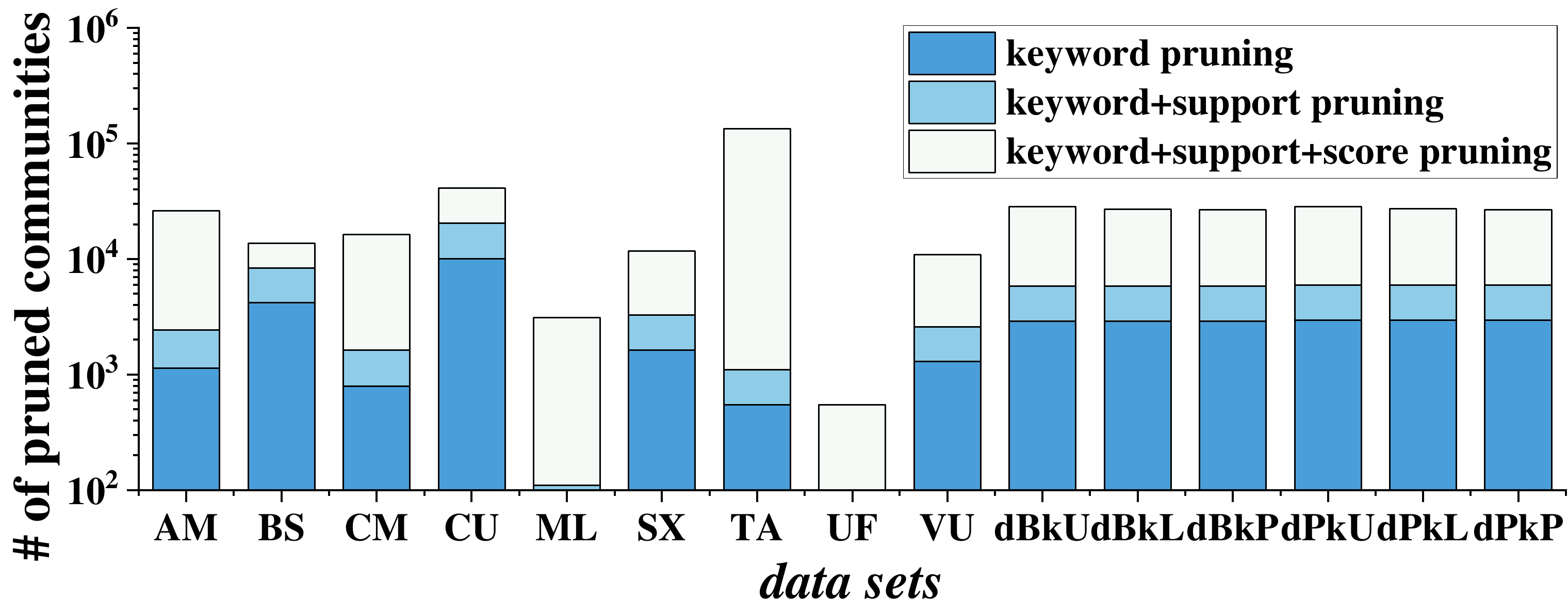}
    \vspace{-2ex}
    \caption{\small The ablation study of the pruning power for the snapshot CD-SBN problem.}
    \Description{\small Ablation study of the pruning power for the snapshot CD-SBN problem.}
    \label{fig:ablation_study}
\end{figure}

\underline{\it Effect of Upper Layer Size $|U(G)|$:}
Figure~\ref{subfig:effect_upper_layer_size} evaluates the scalability of our CD-SBN approach with different upper layer sizes, $|U(G)|$, from $10K$ to $100K$, where other parameters are set by default.
In the figure, as $|U(G)|$ increases, the CD-SBN time cost increases due to more retrieved candidate communities. Nonetheless, the time cost remains low (i.e., 0.001 $\sim$ 1.815 $sec$), which confirms the good scalability of our CD-SBN approach for large $|U(G)|$.

\nop{
\underline{\it Effect of Upper Layer Size $|U(G)|$:}
Figure~\ref{subfig:effect_upper_layer_size} tests the scalability of our CD-SBN method with varying upper layer sizes, $|U(G)|$, from $10K$ to $100K$, where other parameters are assigned with default values. In this figure, the \textit{wall clock time} increases smoothly as the size of the upper layer increases (i.e., 0.001 $\sim$ 1.815 $sec$), which confirms the scalability of our CD-SBN method for large upper layer sizes.
}

\noindent {\bf Ablation Study on the Snapshot CD-SBN Pruning Strategies (RQ2):}
To test the effectiveness of our proposed pruning methods, we conduct an ablation study of our snapshot CD-SBN pruning strategies on real-world/synthetic graphs, with all parameters set to default values. We execute the snapshot CD-SBN processing algorithm by adding one pruning strategy at a time, which forms 3 experimental groups: (1) \textit{support pruning} only, (2) \textit{support + score pruning}, and (3) \textit{support + score + keyword pruning}.
Figure~\ref{fig:ablation_study} reports the number of pruned candidate communities by using different pruning method groups.
In this figure, we observe that the number of pruned communities increases by approximately an order of magnitude as we apply more pruning strategies, indicating the effectiveness of our proposed strategies.

\nop{
\noindent {\bf Pruning Strategies Ablation Study (RQ2):}
To test the power of our proposed pruning strategies, we conduct an ablation study on real-world/synthetic graphs, with all parameters set to the default values shown in Table~\ref{tab:parameters}. We executed the snapshot query processing algorithm by adding one pruning strategy at a time, which forms 3 experiment groups: (1) \textit{support pruning} only, (2) \textit{support + score pruning}, and (3) \textit{support + score + keyword pruning}. We recorded the \textit{\# of pruned candidate communities} in the experiments and drew the results as Figure~\ref{fig:ablation_study}.
As shown in the figure, applying additional pruning strategies to answer snapshot queries increases the number of pruned communities by about an order of magnitude. In particular, the \textit{keyword pruning} method pruned the most potential communities among the three tested pruning methods.
}



\noindent {\bf CD-SBN Community Case Study (RQ3):}
To evaluate the utility of our CD-SBN results, we carry out a case study to compare our proposed $(k,r,\sigma)$-bitruss pattern with the biclique \cite{lin2018TheParameterized} (which is a variant of bipartite community structure) over \textit{BibSonomy} (BS) graph data. The entire graph shown in Figure~\ref{fig:case_study} is a $(4,2,3)$-bitruss containing 3 users (in blue) and 9 items (in blue or red), whereas the 3 users, 4 items (in blue) form a $3 \times 4$ biclique. We can see that the $3 \times 4$ biclique shares the same 3 users as the $(4,2,3)$-bitruss, but the $3 \times 4$ biclique omits some items that are connected to users with large weights (e.g., 6). In contrast, our $(k,r,\sigma)$-bitruss captures these items, and can recommend them to users who are not connected to them (e.g., recommending items $v\_449272$, $v\_449273$, $v\_449277$, $v\_449275$, and $449276$ to user $u\_4954$). This confirms the utility of our proposed CD-SBN community semantics, which can recommend items to users who are potentially more interested.

\section{Related Work}
\label{sec:related_work}

\noindent{\bf Community Search/Detection Over Unipartite Graphs:} As fundamental operations in social network analysis, the \textit{community search} (CS) and \textit{community detection} (CD) have been extensively studied for the past few decades. 
In unipartite graphs, different community semantics have been proposed, including minimum degree~\cite{cui2014LocalSearchCommunitiesa}, $k$-core~\cite{sozio2010CommunitysearchProblemHowa}, $k$-clique~\cite{cui2013OnlineSearchOverlapping}, and $(k,d)$-truss~\cite{huang2017attribute,al2020topic}.
\nop{
Previous works on community detection usually retrieved all communities based on link information \cite{newman2004finding,fortunato2010community} or clustering techniques \cite{xu2012model,conte2018d2k,veldt2018correlation}.
}
In contrast, our work is conducted on bipartite graphs with two distinct types of nodes, and our proposed $(k,r,\sigma)$-bitruss considers the high connectivity of community structure, edge weights, and query keywords, which is more challenging.

\noindent{\bf Dynamic Community Search/Detection Over Unipartite Graphs:} 
Due to the importance of hidden trends in streaming data, CS/CD over large-scale dynamic graphs has been widely studied.
\nop{Some recent studies aim to find communities over dynamic social networks, e.g., \cite{xu2022efficient} searches for a \textit{Triangle-connected $k$-Truss Community} where edges are contained in adjacent triangles and \cite{tang2024ReliabilityDrivenLocalCommunity, tang2022ReliableCommunitySearch} work on the $k$-core with edge weights higher than $\theta$ (i.e., $(\theta, k)$-core).}
Some recent studies aim to identify communities in dynamic social networks, e.g., the \textit{Triangle-connected $k$-Truss Community}~\cite{xu2022efficient} and the $(\theta, k)$-core~\cite{tang2024ReliabilityDrivenLocalCommunity, tang2022ReliableCommunitySearch}.
However, our CD-SBN problem aims to detect communities over streaming bipartite graphs, which is a more complicated task.

\begin{figure}[t]
    \centering
    \includegraphics[height=4.5cm]{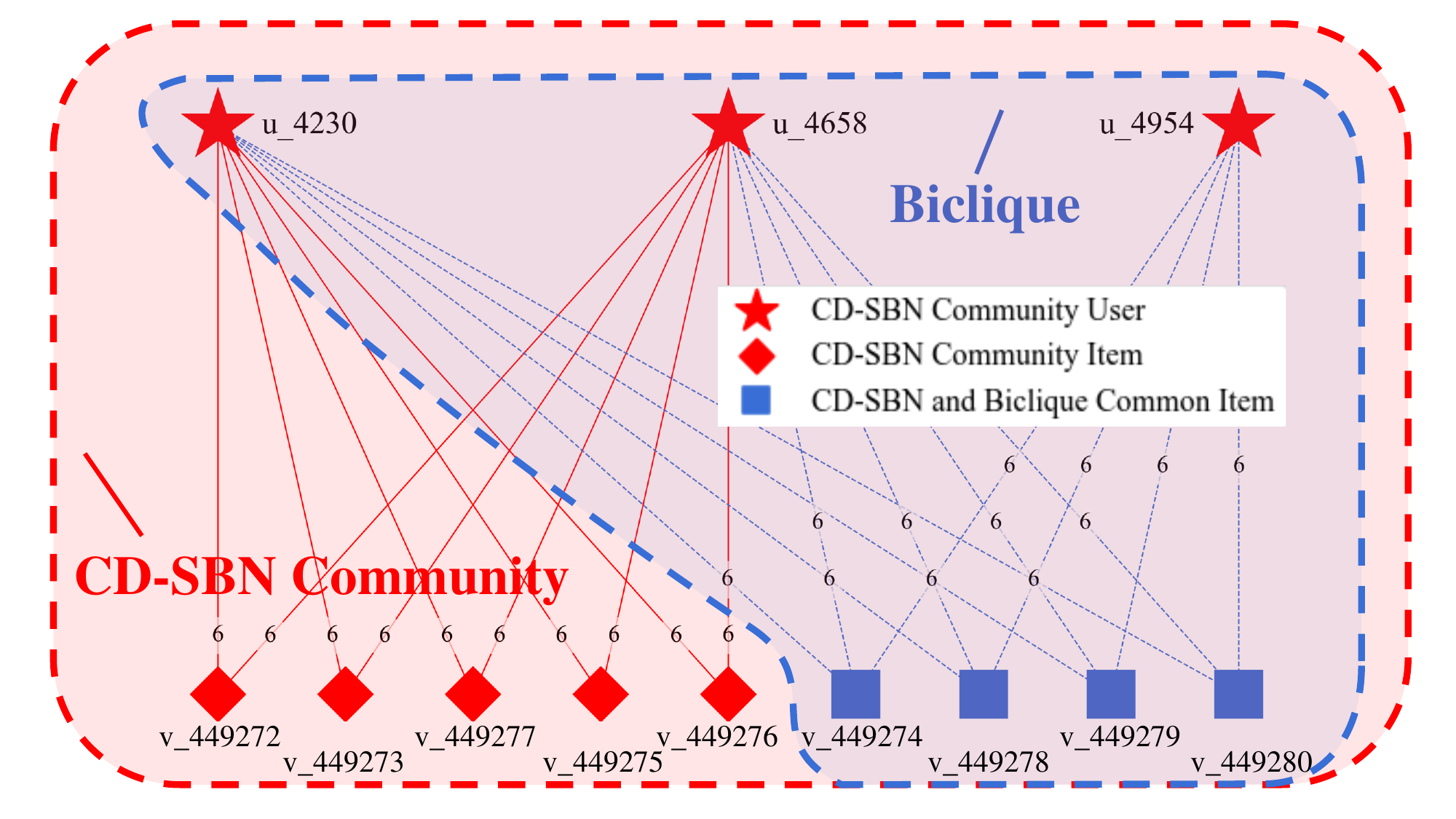}
    \vspace{-1ex}
    \caption{\small A case study of $(k,r,\sigma)$-bitruss vs. biclique \cite{lin2018TheParameterized} (query keyword set $Q$ = \{``Bookmarks''\}).}
    \Description{\small Case study of $(k,r,\sigma)$-bitruss vs. biclique \cite{lin2018TheParameterized} (query keyword set $Q$ = \{``Bookmarks''\}).}
    \label{fig:case_study}
\end{figure}

\nop{
\noindent{\bf Dynamic Community Search/Detection Over Unipartite Graphs:} 
In recent years, CS/CD over large-scale dynamic graphs has become an increasingly important research problem due to the presence of hidden trends and patterns in streaming data.
Some existing studies on dynamic CS/CD aim to find potential communities over dynamic social networks with different specified community structures to gain valuable insights, e.g., over unipartite dynamic graphs, \cite{xu2022efficient} searches for and maintains a \textit{Triangle-connected $k$-Truss Community} where edges should be contained in adjacent triangles and \cite{tang2024ReliabilityDrivenLocalCommunity, tang2022ReliableCommunitySearch} work on the $k$-core with edge weights higher than $\theta$ (i.e., $(\theta, k)$-core). Unlike these scenarios, our proposed CD-SBN problem focuses on detecting communities over streaming bipartite graphs, which requires more complex calculations and greater computational overhead.
}

\noindent{\bf Community Search/Detection Over Bipartite Graphs:}
Prior works on CD over \textit{static} bipartite graphs include model-based ~\cite{qing2023CommunityDetectionWeighted, rohe2016CoclusteringDirectedGraphs} and modularity-based methods~\cite{zhou2022EndtoendModularitybasedCommunity, bouguessa2020BiNeTClusBipartiteNetwork}, which require recalculations upon graph updates and are computationally expensive.
Moreover, several dense-subgraph-based methods did not consider edge weights, including the \textit{size-bounded $(\alpha, \beta)$-core}~\cite{zhang2024SizeboundedCommunitySearch, wang2022EfficientPersonalizedMaximum}, \textit{personalized $k$-wing}~\cite{abidi2023SearchingPersonalizedkWing}, and \textit{Pareto-optimal $(\alpha, \beta)$-community}~\cite{zhang2021ParetooptimalCommunitySearch}. Other works focused on weighted bipartite graphs and considered \textit{significant $(\alpha, \beta)$-community} (SC)~\cite{wang2021EfficientEffectiveCommunity,wang2023DiscoveringSignificantCommunities} and  \textit{$(\alpha, \beta)$-attributed weighted community} ($(\alpha, \beta)$-AWC)~\cite{li2022AlphaBetaAWCSAlpha}.
However, the SC and the $(\alpha, \beta)$-AWC consider different community semantics from ours, so they cannot be directly utilized to solve our CD-SBN problem.
Furthermore, Zhou et al.\cite{zhou2022EndtoendModularitybasedCommunity} employed a learning-based clustering approach to identify communities on bipartite graphs; however, the resulting communities are sometimes unstable in terms of structure and size.
The aforementioned works focused on \textit{static} bipartite graphs, but do not apply to our CD-SBN problem on streaming bipartite graphs.
Moreover, some prior works \cite{wang2023DiscoveringSignificantCommunities, abidi2023SearchingPersonalizedkWing} constructed a dynamically updated index to solve CS/CD on streaming bipartite graphs, requiring re-traversal of the entire index. 
In contrast, our CD-SBN approach incrementally maintains the result set upon edge updates with lower costs.



\nop{
\noindent{\bf Community Search/Detection Over Bipartite Graphs:}
Several prior works have studied the CS/CD problem on bipartite graphs and proposed effective, efficient algorithms for retrieving bipartite communities. There are different community semantics on bipartite graphs, including $(\alpha,\beta)$-core community~\cite{wang2021EfficientEffectiveCommunity}, size-bounded $(\alpha,\beta)$-community~\cite{zhang2024SizeboundedCommunitySearch},personalized maximum biclique community~\cite{wang2022EfficientPersonalizedMaximum}, and $k$-wing community~\cite{abidi2023SearchingPersonalizedkWing}. 
However, these works focus on identifying potential communities in static bipartite graphs. Therefore, we cannot directly apply previous techniques to solve our CD-SBN problem with dynamically changing bipartite graphs. Our CD-SBN problem focuses on a different community semantic, namely the $(k,r,\sigma)$-bitruss, which includes constraints on structural cohesiveness, keyword matching, and user relationship scores. 
On the other hand, few prior works \cite{wang2023DiscoveringSignificantCommunities, abidi2023SearchingPersonalizedkWing} address the detection or search of communities in dynamic streaming bipartite graphs. Both construct an index that can be dynamically updated, but each query requires re-searching the entire index. Our method of answering the CD-SBN query gives an algorithm to maintain the result set upon edge updates.
{\color{blue} Furthermore, to our knowledge, only Zhou et al.\cite{zhou2022EndtoendModularitybasedCommunity} have attempted to use a learning-based clustering approach to obtain communities on bipartite graphs; however, the resulting communities are unstable in both structure and size. }
}

\section{Conclusion}
\label{sec:conclusion}
In this paper, we propose a novel CD-SBN problem to detect communities with user-specified query keywords and high structural cohesiveness in both snapshot and continuous scenarios over streaming bipartite graphs. To efficiently tackle the CD-SBN problem, we propose effective pruning strategies to eliminate false alarms in candidate communities and design a hierarchical synopsis to facilitate the processing of CD-SBN. We also develop efficient algorithms to enable snapshot/continuous CD-SBN answer retrieval by quickly accessing the synopsis and applying our proposed pruning methods. Comprehensive experiments confirmed the effectiveness and efficiency of our proposed CD-SBN approaches over real/synthetic streaming bipartite networks upon graph updates.

\clearpage
\bibliographystyle{ACM-Reference-Format}
\balance
\bibliography{cites}



\end{document}